\newtheorem{thm}{Theorem}[section]
\newtheorem{defn}[thm]{Definition}
\newtheorem{prop}[thm]{Proposition}
\numberwithin{equation}{section}
\newtheorem{rem}[thm]{Remark}
\newtheorem{ex}[thm]{Example}
\newcommand{\F}{\mathbb F}
\def\wt{\mathop{{\rm wt}}}
\def\wth{\textnormal{{\rm wt}}}
\def\tr{\mathop{{\rm tr}}}
\def\H{\mathop{{\rm H}}}
\def\a{\mathop{{\rm a}}}
\def\dist{\mathop{{\rm dist}}}
\def\min{\mathop{{\rm min}}}
\def\F{{\mathbb F}}
\def\u{{\mathbf{u}}}
\def\v{{\mathbf{v}}}
\def\c{{\mathbf{c}}}
\def\d{{\mathbf{d}}}
\def\t{{\mathbf{t}}}
\def\a{{\mathbf{a}}}
\def\b{{\mathbf{b}}}
\def\0{{\mathbf{0}}}
\def\1{{\mathbf{1}}}
\numberwithin{equation}{section}
\begin{document}

\title{Additive Asymmetric Quantum Codes}

\author{
        Martianus~Frederic~Ezerman,~\IEEEmembership{Student Member,~IEEE}, \linebreak[3]%
        San~Ling,~\IEEEmembership{}\linebreak[3]%
        and~Patrick~Sol{\'e},~\IEEEmembership{Member,~IEEE}
\thanks{M. F. Ezerman and S. Ling are with the Division of Mathematical Sciences,
  School of Physical and Mathematical Sciences, Nanyang Technological
  University, SPMS-04-01, 21 Nanyang Link, Singapore 637371, 
  Republic of Singapore 
  (emails:\{mart0005,lingsan\}@ntu.edu.sg).}%
\thanks{P. Sol{\'e} is with the Centre National de la Recherche
  Scientifique (CNRS/LTCI), Telecom-ParisTech, Dept Comelec, 46 rue
  Barrault, 75 634 Paris, France (email: sole@enst.fr).}%
\thanks{The work of the second and third authors was partially supported by 
Singapore National Research Foundation Competitive Research 
Program grant NRF-CRP2-2007-03 and by the Merlion Programme 01.01.06.}
}

\markboth{submitted to IEEE Transactions on Information Theory, revised on March 15, 2011}%
{Ezerman \MakeLowercase{\textit{et al.}}: Additive Asymmetric Quantum Codes}
\maketitle

\begin{abstract}
We present a general construction of asymmetric quantum codes based 
on additive codes under the trace Hermitian inner product. Various 
families of additive codes over $\F_{4}$ are used in the construction 
of many asymmetric quantum codes over $\F_{4}$.
\end{abstract}

\begin{keywords} Additive codes, BCH codes, circulant codes, 
4-circulant codes, extremal codes, MacDonald codes, nested codes, 
quantum codes, self-orthogonal codes, quantum Singleton bound.
\end{keywords}

\section{Introduction}\label{sec:Intro}
Previously, most of the works on quantum error-correcting codes 
were done with the assumption that the channel is symmetric. 
That is, the various error types were taken to be equiprobable. 
To be brief, the term \textit{quantum codes} or QECC is henceforth 
used to refer to quantum error-correcting codes.

Recently, it has been established that, 
in many quantum mechanical systems, the phase-flip errors 
happen more frequently than the bit-flip errors or the combined 
bit-phase flip errors. For more details, ~\cite{SRK09} can 
be consulted.

There is a need to design quantum codes that take advantage of this 
asymmetry in quantum channels. We call such codes 
\textit{asymmetric quantum codes}. We require the codes to 
correct many phase-flip errors but not necessarily the same number of 
bit-flip errors.

In this paper we extend the construction of asymmetric quantum 
codes in~\cite{WFLX09} to include codes derived from classical 
additive codes under the trace Hermitian inner product.

This work is organized as follows. 
In Section~\ref{sec:Prelims}, we state some basic definitions and 
properties of linear and additive codes. Section ~\ref{sec:QuantumCodes} 
provides an introduction to quantum error-correcting codes in general, 
differentiating the symmetric and the asymmetric cases. 
In Section~\ref{sec:AsymQECC}, a construction of asymmetric QECC based 
on additive codes is presented.

The rest of the paper focuses on additive codes over $\F_{4}$. 
Section~\ref{sec:AdditiveF4} recalls briefly important known facts 
regarding these codes. A construction of asymmetric 
QECC from extremal or optimal self-dual additive codes is given 
in Section~\ref{sec:ExtremalSD}. A construction from 
Hermitian self-orthogonal $\F_{4}$-linear codes is the topic of 
Section~\ref{sec:HSelfOrtho}. Sections~\ref{sec:Cyclic} 
and~\ref{sec:BCHCodes} use nested $\F_{4}$-linear cyclic codes for lengths 
$n \leq 25$ and nested BCH codes for lengths $27 \leq n \leq 51$, 
respectively, in the construction. New or better asymmetric quantum 
codes constructed from nested additive codes over $\F_{4}$ are presented 
in Section~\ref{sec:nestedadditive}, exhibiting the gain of extending the 
construction to include additive codes. Section~\ref{sec:Conclusion} 
provides conclusions and some open problems.

\section{Preliminaries}\label{sec:Prelims}
Let $p$ be a prime and $q = p^f$ for some positive integer $f$. 
An $[n,k,d]_q$-linear code $C$ of length $n$, dimension $k$, 
and minimum distance $d$ is a subspace of dimension $k$ of the 
vector space $\F_q^n$ over the finite field $\F_q=GF(q)$ with 
$q$ elements. For a general, not necessarily linear, code $C$, 
the notation $(n,M=|C|,d)_q$ is commonly used.

The \textit{Hamming weight} of a vector or a codeword $\v$ in a code 
$C$, denoted by $\wt_H(\v)$, is the number of its nonzero entries. 
Given two elements $\u,\v \in C$, the number of positions where 
their respective entries disagree, written as $\dist_H(\u,\v)$, 
is called the \textit{Hamming distance} of $\u$ and $\v$. For any 
code $C$, the \textit{minimum distance} $d = d(C)$ is given by 
$d = d(C) = \min\left\lbrace \dist_H(\u,\v): 
\u,\v \in C,\u \neq \v\right\rbrace$. If $C$ is linear, then its 
closure property implies that $d(C)$ is given by the minimum Hamming 
weight of nonzero vectors in $C$.

We follow~\cite{NRS06} in defining the following three families 
of codes according to their duality types.

\begin{defn}\label{def1.1}
Let $q=r^2=p^f$ be an even power of an arbitrary prime $p$ with 
$\overline{x}=x^{r}$ for $x \in \F_{q}$. Let $n$ be a positive 
integer and $\u = (u_1,\ldots,u_n), \v = (v_1,\ldots,v_n) \in \F_{q}^n$.
\begin{enumerate}
\item $\mathbf{q^{\H}}$ is the family of $\F_{q}$-linear codes of 
length $n$ with the \textit{Hermitian inner product} 
\begin{equation}\label{eq:1.1}
 \left\langle \u,\v\right\rangle _{\H} := 
\sum_{i=1}^{n} u_i \cdot v_i^{\sqrt{q}} \text{.}
\end{equation}

\item $\mathbf{q^{\H+}}$ \textbf{(even)} is the family of trace 
Hermitian codes over $\F_{q}$ of length $n$ which are 
$\F_{r}$-linear, where $r^2=q$ is even. The 
duality is defined according to the \textit{trace Hermitian 
inner product} 
\begin{equation}\label{eq:1.2}
 \left\langle \u,\v\right\rangle _{\tr} := 
\sum_{i=1}^{n} (u_i\cdot v_i^{\sqrt{q}} + u_i^{\sqrt{q}} 
\cdot v_i) \text{.}
\end{equation}

\item $\mathbf{q^{\H+}}$ \textbf{(odd)} is the family of trace 
Hermitian codes over $\F_{q}$ of length $n$ which are 
$\F_{r}$-linear, where $r^2=q$ is odd. The 
duality is defined according to the following inner product, 
which we will still call \textit{trace Hermitian inner product},
\begin{equation}\label{eq:1.2a}
 \left\langle \u,\v\right\rangle _{\tr} := 
\alpha \cdot \sum_{i=1}^{n} (u_i\cdot v_i^{\sqrt{q}} - u_i^{\sqrt{q}} 
\cdot v_i) \text{,}
\end{equation}
where $\alpha \in \F_{q} \setminus \left\lbrace 0 \right\rbrace$ with 
$\alpha^{r}= -\alpha$.
\end{enumerate}
\end{defn}

\begin{defn}\label{def:1.1a}
A code $C$ of length $n$ is said to be a 
\textit{(classical) additive code} if $C$ belongs to either the 
family $q^{\H+}$ (even) or to the family $q^{\H+}$ (odd).
\end{defn}

Let $C$ be a code. Under a chosen inner product $*$, the 
\textit{dual code} $C^{\perp_{*}}$ of $C$ is given by
\begin{equation*}
 C^{\perp_{*}} := \left\lbrace \u \in \F_q^n : \left\langle 
\u,\v\right\rangle _{*} = 0 \text{ for all } \v \in C 
\right\rbrace \text{.}
\end{equation*}

Accordingly, for a code $C$ in the family $(q^{\H})$,
\begin{equation*}
 C^{\perp_{\H}} := \left\lbrace \u \in \F_q^n : \left\langle 
\u,\v\right\rangle _{\H} = 0 \text{ for all } \v \in C 
\right\rbrace \text{,}
\end{equation*}
and, for a code $C$ in the family $(q^{\H+})$ (even) or $(q^{\H+})$ (odd), 
\begin{equation*}
 C^{\perp_{\tr}} := \left\lbrace \u \in \F_q^n : \left\langle 
\u,\v\right\rangle _{\tr} = 0 \text{ for all } \v \in C 
\right\rbrace \text{.}
\end{equation*}

A code is said to be \textit{self-orthogonal} 
if it is contained in its dual and is said to be \textit{self-dual} 
if its dual is itself. We say that a family of codes is 
\textit{closed} if $(C^{\perp_{*}})^{\perp_{*}} = C$ for each $C$ 
in that family. It has been established~\cite[Ch. 3]{NRS06} that the 
three families of codes in Definition~\ref{def1.1} are closed.

The weight distribution of a code and that of 
its dual are important in the studies of their properties.

\begin{defn}\label{def1.2}
The \textit{weight enumerator} $W_C(X,Y)$ of an $(n,M=|C|,d)_q$-code 
$C$ is the polynomial
\begin{equation}\label{WE}
W_C(X,Y)=\sum_{i=0}^n A_{i} X^{n-i}Y^{i} \text{,}
\end{equation}
where $A_{i}$ is the number of codewords of weight $i$ in the code $C$.
\end{defn}

The weight enumerator of the Hermitian dual code $C^{\perp_{\H}}$ of an 
$[n,k,d]_q$-code $C$ is connected to the weight enumerator 
of the code $C$ via the MacWilliams Equation
\begin{equation}\label{eq:MacW}
 W_{C^{\perp_{\H}}} (X,Y)= \frac{1}{|C|} W_C(X+(q-1)Y,X-Y) \text{.}
\end{equation}

In the case of nonlinear codes, we can define a similar notion 
called the \textit{distance distribution}. The MacWilliams Equation can 
be generalized to the nonlinear cases as well (see~\cite[Ch. 5]{MS77}). 
From~\cite[Sect. 2.3]{NRS06} we know that the families $q^{\H+}$ (even) and 
$q^{\H+}$ (odd) have the same MacWilliams Equation as the family $q^{\H}$. 
Thus, Equation (\ref{eq:MacW}) applies to all three families.

Classical codes are connected to many other combinatorial 
structures. One such structure is the orthogonal array.

\begin{defn}\label{def1.3}
Let $S$ be a set of $q$ symbols or levels. An orthogonal array 
$A$ with $M$ runs, $n$ factors, $q$ levels and strength $t$ 
with index $\lambda$, denoted by $OA(M,n,q,t)$, is an 
$M \times n$ array $A$ with entries from $S$ such that every 
$M \times t$ subarray of $A$ contains each $t$-tuple of $S^t$ 
exactly $\lambda = \frac{M}{q^t}$ times as a row.
\end{defn}

The parameter $\lambda$ is usually not written explicitly 
in the notation since its value depends on $M,q$ and $t$. 
The rows of an orthogonal array are distinct since the 
purpose of its construction is to minimize the number of 
runs in the experiment while keeping some required 
conditions satisfied.

There is a natural correspondence between codes and orthogonal arrays. 
The codewords in a code $C$ can be seen as the rows of an orthogonal 
array $A$ and vice versa. The following proposition due to Delsarte 
(see~\cite[Th. 4.5]{Del73}) will be useful in the sequel. 
Note that the code $C$ in the proposition is a general code. 
No linearity is required. The duality here is defined over any 
inner product. For more on how the dual distance is 
defined for nonlinear codes, we refer to~\cite[Sec. 4.4]{HSS99}.

\begin{prop}\cite[Th. 4.9]{HSS99}\label{OA}
If $C$ is an $(n,M=|C|,d)_q$ code with dual distance $d^{\perp}$, 
then the corresponding orthogonal array is an $OA(M,n,q,d^{\perp}-1)$. 
Conversely, the code corresponding to an $OA(M,n,q,t)$ is an 
$(n,M,d)_q$ code with dual distance $d^{\perp} \geq t+1$. 
If the orthogonal array has strength $t$ but not $t+1$, 
then $d^{\perp}$ is precisely $t+1$.
\end{prop}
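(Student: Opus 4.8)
The plan is to give the classical character-sum (Delsarte) argument, which settles both directions and the sharpness statement at once. Identify the $q$-symbol alphabet with $\F_{q}$, as in \cite[Sec.~4.4]{HSS99}, fix a nontrivial additive character $\psi$ of $\F_{q}$, and for $\u\in\F_{q}^{n}$ let $\chi_{\u}$ denote the character $\c\mapsto\psi(\langle\u,\c\rangle)$ of $\F_{q}^{n}$. For the code $C$ put
\begin{equation*}
S_{\u} := \sum_{\c\in C}\chi_{\u}(\c).
\end{equation*}
First I would record the Krawtchouk identity $\sum_{\wt_{H}(\u)=k}\chi_{\u}(\c-\c')=K_{k}(\dist_{H}(\c,\c'))$, where $K_{k}$ is the $q$-ary Krawtchouk polynomial; summing over $\c,\c'\in C$ gives
\begin{equation*}
\sum_{\wt_{H}(\u)=k}|S_{\u}|^{2} \;=\; \sum_{\c,\c'\in C}K_{k}\!\left(\dist_{H}(\c,\c')\right) \;=\; M\sum_{i=0}^{n}B_{i}\,K_{k}(i),
\end{equation*}
where $B_{i}$ is the distance distribution of $C$. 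Hence, setting $B_{k}^{\perp}:=\frac{1}{M}\sum_{i=0}^{n}B_{i}K_{k}(i)$ for the $k$-th term of the dual distance distribution of $C$ (in the sense of \cite[Sec.~4.4]{HSS99}), the displayed identity gives $M^{2}B_{k}^{\perp}=\sum_{\wt_{H}(\u)=k}|S_{\u}|^{2}\ge 0$, and $B_{k}^{\perp}=0$ if and only if $S_{\u}=0$ for every $\u$ with $\wt_{H}(\u)=k$. In particular $d^{\perp}=\min\{k\ge 1 : B_{k}^{\perp}\neq 0\}$.

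Next I would prove the combinatorial lemma relating these vanishings to the strength of the array. Fix a set $T$ of $w$ columns, and for $\a\in\F_{q}^{T}$ let $N_{T}(\a)$ be the number of codewords of $C$ whose restriction to $T$ equals $\a$; this is the multiplicity of the row $\a$ in the $M\times w$ subarray on the columns of $T$. Fourier inversion on $\F_{q}^{T}$ gives $N_{T}(\a)=q^{-w}\sum_{\v\in\F_{q}^{T}}\overline{\chi_{\v}(\a)}\,S_{\widetilde{\v}}$, where $\widetilde{\v}\in\F_{q}^{n}$ is $\v$ extended by zeros off $T$, and the term $\v=\0$ contributes $M/q^{w}$. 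Thus every $w$-tuple occurs in that subarray with the constant multiplicity $\lambda=M/q^{w}$ if and only if $S_{\widetilde{\v}}=0$ for every nonzero $\v\in\F_{q}^{T}$. Letting $T$ range over all subsets of size at most $t$, the orthogonal array attached to $C$ has strength at least $t$ if and only if $S_{\u}=0$ for every $\u$ with $1\le\wt_{H}(\u)\le t$.

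Combining the two steps: the array attached to $C$ has strength at least $t$ if and only if $B_{1}^{\perp}=\dots=B_{t}^{\perp}=0$, i.e.\ if and only if $d^{\perp}\ge t+1$. Taking $t=d^{\perp}-1$ yields the first assertion; applying the equivalence to an $OA(M,n,q,t)$, whose rows (distinct, so $|C|=M$) form the code $C$, yields the second; and if the array has strength $t$ but not $t+1$, then $B_{1}^{\perp}=\dots=B_{t}^{\perp}=0$ while $B_{t+1}^{\perp}\neq 0$, so $d^{\perp}=t+1$. The main point requiring care is the genuinely nonlinear case: one must fix a group structure on the alphabet and check that the quantities $\frac{1}{M}\sum_{i}B_{i}K_{k}(i)$ coming out of the character sums really are the dual distance distribution of \cite[Sec.~4.4]{HSS99}. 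In the linear case the argument collapses, since $S_{\u}\in\{0,M\}$ with $S_{\u}=M$ exactly when $\u\in C^{\perp}$, so $d^{\perp}$ is literally the minimum nonzero weight of $C^{\perp}$ and the lemma delivers the result directly.
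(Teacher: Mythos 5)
Your proof is correct, but there is nothing in the paper to compare it against: Proposition~\ref{OA} is imported by citation (Delsarte's theorem, \cite[Th.~4.5]{Del73}, \cite[Th.~4.9]{HSS99}) and the authors give no proof of it. What you have written is a sound, self-contained rendition of the standard character-sum argument. The two pillars are both right: the identity $\sum_{\wt_H(\u)=k}|S_\u|^2=M\sum_i B_iK_k(i)=M^2B_k^{\perp}$, which shows $B_k^{\perp}=0$ exactly when $S_\u$ vanishes on all weight-$k$ vectors, and the Fourier-inversion computation of $N_T(\a)$, which shows the $M\times|T|$ subarray is balanced exactly when $S_{\widetilde{\v}}$ vanishes for all nonzero $\v$ supported on $T$. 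Combining them gives ``strength $\ge t$ iff $d^{\perp}\ge t+1$,'' from which all three assertions of the proposition follow at once, including the sharpness claim. Your closing caveat is the right one to flag: for nonlinear codes the quantity $d^{\perp}$ in the statement must be read as the formal dual distance defined from the MacWilliams transform of the distance distribution (this is exactly the convention of \cite[Sec.~4.4]{HSS99} that the paper points to just before the proposition), and with that reading your identification $M^2B_k^{\perp}=\sum_{\wt_H(\u)=k}|S_\u|^2$ is precisely what justifies calling it a dual \emph{distance} (nonnegativity of the $B_k^{\perp}$). One cosmetic remark: the argument only needs an abelian group structure on the $q$-symbol alphabet, so identifying it with $\F_q$ (rather than, say, $\Z_q$) is a convenience, not a restriction; this matters in the paper only insofar as the result is later applied to cosets of additive codes, where the $\F_q$ (indeed $\F_r$-linear) structure is already present.
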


\section{Quantum Codes}\label{sec:QuantumCodes}
We assume that the reader is familiar with the standard error model 
in quantum error-correction. The essentials can be found, for 
instance, in~\cite{AK01} and in~\cite{FLX06}. For convenience, 
some basic definitions and results are reproduced here.

Let $\mathbb{C}$ be the field of complex numbers and 
$\eta=e^{\frac{2\pi \sqrt{-1}}{p}}\in \mathbb{C}$. 
We fix an orthonormal basis of 
$\mathbb{C}^{q}$ $$\left\{|v\rangle:v\in \mathbb{F}_{q}\right\}$$ 
with respect to the Hermitian inner product. 
For a positive integer $n$, let 
$V_{n}=(\mathbb{C}^{q})^{\otimes n }$ 
be the $n$-fold tensor product of $\mathbb{C}^{q}$. 
Then $V_{n}$ has the following orthonormal basis
\begin{equation}\label{basis}
\left\{|\c\rangle =|c_{1}c_{2}\ldots c_{n}\rangle
: \c=(c_{1},\ldots,c_{n})  
\in \mathbb{F}_{q}^n\right\} \text{,}
\end{equation}
where $|c_{1}c_{2}\ldots c_{n}\rangle$ abbreviates 
$|c_{1}\rangle\otimes|c_{2}\rangle\otimes\cdots \otimes
|c_{n}\rangle$.

For two quantum states $|\u\rangle$ and $|\v\rangle$ in 
$V_{n}$ with $$|\u\rangle=\sum\limits_{\c\in 
\mathbb{F}_{q}^{n}}\alpha(\c)|\c\rangle,\quad |\v\rangle
=\sum\limits_{\c\in \mathbb{F}_{q}^{n}}\beta(\c)|\c\rangle
\quad (\alpha(\c),\beta(\c)\in \mathbb{C}),$$ 
the Hermitian inner product of $|\u\rangle$ and $|\v\rangle$ is
$$\langle \u|\v\rangle=\sum\limits_{\c\in \mathbb{F}_{q}^{n}}
\overline{\alpha(\c)}\beta(\c)\in \mathbb{C},$$
where  $\overline{\alpha(\c)}$ is the complex conjugate of 
$\alpha(\c)$. We say $|\u\rangle$ and $|\v\rangle$ are 
\textit{orthogonal} if $\langle \u|\v\rangle=0$.

A quantum error acting on $V_{n}$ is a unitary linear operator 
on $V_{n}$ and has the following form
$$e=X(\a)Z(\b)$$ with $\a=(a_{1},\ldots,a_{n}),\b=(b_{1},
\ldots,b_{n})\in \mathbb{F}_{q}^{n}$.

The action of $e$ on the basis (\ref{basis}) of $V_{n}$ is 
$$e|\c\rangle=X(a_{1})Z(b_{1})|c_{1}\rangle
\otimes \ldots \otimes X(a_{n})Z(b_{n})|c_{n}\rangle, $$ where
$$X(a_{i})|c_{i}\rangle=|a_{i}+c_{i}\rangle,
\quad Z(b_{i})|c_{i}\rangle=\eta^{T(b_{i}c_{i})}|c_{i}\rangle$$
with $T:\;\mathbb{F}_{q}\to\mathbb{F}_{p}$ being the trace mapping
$$T(\alpha)=\alpha+\alpha^{p}+\alpha^{p^{2}}+\ldots+\alpha^{p^{m-1}},$$ 
for $q=p^{m}$.
Therefore,
$$e|\c\rangle=\eta^{T(\b\cdot \c)}|\a+\c\rangle,$$
where $\b\cdot \c=\sum\limits_{i=1}^{n}b_{i}c_{i}\in \mathbb{F}_{q}$
is the usual inner product in $\mathbb{F}_{q}^n$.

For $e=X(\a)Z(\b)$ and
$e^{'}=X(\a^{'})Z(\b^{'})$ with $\a,\b$, and $\a^{'},\b^{'}\in
\mathbb{F}_{q}^{n}$,
$$e e^{'}=\eta^{T(\a\cdot \b^{'}-\a^{'}\cdot \b)}e^{'} e.$$
Hence, the set
$$E_{n}=\left\{\eta^{\lambda}X(\a)Z(\b) | 0\leq \lambda\leq p-1, 
\a,\b\in \mathbb{F}_{q}^{n}  \right\}$$
forms a (nonabelian) group, called the \textit{error group} on $V_{n}$.

\begin{defn}\label{def2.1}
For a quantum error $e=\eta^{\lambda}X(\a)Z(\b)\in E_{n}$, we 
define the {\it quantum weight} $w_{Q}(e)$, the {\it $X$-weight} 
$w_{X}(e)$ and the {\it $Z$-weight} $w_{Z}(e)$ of $e$ by
\[
\begin{array}{rcl}
w_{Q}(e)&=&|\{i: 1\leq i\leq n, (a_{i},b_{i})\neq(0,0)\}| \text{,}\\
w_{X}(e)&=&|\{i: 1\leq i\leq n, a_{i}\neq0\}| \text{,}\\
w_{Z}(e)&=&|\{i: 1\leq i\leq n, b_{i}\neq0\}| \text{.}
\end{array}
\]
\end{defn}

Thus, $w_{Q}(e)$ is the number of qudits where 
the action of $e$ is nontrivial by $X(a_{i})Z(b_{i})\not=I$ 
(identity) while $w_{X}(e)$ and $w_{Z}(e)$ are, respectively, 
the numbers of qudits where the $X$-action and 
the $Z$-action of $e$ are nontrivial. We are now ready 
to define the distinction between symmetric and 
asymmetric quantum codes.

\begin{defn}\label{def2.2}
A \textit{$q$-ary  quantum code} of length $n$ is a subspace $Q$ of
$V_{n}$ with dimension $K\geq1$. A quantum code $Q$ of dimension
$K\geq2$ is said to detect $d-1$ qudits of errors for
$d\geq1$ if, for every orthogonal pair $|\u\rangle$, $|\v\rangle$ in
$Q$ with $\langle \u|\v \rangle=0$ and every $e \in E_{n}$ with
$w_{Q}(e)\leq d-1$, $|\u\rangle$ and $e|\v\rangle$ are orthogonal.
In this case, we call $Q$  a \textit{symmetric}
quantum code with parameters $((n,K,d))_{q}$ or
$[[n,k,d]]_{q}$, where $k=\log_{q}K$. Such a quantum code is called
\textit{pure} if $\langle \u|e|\v \rangle=0$ for any $|\u\rangle$ and
$|\v\rangle$ in $Q$ and any $e \in E_{n}$ with $1\leq w_{Q}(e)\leq
d-1$. A quantum code $Q$ with $K=1$ is assumed to be pure.

Let $d_{x}$ and $d_{z}$ be positive integers. A
quantum code $Q$ in $V_{n}$ with dimension $K\geq2$ is called an 
\textit{asymmetric quantum code} with parameters $((n,K,d_{z}/d_{x}))_{q}$
or $[[n,k,d_{z}/d_{x}]]_{q}$, where $k=\log_{q}K$, if $Q$ detects $d_{x}-1$ 
qudits of $X$-errors and, at the same time, $d_{z}-1$ qudits of $Z$-errors. 
That is, if $\langle \u|\v \rangle=0$ for $|\u\rangle,|\v\rangle\in Q$, 
then $\langle \u|e|\v \rangle=0$ for any
$e \in E_{n}$ such that $w_{X}(e)\leq d_{x}-1$ and $w_{Z}(e)\leq
d_{z}-1$. Such an asymmetric quantum code $Q$ is called \textit{pure}
if $\langle \u|e|\v \rangle=0$ for any $|\u\rangle,|\v\rangle\in Q$ and
$e \in E_{n}$ such that $1 \leq w_{X}(e)\leq d_{x}-1$ and $1 \leq w_{Z}(e)\leq
d_{z}-1$. An asymmetric quantum code $Q$ with $K=1$ is assumed to be pure.
\end{defn}

\begin{rem}\label{rem2.3}
An asymmetric quantum code with parameters  $((n,K,d/d))_{q}$ 
is a symmetric quantum code with parameters $((n,K,d))_{q}$, 
but the converse is not true since, for $e\in E_{n}$ with 
$w_{X}(e)\leq d-1$ and $w_{Z}(e)\leq d-1$, the weight 
$w_{Q}(e)$ may be bigger than $d-1$.
\end{rem}

Given any two codes $C$ and $D$, let the notation 
$\wt(C \setminus D)$ denote $\min\left\lbrace \wt_{H}(\u (\neq \0)) : 
\u\in(C \setminus D)\right\rbrace$. The analogue of the well-known CSS 
construction (see~\cite{CRSS98}) for the asymmetric case is known.

\begin{prop}\cite[Lemma 3.1]{SRK09}\label{prop2.4}
Let $C_x,C_z$ be linear codes over $\F_q^n$ with parameters 
$[n,k_x]_q$, and $[n,k_z]_q$ respectively. Let 
$C_x^\perp \subseteq C_z$. Then there exists an 
$[[n,k_x +k_z-n,d_{z} /d_{x}]]_q$ asymmetric quantum code, 
where $d_x =\wt(C_x \setminus C_z^\perp)$ and 
$d_z =\wt(C_z \setminus C_x^\perp)$.
\end{prop}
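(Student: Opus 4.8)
The plan is to follow the classical CSS philosophy: build the quantum code $Q$ as a subspace of $V_n$ spanned by suitable superpositions of cosets, then verify the error-detection condition of Definition~\ref{def2.2} directly. Concretely, using the containment $C_x^\perp \subseteq C_z$, I would choose coset representatives $\{\v_1, \ldots, \v_K\}$ of $C_x^\perp$ in $C_z$, where $K = q^{k_z}/q^{n-k_x} = q^{k_x + k_z - n}$, and set
\[
|\psi_{\v}\rangle = \frac{1}{\sqrt{|C_x^\perp|}} \sum_{\u \in C_x^\perp} |\u + \v\rangle,
\]
defining $Q = \operatorname{span}\{|\psi_{\v_1}\rangle, \ldots, |\psi_{\v_K}\rangle\}$. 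These states are orthonormal because the cosets $\v_j + C_x^\perp$ are pairwise disjoint, so $\dim Q = K$ and $k = \log_q K = k_x + k_z - n$, giving the stated parameters. This is the construction used for the symmetric CSS code; the only novelty here is bookkeeping the $X$- and $Z$-weights separately.

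Next I would check the asymmetric detection condition. Take $e = \eta^\lambda X(\a)Z(\b) \in E_n$ with $w_X(e) = \wt_H(\a) \leq d_x - 1$ and $w_Z(e) = \wt_H(\b) \leq d_z - 1$, and two of the basis states $|\psi_{\v}\rangle, |\psi_{\v'}\rangle$. Using $e|\c\rangle = \eta^{T(\b\cdot\c)}|\a+\c\rangle$, one computes
\[
\langle \psi_{\v'} | e | \psi_{\v}\rangle = \frac{1}{|C_x^\perp|} \sum_{\u \in C_x^\perp} \eta^{T(\b\cdot(\u+\v))} [\,\u + \v + \a \in \v' + C_x^\perp\,].
\]
There are two cases. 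First, if $\a \notin C_z - C_x^\perp$... more precisely, the indicator is nonzero only if $\v' - \v - \a \in C_x^\perp$, i.e. the coset $\v' - \v$ equals the coset of $\a$ in $C_z/C_x^\perp$; since $\wt_H(\a) \leq d_x - 1 < d_x = \wt(C_x \setminus C_z^\perp)$, if $\a$ lies in $C_x$ it must lie in $C_z^\perp$. I need the cleaner dichotomy: either $\a \notin C_x + (\v' - \v)$-type coset so the sum is empty and the inner product is $0$; or the relevant coset condition forces $\a \in C_z^\perp$ (when $\v = \v'$) or more generally reduces matters to the $Z$-part. In the surviving case the amplitude becomes a character sum $\frac{\eta^{T(\b\cdot\v)}}{|C_x^\perp|}\sum_{\u \in C_x^\perp} \eta^{T(\b\cdot\u)}$ times a phase, and this Gauss-type sum over the $\F_p$-space $C_x^\perp$ vanishes unless the linear functional $\u \mapsto T(\b\cdot\u)$ is identically zero on $C_x^\perp$, i.e. unless $\b \in (C_x^\perp)^\perp = C_x$. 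But then $\wt_H(\b) \leq d_z - 1 < d_z = \wt(C_z \setminus C_x^\perp)$ forces $\b \in C_x^\perp$. So whenever the inner product is nonzero we are reduced to $\a \in C_z^\perp$ and $\b \in C_x^\perp$, and in that regime a direct check shows $\langle\psi_{\v'}|e|\psi_{\v}\rangle$ is independent of the pair $(\v,\v')$ (it is $\delta_{\v,\v'}$ times a global phase), which is exactly the statement that $|\u\rangle$ and $e|\v\rangle$ are orthogonal whenever $\langle\u|\v\rangle = 0$. This yields detection of $d_x - 1$ $X$-errors and $d_z - 1$ $Z$-errors simultaneously.

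The main obstacle I anticipate is organizing the case analysis on $\a$ and $\b$ cleanly: the condition "$\wt(C_x \setminus C_z^\perp) = d_x$" is about vectors in $C_x$ but \emph{not} in $C_z^\perp$, so I must be careful that the coset arithmetic actually produces an element of $C_x$ (this uses $C_x^\perp \subseteq C_z$, equivalently $C_z^\perp \subseteq C_x$, so that $C_x$ and $C_x^\perp$ interact correctly with $C_z$), and dually for $\b$ with the roles of the two codes and the Hermitian/Euclidean duality swapped. A secondary nuisance is the trace map $T\colon \F_q \to \F_p$ inside the phase: the vanishing of $\sum_{\u\in C_x^\perp}\eta^{T(\b\cdot\u)}$ requires viewing $C_x^\perp$ as an $\F_p$-subspace and the functional $\u\mapsto T(\b\cdot\u)$ as $\F_p$-linear, then invoking that a nontrivial additive character sums to zero over a group. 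Once those two points are handled, everything else is the routine normalization and dimension count already sketched above. This is essentially the argument of \cite[Lemma 3.1]{SRK09}; I would cite it for the structure and only spell out the weight bookkeeping that distinguishes $d_x$ from $d_z$.
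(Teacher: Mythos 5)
The paper gives no proof of Proposition~\ref{prop2.4} --- it is quoted directly from \cite[Lemma 3.1]{SRK09} --- so your proposal has to stand on its own. The strategy (coset superposition states plus a character-sum evaluation of $\langle\psi_{\v'}|X(\a)Z(\b)|\psi_{\v}\rangle$) is the right one and is essentially how the cited lemma is proved, but your weight bookkeeping --- which you correctly identify as the entire content of the asymmetric statement --- comes out backwards. With states built as superpositions over $C_x^\perp$ and indexed by cosets of $C_x^\perp$ in $C_z$, the indicator condition for a nonzero off-diagonal entry is $\a\in(\v'-\v)+C_x^\perp$ with $\v'-\v\notin C_x^\perp$, which places $\a$ in $C_z\setminus C_x^\perp$ and hence forces $\wt_H(\a)\geq \wt(C_z\setminus C_x^\perp)=d_z$, not $d_x$. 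Dually, the character sum over $C_x^\perp$ survives exactly when $\b\in(C_x^\perp)^\perp=C_x$, and the surviving phase $\eta^{T(\b\cdot\v)}$ is constant over $\v\in C_z$ exactly when $\b\in C_z^\perp$; the dangerous $\b$ therefore lie in $C_x\setminus C_z^\perp$ and have weight at least $d_x$, not $d_z$. So your construction detects $X$-errors up to weight $d_z-1$ and $Z$-errors up to weight $d_x-1$, i.e., it proves the proposition with the two distances interchanged. This is also why two of your intermediate deductions are non sequiturs: the coset condition gives $\a\in C_z$, not $\a\in C_x$, so the clause ``if $\a$ lies in $C_x$ it must lie in $C_z^\perp$'' addresses a case that does not arise; and ``$\b\in C_x$ and $\wt_H(\b)\leq d_z-1<\wt(C_z\setminus C_x^\perp)$ forces $\b\in C_x^\perp$'' compares the weight of an element of $C_x$ against the minimum weight over a subset of $C_z$, which proves nothing. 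Since no order between $d_x$ and $d_z$ is assumed in the statement, the swap is not cosmetic: $\wt_H(\a)\leq d_x-1$ does not imply $\wt_H(\a)\leq d_z-1$, so the off-diagonal terms in your version need not vanish.

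The repair is small: superpose over $C_z^\perp$ (contained in $C_x$ because $C_x^\perp\subseteq C_z$) and index the states by coset representatives of $C_z^\perp$ in $C_x$; the count $K=q^{k_x}/q^{n-k_z}=q^{k_x+k_z-n}$ is unchanged, the off-diagonal condition now puts $\a\in C_x\setminus C_z^\perp$ (weight $\geq d_x$) and the character-sum condition puts the dangerous $\b$ in $C_z\setminus C_x^\perp$ (weight $\geq d_z$), matching the statement. Alternatively, keep your construction and invoke Proposition~\ref{prop:3.2} to swap the two distances. The remaining ingredients --- orthonormality of the coset states, reduction of the detection condition to constancy of the diagonal entries, and the vanishing of a nontrivial additive character sum over an $\F_p$-subspace together with $(C_x^\perp)^\perp=C_x$ under the trace--Euclidean pairing --- are handled correctly.
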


The resulting code is said to be \textit{pure} if, 
in the above construction, $d_x=d(C_x)$ and $d_z=d(C_z)$.

\section{Asymmetric QECC from Additive Codes}\label{sec:AsymQECC}
The following result has been established recently:

\begin{thm}\cite[Th. 3.1]{WFLX09}\label{thm:3.1}
\begin{enumerate}
 \item There exists an asymmetric quantum code with parameters 
$((n,K,d_z/d_x))_q$ with $K \geq 2$ if and only if there exist 
$K$ nonzero mappings
\begin{equation}\label{eq:3.1}
\varphi_i : \F_q^n \rightarrow \mathbb{C} \text{ for } 1\leq i \leq K 
\end{equation}

satisfying the following conditions: for each $d$ such that 
$1 \leq d \leq \min\left\lbrace d_x,d_z\right\rbrace $ and 
partition of $\left\lbrace 1,2,\ldots,n\right\rbrace $,
\begin{equation}\label{eq:3.2}
\begin{cases}
\left\lbrace 1,2,\ldots,n \right\rbrace = A \cup X \cup Z \cup B \text{,} \\
|A| = d-1,\quad |B| = n+d-d_x-d_z+1 \text{,}\\
|X|=d_x - d,\quad |Z| = d_z - d \text{,} 
\end{cases}
\end{equation}

and each $\c_A,\c_A' \in \F_q^{|A|}$, 
$\c_Z \in \F_q^{|Z|}$ and $\a_X \in \F_q^{|X|}$, 
we have the equality
\begin{multline}\label{eq:3.3}
\sum_{\substack{ \c_X \in \F_q^{|X|} \text{,}\\ \c_B \in \F_q^{|B|}}} 
\overline{\varphi_i (\c_A,\c_X,\c_Z,\c_B)} \varphi_j(\c_A',\c_X - \a_X,\c_Z,\c_B) \\
= 
\begin{cases}
0 &\text{for $i \neq j$,} \\
I(\c_A,\c_A',\c_Z,\a_X) &\text{for $i = j$,}
\end{cases}
\end{multline}

where $I(\c_A,\c_A',\c_Z,\a_X)$ is an element of $\mathbb{C}$ 
which is independent of $i$. The notation $(\c_A,\c_X,\c_Z,\c_B)$ 
represents the rearrangement of the entries of the 
vector $\c \in \F_q^n$ according to the partition of 
$\left\lbrace 1,2,\ldots,n\right\rbrace $ given in 
Equation (\ref{eq:3.2}).

\item Let $(\varphi_i,\varphi_j)$ stand for 
$\sum_{\c\in \F_q^n} \overline{\varphi_i(\c)}\varphi_j(\c)$.
There exists a pure asymmetric quantum code with 
parameters $((n,K\geq1,d_z/d_x))_q$ 
if and only if there exist $K$ nonzero mappings 
$\varphi_i$ as shown in Equation (\ref{eq:3.1}) such that
\begin{itemize}
 \item $\varphi_i$ are linearly independent for 
$1\leq i\leq K$, i.e., the rank of the $K \times q^n$ matrix
 $(\varphi_i (\c))_{1\leq i\leq K, \c \in \F_q^n}$ is $K$; 
and
 \item for each $d$ with 
$1 \leq d \leq \min\left\lbrace d_x,d_z\right\rbrace $, 
a partition in Equation (\ref{eq:3.2}) 
and $\c_A,\a_A \in \F_q^{|A|}, \c_Z \in \F_q^{|Z|}$ 
and $\a_X \in \F_q^{|X|}$, we have the equality
\end{itemize}
\end{enumerate}
\begin{multline}\label{eq:3.4}
\sum_{\substack{ \c_X \in \F_q^{|X|}, \\ \c_B \in 
\F_q^{|B|}}} \overline{\varphi_i (\c_A,\c_X,\c_Z,\c_B)} 
\varphi_j(\c_A+\a_A,\c_X + \a_X,\c_Z,\c_B) \\
= 
\begin{cases}
0 &\text{for $(\a_A,\a_X) \neq (\0,\0)$,} \\
\frac{(\varphi_i,\varphi_j)}{q^{d_z-1}} 
&\text{for $(\a_A,\a_X) = (\0,\0)$.}
\end{cases}
\end{multline}
\end{thm}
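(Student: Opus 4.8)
The plan is to read Theorem~\ref{thm:3.1} as an asymmetric refinement of the standard quantum error-detection conditions, and to extract the stated identities from them by a change of variables followed by a discrete Fourier transform over $\F_q$. First I would trade Definition~\ref{def2.2} for a basis-level reformulation. Writing $P$ for the orthogonal projector of $V_n$ onto $Q$, the assertion that $Q$ detects $d_x-1$ qudits of $X$-errors and $d_z-1$ qudits of $Z$-errors is equivalent to $P\,e\,P=\gamma_e\,P$ for some $\gamma_e\in\C$, for every $e=X(\a)Z(\b)\in E_n$ with $w_X(e)\le d_x-1$ and $w_Z(e)\le d_z-1$. Fixing an orthonormal basis $|\psi_1\rangle,\dots,|\psi_K\rangle$ of $Q$ and writing $P=\sum_i|\psi_i\rangle\langle\psi_i|$, this says $\langle\psi_i|e|\psi_j\rangle=\gamma_e\,\delta_{ij}$ with $\gamma_e$ independent of $i,j$; the forward direction follows by feeding the detection hypothesis the orthogonal pairs $(|\psi_i\rangle,|\psi_j\rangle)$ for $i\ne j$ and $(|\psi_i\rangle+|\psi_{i'}\rangle,\,|\psi_i\rangle-|\psi_{i'}\rangle)$ for $i\ne i'$ (here $K\ge2$ is used), and the reverse by expanding an arbitrary orthogonal pair in the basis. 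An overall phase $\eta^\lambda$ only rescales $\gamma_e$, so I may take $\lambda=0$; the maps in \eqref{eq:3.1} are then just the coordinate functions in $|\psi_i\rangle=\sum_{\c\in\F_q^n}\varphi_i(\c)|\c\rangle$, where ``nonzero'' means $|\psi_i\rangle\ne0$.

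Next I would compute matrix elements in coordinates. From $e|\c\rangle=\eta^{T(\b\cdot\c)}|\a+\c\rangle$ (Section~\ref{sec:QuantumCodes}) and the substitution $\c\mapsto\c-\a$ one gets
\begin{equation*}
\langle\psi_i|X(\a)Z(\b)|\psi_j\rangle=\eta^{-T(\b\cdot\a)}\sum_{\c\in\F_q^n}\overline{\varphi_i(\c)}\,\varphi_j(\c-\a)\,\eta^{T(\b\cdot\c)} .
\end{equation*}
The weight bounds on $e$ say precisely that $\supp(\a)$ and $\supp(\b)$ have size at most $d_x-1$ and $d_z-1$. A short combinatorial check then shows that letting $d$ vary over $1\le d\le\min\{d_x,d_z\}$, the partition vary over those in \eqref{eq:3.2}, $\a$ vary over vectors supported in $A\cup X$, and $\b$ over vectors supported in $A\cup Z$, recovers exactly the admissible pairs $(\a,\b)$ (for the nontrivial inclusion, given $(\a,\b)$ take $A\supseteq\supp(\a)\cap\supp(\b)$ of size $d-1$ with $d=1+|\supp(\a)\cap\supp(\b)|$ and push the leftover supports of $\a$ and $\b$ into $X$ and $Z$ respectively; this requires $|B|\ge0$, and any $d$ with $|B|<0$ gives a vacuous condition). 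Fixing a partition and writing $\c=(\c_A,\c_X,\c_Z,\c_B)$, $\a=(\a_A,\a_X,\0,\0)$, $\b=(\b_A,\0,\b_Z,\0)$, the matrix element becomes, up to the phase,
\begin{equation*}
\sum_{\c_A,\c_Z}F^{\a}_{ij}(\c_A,\c_Z)\,\eta^{T(\b_A\cdot\c_A+\b_Z\cdot\c_Z)},\qquad F^{\a}_{ij}(\c_A,\c_Z):=\sum_{\c_X,\c_B}\overline{\varphi_i(\c_A,\c_X,\c_Z,\c_B)}\,\varphi_j(\c_A-\a_A,\c_X-\a_X,\c_Z,\c_B),
\end{equation*}
that is, the Fourier transform of $F^{\a}_{ij}$ on the finite abelian group $\F_q^{|A|}\times\F_q^{|Z|}$ under the nondegenerate pairing $(\b,\c)\mapsto T(\b\cdot\c)$.

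The key step is now formal, since a function on a finite abelian group is recovered from its Fourier transform. Requiring the matrix element to vanish for $i\ne j$ and \emph{all} admissible $\b$ is equivalent to $F^{\a}_{ij}\equiv0$; requiring it to be independent of $i$ for $i=j$ and all $\b$ is equivalent to $F^{\a}_{ii}$ being independent of $i$. After reparametrizing $\c_A'=\c_A-\a_A$ (so that $\c_A,\c_A'$ become free), these two statements are exactly the two cases of \eqref{eq:3.3}, with $I(\c_A,\c_A',\c_Z,\a_X)$ the common value of $F^{\a}_{ii}$. For part~(2) I would rerun the computation with the purity hypothesis imposed, which forces $\gamma_e=0$ for every non-identity $e$ in the relevant weight range; then $\widehat{F^{\a}_{ii}}$ is supported only at $\b=\0$, so $F^{\a}_{ii}$ is constant in $(\c_A,\c_Z)$ with value $q^{-(|A|+|Z|)}\sum_{\c_A,\c_Z}F^{\a}_{ii}(\c_A,\c_Z)$; since $|A|+|Z|=d_z-1$ and this double sum equals $(\varphi_i,\varphi_i)$ when $\a=\0$ and $0$ otherwise, this is precisely \eqref{eq:3.4}, while linear independence of the $\varphi_i$ is exactly $\dim Q=K$ (and is in any case forced by \eqref{eq:3.4} at $\a=\0$, which also yields $\langle\psi_i|\psi_j\rangle=0$ for $i\ne j$).

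The step I expect to be the main obstacle is the bookkeeping around partitions: checking cleanly that ``all partitions together with all $\a$ supported in $A\cup X$ and $\b$ supported in $A\cup Z$'' is the same data as ``all $e$ with $w_X(e)\le d_x-1$ and $w_Z(e)\le d_z-1$'', correctly handling the degenerate ranges of $d$, and, for part~(2), verifying that the purity condition of Definition~\ref{def2.2} is exactly strong enough to annihilate every nontrivial Fourier mode of $F^{\a}_{ii}$ (the asymmetric counterpart of the symmetric notion of purity). Once the indexing is set up, the Fourier-inversion manipulations and the reduction to the projector identity are routine.
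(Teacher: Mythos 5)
The paper does not actually prove Theorem~\ref{thm:3.1}; it is quoted verbatim from \cite[Th.~3.1]{WFLX09}, so there is no in-paper proof to compare against. Your strategy --- reformulating detection as the Knill--Laflamme-type projector identity $PeP=\gamma_eP$ over the admissible error set, computing $\langle\psi_i|X(\a)Z(\b)|\psi_j\rangle$ in coordinates, and then inverting the Fourier transform in $(\b_A,\b_Z)$ over $\F_q^{|A|}\times\F_q^{|Z|}$ to convert ``vanishing/$i$-independence for all admissible $\b$'' into the pointwise conditions \eqref{eq:3.3} --- is the standard and correct route, and the purity case \eqref{eq:3.4} (Fourier transform of $F^{\a}_{ii}$ supported at $\b=\0$, hence $F^{\a}_{ii}$ constant with value $(\varphi_i,\varphi_i)/q^{|A|+|Z|}$ and $|A|+|Z|=d_z-1$) is handled correctly.

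There is, however, one concrete flaw in the step you yourself identified as the main obstacle. Your recipe for covering an admissible error $(\a,\b)$ by a partition always takes $d=1+|\supp(\a)\cap\supp(\b)|$, and you dismiss the case $|B|<0$ as ``vacuous.'' It is vacuous only on the hypothesis side: for that value of $d$ no partition exists, so \eqref{eq:3.3} imposes nothing --- but the error $(\a,\b)$ still exists and must be shown to be killed by \emph{some} instance of \eqref{eq:3.3}, otherwise the implication ``conditions $\Rightarrow$ code'' fails. The repair is to take $d=\max\{1+|\supp(\a)\cap\supp(\b)|,\;d_x+d_z-n-1\}$ and enlarge $A$ beyond the intersection of the supports, distributing the extra elements of $A$ between $\supp(\a)\setminus\supp(\b)$ and $\supp(\b)\setminus\supp(\a)$ so that both $|\supp(\a)\setminus A|\le d_x-d$ and $|\supp(\b)\setminus A|\le d_z-d$ hold; a short count shows this is always possible when $|\supp(\a)|\le d_x-1$ and $|\supp(\b)|\le d_z-1$. (In the regime relevant to nontrivial codes, $d_x+d_z\le n+1$ by the Singleton bound and the issue never arises, but the theorem is stated as an unconditional equivalence.) A second, inessential slip: your parenthetical that \eqref{eq:3.4} at $\a=\0$ forces $\langle\psi_i|\psi_j\rangle=0$ for $i\ne j$ is not right --- it gives $(\varphi_i,\varphi_j)/q^{d_z-1}$ with no vanishing requirement, which is why the theorem asks only for linear independence (equivalently, a nonsingular Gram matrix), not orthogonality, of the $\varphi_i$.
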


The following result is due to Keqin Feng and Long Wang. It has, 
however, never appeared formally in a published form before. 
Since it will be needed in the sequel, we present it here with 
a proof.
\begin{prop}(K.~Feng and L.~Wang)\label{prop:3.2} Let $a,b$ be 
positive integers. 
There exists an asymmetric quantum code $Q$ with parameters 
$((n,K,a/b))_q$ if and only if there exists an 
asymmetric quantum code $Q'$ with parameters $((n,K,b/a))_q$.
$Q'$ is pure if and only if $Q$ is pure.
\end{prop}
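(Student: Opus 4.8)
The plan is to exploit the symmetry already visible in Theorem~\ref{thm:3.1}: the characterization of an asymmetric quantum code with parameters $((n,K,d_z/d_x))_q$ is almost symmetric in the roles of $d_x$ and $d_z$, the asymmetry coming only from the partition sizes in Equation~(\ref{eq:3.2}) and from the fact that the "twist" $\a_X$ is applied on the $X$-block while the $Z$-block is held fixed. So the natural approach is: given the $K$ mappings $\varphi_i\colon\F_q^n\to\C$ witnessing $Q$ with parameters $((n,K,a/b))_q$, produce $K$ new mappings $\psi_i$ witnessing $Q'$ with parameters $((n,K,b/a))_q$ by passing to a suitable transform. The transform to use is the (finite) Fourier transform on $\F_q^n$ built from the additive character $\chi(\c)=\eta^{T(\c)}$ already in play in Section~\ref{sec:QuantumCodes}; concretely, set
\begin{equation*}
\psi_i(\b) := \frac{1}{q^{n/2}}\sum_{\c\in\F_q^n}\chi(\b\cdot\c)\,\varphi_i(\c)\text{.}
\end{equation*}
This is a unitary change of basis on $\C^{q^n}$, so it preserves linear independence of the $\varphi_i$ and preserves all the inner products $(\varphi_i,\varphi_j)$ up to the overall normalization, which will take care of the "pure" clause via Equation~(\ref{eq:3.4}) and the fact that $d_z-1$ and $d_x-1$ both just scale $(\varphi_i,\varphi_j)$ by a power of $q$.

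First I would write out Equation~(\ref{eq:3.3}) (respectively~(\ref{eq:3.4})) for the $\varphi_i$ and substitute the Fourier inversion $\varphi_i(\c)=q^{-n/2}\sum_{\b}\overline{\chi(\b\cdot\c)}\psi_i(\b)$. The key computation is that the sum over $\c_X\in\F_q^{|X|}$ and $\c_B\in\F_q^{|B|}$ of the product of two such characters collapses, by the orthogonality relation $\sum_{x\in\F_q}\chi(ax)=q\,[a=0]$, to a constraint forcing the corresponding $\b$-coordinates on the $X$- and $B$-blocks to coincide; meanwhile the free summation variable $\c_Z$ turns into a free summation over the $Z$-coordinates of $\b$, and the twist $\a_X$ on the $X$-block turns, under the character, into a phase that is exactly of the form $\chi(\a_X\cdot(\,\cdot\,))$ on the dual side. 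The upshot is that the defining identity for $\psi_i$ is again an instance of Equation~(\ref{eq:3.3}), but now with the blocks $X$ and $Z$ interchanged — i.e. with the roles $|X|=d_x-d=a-d$ and $|Z|=d_z-d=b-d$ swapped — which is precisely the statement that the $\psi_i$ witness a code with parameters $((n,K,b/a))_q$. One checks that $|A|=d-1$ and $|B|=n+d-d_x-d_z+1$ are symmetric in $d_x\leftrightarrow d_z$, so the partition in Equation~(\ref{eq:3.2}) is self-consistent under the swap.

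The main obstacle, and the step to be careful with, is the bookkeeping of \emph{which} block the twist lands on and how the "independent of $i$" quantity $I(\c_A,\c_A',\c_Z,\a_X)$ transforms: under the Fourier transform the held-fixed block $\c_Z$ and the twisted block $\a_X$ exchange roles, so one must verify that the transform of an $i$-independent $I$ is again $i$-independent (this is automatic because the Fourier kernel does not depend on $i$) and that it has the right list of arguments for the swapped partition. A secondary point is the normalization constant: choosing the symmetric $q^{-n/2}$ normalization makes the transform unitary, so that the "pure" conditions — linear independence of the maps and the value $(\varphi_i,\varphi_j)/q^{d_z-1}$ in Equation~(\ref{eq:3.4}) — transport directly, with $q^{d_z-1}=q^{b-1}$ replaced by $q^{a-1}$ on the dual side, matching the parameters of $Q'$. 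Finally, since the Fourier transform is an involution up to sign/coordinate reversal, the construction $Q\mapsto Q'$ is its own inverse, giving the "if and only if" in both the existence statement and the purity statement.
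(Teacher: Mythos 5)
Your proposal is correct and follows essentially the same route as the paper: the paper also defines $\Phi_i(\v)=\sum_{\c}\varphi_i(\c)\eta^{T(\c\cdot\v)}$ (unnormalized, but that is immaterial), collapses the sums by orthogonality of characters so that the $X$- and $Z$-blocks exchange roles in Equation~(\ref{eq:3.3}), and handles purity via Equation~(\ref{eq:3.4}) together with the observation that the Fourier transforms remain linearly independent.
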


\begin{proof}
We begin by assuming the existence of an $((n,K,a/b))_q$ asymmetric 
quantum code $Q$. Let $\varphi_{i}$ with $1 \leq i \leq K$ be the 
$K$ mappings given in Theorem~\ref{thm:3.1}. Define the following 
$K$ mappings
\begin{equation}\label{eq:3.5}
\begin{aligned}
\Phi_i :\quad &\F_q^n \rightarrow \mathbb{C} 
\text{ for } 1\leq i \leq K \\
&\v \mapsto \sum_{\c \in \F_{q}^{n}} \varphi_{i}(\c) \eta^{T(\c \cdot \v)}.
\end{aligned}
\end{equation}

Let $\v_{A},\b_{A} \in \F_{q}^{|A|}, \v_{X} \in \F_{q}^{|X|}$, and 
$\b_{Z} \in \F_{q}^{|Z|}$.
For each $d$ such that $1 \leq d \leq \min\left\lbrace d_x,d_z\right\rbrace $ 
and a partition of $\left\lbrace 1,2,\ldots,n \right\rbrace $ 
given in Equation (\ref{eq:3.2}), we show that
\begin{multline}\label{eq:3.6}
S = \sum_{\substack{ \v_Z \in \F_q^{|Z|} \text{,}\\ \v_B \in \F_q^{|B|}}} 
\overline{\Phi_i (\v)} \Phi_j(\v_A+\b_A,\v_X,\v_Z+\b_Z,\v_B) \\
= 
\begin{cases}
0 &\text{for $i \neq j$,} \\
I'(\v_A,\b_A,\b_Z,\v_X) &\text{for $i = j$,}
\end{cases}
\end{multline}
where $I'(\v_A,\b_A,\b_Z,\v_X)$ is an element of $\mathbb{C}$ 
which is independent of $i$.

Let $\t=(\v_A+\b_A,\v_X,\v_Z+\b_Z,\v_B)$. Applying Equation (\ref{eq:3.5}) 
yields
\begin{equation}\label{eq:3.7}
 S = \sum_{\substack{ \v_Z \in \F_q^{|Z|} \text{,}\\ \v_B \in \F_q^{|B|}}}
\sum_{\c,\d \in \F_{q}^n} \overline{\varphi_i (\c)} \varphi_j (\d) 
\eta^{T((-\c \cdot \v)+(\d \cdot \t))} \text{.} 
\end{equation}
By carefully rearranging the summations and grouping the terms, we get
\begin{equation}\label{eq:3.8}
 S = \sum_{\c,\d \in \F_{q}^n} \overline{\varphi_i (\c)} \varphi_j (\d)
\cdot \kappa \cdot \lambda \text{,}
\end{equation}
where
\begin{align*}
 \kappa &= \eta^{T(\v_A \cdot(\d_A-\c_A)+\v_X \cdot(\d_X-\c_X)+
\d_A \cdot \b_A+\d_Z \cdot \b_Z)} \text{,} \\
 \lambda &= \sum_{\substack{ \v_Z \in \F_q^{|Z|} \text{,}\\ \v_B \in \F_q^{|B|}}}
\eta^{T(\v_B \cdot(\d_B-\c_B)+\v_Z \cdot(\d_Z-\c_Z))} \text{.}
\end{align*}
By orthogonality of characters,
\begin{equation*}
\lambda =
\begin{cases}
q^{|Z|+|B|} & \text{if }\d_B = \c_B \text{ and } \d_Z= \c_Z \text{,} \\
0 & \text{otherwise.}
\end{cases}
\end{equation*}
Therefore,
\begin{equation}\label{eq:3.9}
 S=\sum_{\substack{\c \in \F_{q}^n \\ \d_A \in \F_q^{|A|} \text{,}\d_X \in \F_q^{|X|}}}
q^{|Z|+|B|} \cdot \overline{\varphi_i (\c)} \varphi_j (\d_A,\d_X,\c_Z,\c_B) \cdot \pi \text{,}
\end{equation}
where
\begin{equation*}
 \pi = \eta^{T(\v_A \cdot(\d_A-\c_A)+\v_X \cdot(\d_X-\c_X)+\d_A \cdot \b_A+\c_Z \cdot \b_Z)}.
\end{equation*}
Now, we let $k=n-d_x+1$, $\a_A=\d_A-\c_A$, and $\a_X=\d_X-\c_X$. 
Splitting up the summation once again yields
\begin{multline}\label{eq:3.10}
 S=q^k \sum_{\substack{\c_A,\a_A \in \F_{q}^{|A|} \\ \c_Z \in \F_q^{|Z|} \text{,}\a_X \in \F_q^{|X|}}}
\eta^{T(\v_A \cdot \a_A+\v_X \cdot\a_X+\b_A \cdot (\c_A + \a_A) +\c_Z \cdot \b_Z)} \\
\cdot \sum_{\substack{ \c_X \in \F_q^{|X|} \text{,}\\ \c_B \in \F_q^{|B|}}} 
\overline{\varphi_i (\c_A,\c_X,\c_Z,\c_B)} \varphi_j(\c_A+\a_A,\c_X + \a_X,\c_Z,\c_B) \text{.}
\end{multline}
Invoking Equation (\ref{eq:3.3}) concludes the proof for the first part 
with $I'$ given by
\begin{equation}\label{eq:3.11}
 I'=q^k I 
\sum_{\substack{\c_A,\a_A \in \F_{q}^{|A|} \\ \c_Z \in \F_q^{|Z|} \text{,}\a_X \in \F_q^{|X|}}}
\eta^{T(\v_A \cdot \a_A+\v_X \cdot\a_X+\b_A \cdot (\c_A + \a_A) +\c_Z \cdot \b_Z)} \text{.}
\end{equation}

For the second part, let us assume the existence of a pure $((n,K,a/b))_q$ 
asymmetric quantum code $Q$. Note that the Fourier transformations $\Phi_i$ for 
$1 \leq i\leq K$ are linearly independent. We use Equations (\ref{eq:3.10}) 
and (\ref{eq:3.4}) to establish the equality
\begin{multline}\label{eq:3.12}
S = \sum_{\substack{ \v_Z \in \F_q^{|Z|} \text{,}\\ \v_B \in \F_q^{|B|}}} 
\overline{\Phi_i (\v)} \Phi_j(\v_A+\b_A,\v_X,\v_Z+\b_Z,\v_B) \\
= 
\begin{cases}
0 &\text{for $(\b_A,\b_Z) \neq (\0,\0)$,} \\
q^{n} \frac{(\varphi_i,\varphi_j)}{q^{d_x-1}} 
&\text{for $(\b_A,\b_Z) = (\0,\0)$.}
\end{cases}
\end{multline}

Consider the term
\begin{equation*}
 M:=\sum_{\substack{ \c_X \in \F_q^{|X|} \text{,}\\ \c_B \in \F_q^{|B|}}} 
\overline{\varphi_i (\c)} \varphi_j(\c_A+\a_A,\c_X + \a_X,\c_Z,\c_B)
\end{equation*}
in Equation (\ref{eq:3.10}). By the purity assumption, for 
$(\a_A,\a_X) \neq (\0,\0)$, $M=0$. 
For $(\a_A,\a_X)=(\0,\0)$, $M=\frac{(\varphi_i,\varphi_j)}{q^{d_{z}-1}}$. 
Hence,
\begin{equation}\label{eq:3.13}
 S=q^k \sum_{\substack{\c_A \in \F_{q}^{|A|} \text{,}\\ \c_Z \in \F_q^{|Z|}}} 
\eta^{T(\b_A \cdot \c_A + \b_Z \cdot \c_Z)}
\cdot \frac{(\varphi_i,\varphi_j)}{q^{d_z-1}} \text{.}
\end{equation}
By orthogonality of characters, if $(\b_A,\b_Z) \neq (\0,\0)$, then
\begin{equation*}
 \sum_{\substack{\c_A \in \F_{q}^{|A|} \text{,}\\ \c_Z \in \F_q^{|Z|}}}
\eta^{T(\b_A \cdot \c_A + \b_Z \cdot \c_Z)} = 0 \text{,}
\end{equation*}
making $S=0$. If $(\b_A,\b_Z)=(\0,\0)$, then
\begin{equation*}
 S = q^k \cdot q^{|A|+|Z|} \cdot \frac{(\varphi_i,\varphi_j)}{q^{d_z-1}} \text{.}
\end{equation*}
This completes the proof of the second part.
\end{proof}
With this result, without loss of generality, $d_z \geq d_x$ is henceforth assumed.

\begin{rem}\label{rem3.3}
If we examine closely the proof of Theorem~\ref{thm:3.1} above 
as presented in Theorem 3.1 of~\cite{WFLX09}, only the additive property 
(instead of linearity) is used. We will show that 
the conclusion of the theorem with an adjusted value for $K$ 
still follows if we use \underline{classical additive codes} 
instead of linear codes.
\end{rem}

\begin{thm}\label{thm:3.4}
Let $d_x$ and $d_z$ be positive integers. Let $C$ be 
a classical additive code in $\F_q^n$. Assume that 
$d^{\perp_{\tr}}= d(C^{\perp_{\tr}})$ is the minimum 
distance of the dual code $C^{\perp_{\tr}}$ of $C$ 
under the trace Hermitian inner product. For a set 
$V:=\left\lbrace \v_i : 1\leq i \leq K\right\rbrace $ 
of $K$ distinct vectors in $\F_q^n$, let
$d_v:=\min\left\lbrace \wt_{H}(\v_i - \v_j + \c)
: 1 \leq i \neq j \leq K, \c \in C\right\rbrace$.
If $d^{\perp_{\tr}} \geq d_z$ and $d_v \geq d_x$, 
then there exists an asymmetric quantum code $Q$ with 
parameters $((n,K,d_z/d_x))_q$.
\end{thm}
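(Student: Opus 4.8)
The plan is to construct the $K$ mappings $\varphi_i : \F_q^n \to \C$ demanded by Theorem~\ref{thm:3.1}(1) directly from the code $C$ and the vector set $V$, mimicking the classical CSS-type construction but tracking the trace Hermitian structure instead of linearity. Concretely, I would fix a basis so that I may speak of cosets, and for each $i$ define $\varphi_i$ to be (a scalar multiple of) the indicator function of the coset $\v_i + C$, i.e. $\varphi_i(\c) = 1$ if $\c \in \v_i + C$ and $0$ otherwise, possibly with a normalizing constant so that the $i=j$ case of \eqref{eq:3.3} comes out to the prescribed $I(\c_A,\c_A',\c_Z,\a_X)$. Since the $\v_i$ are distinct and $d_v \geq d_x \geq 1$, the cosets $\v_i + C$ are pairwise disjoint, so the $\varphi_i$ are nonzero and have disjoint supports; this will immediately give the $i \neq j$ case of \eqref{eq:3.3}, because the product $\overline{\varphi_i(\cdot)}\varphi_j(\cdot)$ vanishes identically when the supports are disjoint — but I must be careful that the shift by $\a_X$ in the $X$-coordinates does not reintroduce overlap, and this is exactly where $d_v \geq d_x$ enters: a nonzero shift supported on $|X| = d_x - d \leq d_x - 1$ coordinates cannot move $\v_i + C$ onto $\v_j + C$ without creating a vector of the form $\v_i - \v_j + \c$ of weight $< d_x$.

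Next I would verify the $i = j$ case of \eqref{eq:3.3}, namely that
\begin{equation*}
\sum_{\substack{\c_X \in \F_q^{|X|} \\ \c_B \in \F_q^{|B|}}}
\overline{\varphi_i(\c_A,\c_X,\c_Z,\c_B)}\,\varphi_i(\c_A',\c_X - \a_X,\c_Z,\c_B)
\end{equation*}
is a quantity independent of $i$. With $\varphi_i$ the indicator of $\v_i + C$, this sum counts the number of pairs $(\c_X,\c_B)$ such that both $(\c_A,\c_X,\c_Z,\c_B)$ and $(\c_A',\c_X - \a_X,\c_Z,\c_B)$ lie in $\v_i + C$; equivalently, fixing the $A$-, $Z$-coordinates and the shift, it counts elements of a coset of $C$ subject to linear constraints on some coordinates. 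The crucial point is that $C$ is an $\F_r$-linear (additive) code and its dual $C^{\perp_{\tr}}$ under the trace Hermitian form has minimum distance $d^{\perp_{\tr}} \geq d_z$; by Delsarte's theorem (Proposition~\ref{OA}), $C$, viewed as an orthogonal array, has strength $d^{\perp_{\tr}} - 1 \geq d_z - 1$. Since $|A| + |Z| = (d-1) + (d_z - d) = d_z - 1 \leq d^{\perp_{\tr}} - 1$, the projection of $C$ (and of any coset) onto the $A \cup Z$ coordinates is uniform: every value in $\F_q^{|A| + |Z|}$ is attained equally often. This uniformity is what makes the count independent of which coset $\v_i + C$ we are in — the answer depends only on $\c_A, \c_A', \c_Z, \a_X$ — which is precisely the assertion that the $i=j$ value of \eqref{eq:3.3} is some $I(\c_A,\c_A',\c_Z,\a_X)$ not depending on $i$. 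I would spell this out by writing the coset-membership conditions as affine conditions and invoking the orthogonal-array property to evaluate the resulting count.

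The main obstacle I anticipate is bookkeeping rather than conceptual: one must carefully handle the interplay between the additive ($\F_r$-linear, not $\F_q$-linear) structure of $C$ and the $\F_q$-indexed basis of $V_n$, making sure that "coset" and "projection is uniform" are interpreted correctly — the orthogonal array associated to an additive code over $\F_q = \F_{r^2}$ still has $q$ levels and the Delsarte correspondence of Proposition~\ref{OA} still applies, because the trace Hermitian dual distance plays exactly the role of the ordinary dual distance for that orthogonal array (this is the content of the remark, earlier in the excerpt, that the additive families share the same MacWilliams identity). A secondary point is the precise value of $K$: Remark~\ref{rem3.3} warns of "an adjusted value for $K$," so I would check that the number of usable cosets is $|\F_q^n / C| = q^n / |C|$ and that any $K$ distinct vectors $\v_i$ with the stated separation property $d_v \geq d_x$ give a valid family, with no hidden upper bound beyond $K \leq q^n/|C|$ forced by disjointness of supports; the theorem as stated only asserts existence for the given $K$, so it suffices to exhibit the mappings, verify \eqref{eq:3.3} in both cases as above, and invoke Theorem~\ref{thm:3.1}(1) to conclude the existence of the asymmetric quantum code $Q$ with parameters $((n,K,d_z/d_x))_q$.
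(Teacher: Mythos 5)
Your proposal is correct and follows essentially the same route as the paper's proof: take $\varphi_i$ to be the indicator of the coset $\v_i+C$, use $d_v\geq d_x$ together with the fact that the two arguments of $\varphi_i$ and $\varphi_j$ differ only on the $A\cup X$ positions (of total size $d_x-1$) to kill the $i\neq j$ terms, and use Delsarte's theorem (Proposition~\ref{OA}) to see that each coset of $C$ is an orthogonal array of strength at least $d_z-1=|A|+|Z|$, making the $i=j$ count independent of $i$. The only nitpick is that the separating weight bound should be quoted as $|A|+|X|=d_x-1$ (the $A$-coordinates also differ), not just $|X|$, but this does not affect the argument.
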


\begin{proof}
Define the following functions
\begin{equation}\label{eq:3.14}
\begin{aligned}
\varphi_i :\quad &\F_q^n \rightarrow \mathbb{C} 
\text{ for } 1\leq i \leq K \\
&\u \mapsto
\begin{cases}
1 &\text{if $\u \in \v_i+C$,} \\
0 &\text{if $\u \not \in \v_i+C$.}
\end{cases}
\end{aligned}
\end{equation}

For each $d$ such that $1 \leq d \leq \min\left\lbrace d_x,d_z\right\rbrace $ 
and a partition of $\left\lbrace 1,2,\ldots,n \right\rbrace $ 
given in Equation (\ref{eq:3.2}),
\begin{equation*}
 \overline{\varphi_i (\c_A,\c_X,\c_Z,\c_B)} 
\varphi_j(\c_A+\a_A,\c_X + \a_X,\c_Z,\c_B) \neq 0
\end{equation*}
if and only if
\begin{equation*}
\begin{cases}
(\c_A,\c_X,\c_Z,\c_B) &\in \v_i+C \text{,} \\
(\c_A+\a_A,\c_X +\a_X,\c_Z,\c_B) &\in \v_j+C \text{,}
\end{cases}
\end{equation*}
which, in turn, is equivalent to
\begin{equation}\label{eq:3.15}
\begin{cases}
(\c_A,\c_X,\c_Z,\c_B) &\in \v_i+C \text{,} \\
(\a_A,\a_X,\0_Z,\0_B) &\in \v_j-\v_i+C \text{.}
\end{cases}
\end{equation}

Note that since $\wth_{H}(\a_A,\a_X,\0_Z,\0_B) \leq |A|+|X| = d_x-1$, 
we know that $(\a_A,\a_X,\0_Z,\0_B) \in \v_j-\v_i+C$ 
means $i=j$ by the definition of $d_v$ above.
Thus, if $i \neq j$,
\begin{equation}\label{eq:3.16}
\sum_{\substack{ \c_X \in \F_q^{|X|} \\ \c_B \in \F_q^{|B|}}} 
\overline{\varphi_i (\c_A,\c_X,\c_Z,\c_B)} 
\varphi_j(\c_A+\a_A,\c_X + \a_X,\c_Z,\c_B) = 0 \text{.}
\end{equation}

Now, consider the case of $i = j$. By Equation (\ref{eq:3.15}), 
if $(\a_A,\a_X,\0_Z,\0_B) \not \in C$, then it has no contribution 
to the sum we are interested in. If $(\a_A,\a_X,\0_Z,\0_B) \in C$, 
then
\begin{multline}\label{eq:3.17}
\sum_{\substack{ \c_X \in \F_q^{|X|} \\ \c_B \in \F_q^{|B|}}} 
\overline{\varphi_i (\c_A,\c_X,\c_Z,\c_B)} 
\varphi_i(\c_A+\a_A,\c_X + \a_X,\c_Z,\c_B) \\
= \sum_{\begin{subarray}{c} 
         \c_X \in \F_q^{|X|}, \c_B \in \F_q^{|B|} \\ 
         (\c_A,\c_X,\c_Z,\c_B) \in \v_i+C
        \end{subarray}} 
 1 \text{.}
\end{multline}

Proposition~\ref{OA} above tells us that, if $C$ is any 
classical $q$-ary code of length $n$ and size $M$ such 
that the minimum distance $d^{\perp}$ of its dual is greater 
than or equal to $d_z$, then any coset of $C$ is an 
orthogonal array of level $q$ and of strength exactly 
$d_z-1$. In other words, there are exactly 
$\frac{|C|}{q^{d_z-1}}$ vectors 
$(\c_A,\c_X,\c_Z,\c_B) \in \v_i+C$ for any fixed 
$(\c_A,\c_Z) \in \F_q^{d_z-1}$. Thus, for $i = j$, 
the sum on the right hand side of Equation (\ref{eq:3.17}) 
is $\frac{|C|}{q^{d_z-1}}$, which is independent of $i$. 
By Theorem~\ref{thm:3.1} we have an asymmetric 
quantum code $Q$ with parameters $((n,K,d_z/d_x))_q$.
\end{proof}

\begin{thm}\label{thm:3.5}
Let $q=r^2$ be an even power of a prime $p$. For $i = 1,2$, 
let $C_i$ be a classical additive code with parameters 
$(n,K_i,d_i)_q$. If $C_1^{\perp_{\tr}} \subseteq C_2$, then 
there exists an asymmetric quantum code $Q$ with parameters 
$((n,\frac{|C_2|}{|C_1^{\perp_{\tr}}|},d_z/d_x))_q$ where 
$\left\lbrace d_z,d_x\right\rbrace = \left\lbrace d_1,d_2\right\rbrace$. 
\end{thm}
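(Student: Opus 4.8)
The plan is to deduce Theorem~\ref{thm:3.5} from Theorem~\ref{thm:3.4} by choosing the set $V$ of coset representatives appropriately. First I would set $C := C_1^{\perp_{\tr}}$, so that $C^{\perp_{\tr}} = (C_1^{\perp_{\tr}})^{\perp_{\tr}} = C_1$ by the closedness of the relevant family (Definition~\ref{def1.1}, established in~\cite[Ch.~3]{NRS06}), giving $d(C^{\perp_{\tr}}) = d_1$. Since $C = C_1^{\perp_{\tr}} \subseteq C_2$, the quotient $C_2 / C_1^{\perp_{\tr}}$ is a well-defined additive group; let $V = \{\v_1,\dots,\v_K\}$ be a full set of coset representatives of $C$ in $C_2$, so $K = |C_2|/|C_1^{\perp_{\tr}}|$. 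With these choices the hypotheses of Theorem~\ref{thm:3.4} must be checked: we need $d^{\perp_{\tr}} = d(C^{\perp_{\tr}}) = d_1 \geq d_z$ and $d_v \geq d_x$, where $d_v = \min\{\wt_H(\v_i - \v_j + \c) : i \neq j,\ \c \in C\}$.

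The key observation is that $d_v$ equals $\wt(C_2 \setminus C_1^{\perp_{\tr}})$, the minimum Hamming weight of a nonzero element of $C_2$ lying outside $C = C_1^{\perp_{\tr}}$. Indeed, as $i,j$ range over distinct indices and $\c$ over $C$, the vectors $\v_i - \v_j + \c$ range precisely over the elements of $C_2$ that do not belong to $C$ (since $\v_i - \v_j \notin C$ exactly when $i \neq j$), and every such element is nonzero. Hence $d_v = \wt(C_2 \setminus C_1^{\perp_{\tr}}) \geq d(C_2) = d_2$, because the minimum weight over a subset of the nonzero codewords of $C_2$ is at least the minimum weight over all nonzero codewords. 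Therefore $d_1 \geq d_z$ and $d_v \geq d_2 \geq d_x$ will hold if we declare $d_z := d_1$ and $d_x := d_2$, which realizes the case $\{d_z, d_x\} = \{d_1, d_2\}$ with $d_z = d_1$; Theorem~\ref{thm:3.4} then yields an asymmetric quantum code with parameters $((n, K, d_1/d_2))_q$. To obtain the other assignment $\{d_z, d_x\} = \{d_2, d_1\}$ with $d_z = d_2$, I would simply invoke Proposition~\ref{prop:3.2}, which says an $((n,K,a/b))_q$ code exists if and only if an $((n,K,b/a))_q$ code exists; applying it to the code just constructed produces an $((n,K,d_2/d_1))_q$ code. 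Together these two cases give the conclusion for either choice in $\{d_z,d_x\} = \{d_1,d_2\}$.

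A small technical point worth spelling out: Theorem~\ref{thm:3.4} is stated for a general additive code $C$ in $\F_q^n$ and a set $V$ of distinct vectors, and it does not require $V \cup (\text{translates})$ to form a group, so taking $V$ to be coset representatives of $C_1^{\perp_{\tr}}$ inside $C_2$ is legitimate; the only thing that matters is the numerical bound on $d_v$. Also, the identification $d_v = \wt(C_2 \setminus C_1^{\perp_{\tr}})$ uses that $\v_i + C$ for $1 \leq i \leq K$ are exactly the cosets of $C$ contained in $C_2$, so their pairwise differences sweep out all of $C_2 \setminus C$; this is where the nesting hypothesis $C_1^{\perp_{\tr}} \subseteq C_2$ is essential, both to make $K$ an integer and to make the combinatorial identification go through.

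I expect no serious obstacle here — the argument is essentially a bookkeeping exercise translating the abstract hypotheses of Theorem~\ref{thm:3.4} into the nested-code language, plus one appeal to Proposition~\ref{prop:3.2} to symmetrize the roles of $d_1$ and $d_2$. The only place demanding a little care is verifying that $d_v$ is bounded below by $d(C_2)$ rather than being something smaller: one must note that $C_2 \setminus C_1^{\perp_{\tr}}$ omits some low-weight codewords of $C_2$ only if such codewords happen to lie in $C_1^{\perp_{\tr}}$, which can only \emph{increase} the minimum, never decrease it, so the inequality $d_v \geq d(C_2) = d_2$ is safe. This is the ``hard part,'' though it is genuinely easy.
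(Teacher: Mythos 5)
Your proposal is correct and follows essentially the same route as the paper: take $C = C_1^{\perp_{\tr}}$ in Theorem~\ref{thm:3.4}, let $V$ be a set of coset representatives of $C$ in $C_2$ (the paper realizes this as an $\F_r$-module complement $C'$ with $C_2 = C_1^{\perp_{\tr}} \oplus C'$), and observe that $d(C^{\perp_{\tr}}) = d_1$ while $d_v = \wt(C_2 \setminus C_1^{\perp_{\tr}}) \geq d_2$. Your explicit appeal to Proposition~\ref{prop:3.2} to handle the swap of $d_z$ and $d_x$ is left implicit in the paper but is exactly the intended mechanism.
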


\begin{proof}
We take $C = C_1^{\perp_{\tr}}$ in Theorem~\ref{thm:3.4} above. 
Since $C_1^{\perp_{\tr}} \subseteq C_2$, we have 
$C_2 = C_1^{\perp_{\tr}} \oplus C'$, where $C'$ is an 
$\F_r$-submodule of $C_2$ and $\oplus$ is the direct sum so that 
$|C'|=\frac{|C_2|}{|C_1^{\perp_{\tr}}|}$. 
Let $C' = \left\lbrace \v_1,\ldots,\v_K\right\rbrace$, 
where $K = \frac{|C_2|}{|C_1^{\perp_{\tr}}|}$. Then
\begin{align*}
 d^{\perp_{\tr}} &= d(C^{\perp_{\tr}}) = d(C_1) = d_1 \text{ and} \\
 d_v &= \min\left\lbrace \wth_{H}(\v_i - \v_j + \c) : 
1 \leq i \neq j \leq K, \c \in C \right\rbrace \\
&= \min \left\lbrace \wth_{H}(\v + \c) : 
\0 \neq \v \in C', \c \in C_1^{\perp_{\tr}} \right\rbrace \geq d_2 \text{.}
\end{align*}
\end{proof}

Theorem~\ref{thm:3.5} can now be used to construct quantum codes. 
In this paper, all computations are done in MAGMA~\cite{BCP97} 
version V2.16-5. 

The construction method of Theorem~\ref{thm:3.5} 
falls into what some have labelled the \textit{CSS-type construction}. 
It is noted in~\cite[Lemma 3.3]{SRK09} that any CSS-type 
$\F_{q}$-linear $[[n,k,d_{z}/d_{x}]]_{q}$-code satisfies 
the quantum version of the Singleton bound
\begin{equation*}
k \leq n-d_{x}-d_{z}+2 \text{.}
\end{equation*}
This bound is conjectured to hold for all asymmetric 
quantum codes. Some of our codes in later sections attain 
$k = n-d_{x}-d_{z}+2$. 
They are printed in boldface throughout the tables and examples.

\section{Additive Codes over $\F_4$}\label{sec:AdditiveF4}

Let $\F_4:=\left\lbrace 0,1,\omega,\omega^{2}=\overline{\omega}\right\rbrace $. 
For $x \in \F_{4}$, $\overline{x}=x^{2}$, the conjugate of $x$. 
By definition, an additive code $C$ of length $n$ over $\F_{4}$ is a 
free $\F_{2}$-module. It has size $2^{l}$ for some $0 \leq l \leq 2n$. 
As an $\F_{2}$-module, $C$ has a basis consisting of $l$ basis vectors. 
A \textit{generator matrix} of $C$ is an $l \times n$ matrix 
with entries elements of $\F_{4}$ whose rows form a basis of $C$.

Additive codes over $\F_4$ equipped with the trace Hermitian 
inner product have been studied primarily in connection to designs 
(e.g.~\cite{KP03}) and to stabilizer quantum codes 
(e.g.~\cite{GHKP01} and~\cite[Sec. 9.10]{HP03}). 
It is well known that if $C$ is an additive $(n,2^l)_4$-code, 
then $C^{\perp_{\tr}}$ is an additive $(n,2^{2n-l})_4$-code.

To compute the weight enumerator of $C^{\perp_{\tr}}$ we use 
Equation (\ref{eq:MacW}) with $q=4$
\begin{equation}\label{eq:4.1}
W_{C^{\perp_{\tr}}}(X,Y) = \frac{1}{|C|} W_C(X+3Y,X-Y) \text{.}
\end{equation}

\begin{rem}\label{rem:4.1}
If the code $C$ is $\F_4$-linear with parameters $[n,k,d]_4$, 
then $C^{\perp_{\H}} = C^{\perp_{\tr}}$. This is because 
$C^{\perp_{\H}}$ is of size $4^{n-k} = 2^{2n-2k}$ which is 
also the size of $C^{\perp_{\tr}}$. Alternatively, one can invoke 
~\cite[Th. 3]{CRSS98}.
\end{rem}

From here on, we assume the trace Hermitian inner product whenever additive 
$\F_{4}$ codes are discussed and the Hermitian inner product whenever 
$\F_{4}$-linear codes are used.

Two additive codes $C_1$ and $C_2$ over $\F_{4}$ 
are said to be \textit{equivalent} if there is a map sending the codewords of 
one code onto the codewords of the other where the map consists of a 
permutation of coordinates, followed by a scaling of coordinates by 
elements of $\F_{4}$, followed by a conjugation of the entries of some of 
the coordinates. 

\section{Construction from Extremal or Optimal Additive Self-Dual Codes over $\F_4$}
\label{sec:ExtremalSD}
As a direct consequence of Theorem~\ref{thm:3.5}, 
we have the following result.

\begin{thm}\label{thm:5.1}
If $C$ is an additive self-dual code of parameters 
$(n,2^n,d)_4$, then there exists an $[[n,0,d_z/d_x]]_4$ asymmetric 
quantum code $Q$ with $d_z = d_x = d(C^{\perp_{\tr}})$.
\end{thm}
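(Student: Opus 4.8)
The plan is to derive Theorem~\ref{thm:5.1} as an immediate specialization of Theorem~\ref{thm:3.5}. Recall that Theorem~\ref{thm:3.5} takes two additive codes $C_1, C_2$ over $\F_q$ with $q=r^2$ and $C_1^{\perp_{\tr}} \subseteq C_2$, and produces an asymmetric quantum code with parameters $((n, |C_2|/|C_1^{\perp_{\tr}}|, d_z/d_x))_q$ where $\{d_z,d_x\} = \{d_1,d_2\}$. So the first step is simply to set $q = 4 = 2^2$, which is the relevant even prime power, and to choose $C_1 = C_2 = C$, the given additive self-dual $(n,2^n,d)_4$-code.

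With this choice, the hypothesis $C_1^{\perp_{\tr}} \subseteq C_2$ becomes $C^{\perp_{\tr}} \subseteq C$, which holds with equality precisely because $C$ is self-dual: $C^{\perp_{\tr}} = C$. The second step is then to read off the parameters. The dimension of the quantum code is $\log_q$ of $|C_2|/|C_1^{\perp_{\tr}}| = |C|/|C^{\perp_{\tr}}| = |C|/|C| = 1$, so $K = 1$ and hence $k = \log_4 1 = 0$, giving the claimed $[[n,0,d_z/d_x]]_4$. For the distances, since $C_1 = C_2 = C$ we have $d_1 = d_2 = d(C)$, so both $d_z$ and $d_x$ equal $d(C)$. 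Finally, because $C$ is self-dual, $C^{\perp_{\tr}} = C$, so $d(C^{\perp_{\tr}}) = d(C) = d$, and therefore $d_z = d_x = d(C^{\perp_{\tr}})$ as asserted. Applying Theorem~\ref{thm:3.5} then yields the desired code $Q$.

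There is essentially no obstacle here: the result is a corollary, and the only thing to check is that all the hypotheses of Theorem~\ref{thm:3.5} are met in the degenerate case $C_1 = C_2$, which they are. If one wanted to be slightly more careful, one could note that Theorem~\ref{thm:3.5} is itself built on Theorem~\ref{thm:3.4}, whose conclusion explicitly allows $K \geq 1$ (rather than $K \geq 2$), so the single-coset case $K=1$ is legitimately covered; in that situation the set $V = \{\v_1\}$ is a single vector (one may take $\v_1 = \0$), the condition on $d_v$ is vacuous, and only the dual-distance condition $d^{\perp_{\tr}} = d(C^{\perp_{\tr}}) \geq d_z$ needs to hold, which it does with $d_z = d$. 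Alternatively, one could observe directly that a self-dual additive code trivially satisfies its own containment hypothesis, so no separate argument is needed. I would present the proof in two or three sentences, simply invoking Theorem~\ref{thm:3.5} with $C_1 = C_2 = C$ and computing the parameters.
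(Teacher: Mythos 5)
Your proposal is correct and matches the paper exactly: the paper states Theorem~\ref{thm:5.1} as a direct consequence of Theorem~\ref{thm:3.5} with $C_1=C_2=C$, which is precisely your argument. Your added remark about the $K=1$ case being legitimately covered (via Theorem~\ref{thm:3.4} rather than the $K\geq 2$ hypothesis of Theorem~\ref{thm:3.1}) is a careful touch the paper leaves implicit.
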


Additive self-dual codes over $\F_{4}$ exist for any length $n$ since 
the identity matrix $I_{n}$ clearly generates a self-dual 
$(n,2^{n},1)_{4}$-code. Any linear self-dual $[n,n/2,d]_{4}$-code 
is also an additive self-dual $(n,2^{n},d)_{4}$-code.

\begin{defn}\label{defn:5.2}
A self-dual $(n,2^n,d)_4$-code $C$ is \textit{Type II} if all of its 
codewords have even weight. If $C$ has a codeword of odd weight, then $C$ 
is \textit{Type I}. 
\end{defn}
It is known (see~\cite[Sec. 4.2]{RS98}) that Type II codes of length $n$ 
exist only if $n$ is even and that a Type I code is not $\F_4$-linear. There 
is a bound in~\cite[Th. 33]{RS98} on the minimum weight of an additive 
self-dual code. If $d_{I}$ and $d_{II}$ are the minimum weights of Type I 
and Type II codes of length $n$, respectively, then
\begin{equation}
\begin{aligned}
d_{I} &\leq 
\begin{cases}
2 \lfloor \frac{n}{6} \rfloor + 1 \text{,} \hfil \text{ if } n \equiv 0 \pmod{6} \\
2 \lfloor \frac{n}{6} \rfloor + 3 \text{,} \hfil \text{ if } n \equiv 0 \pmod{6} \\
2 \lfloor \frac{n}{6} \rfloor + 2 \text{,} \hfil \text{otherwise}
\end{cases} \text{,}\\
d_{II} &\leq 2 \lfloor \frac{n}{6} \rfloor + 2 \text{.}
\end{aligned}
\end{equation}

A code that meets the appropriate bound is called \textit{extremal}. If a code is 
not extremal yet no code of the given type can exist with a larger minimum weight, 
then we call the code \textit{optimal}.

The complete classification, up to equivalence, of additive self-dual codes 
over $\F_{4}$ up to $n=12$ can be found in~\cite{DP06}. The classification of 
extremal codes of lengths $n=13$ and $n=14$ is presented in~\cite{Var07}. 
Many examples of good additive codes for larger values of $n$ are presented 
in~\cite{GK04},~\cite{Var07}, and~\cite{Var09}.

Table~\ref{table:SDCodes} summarizes the results thus far and lists down the resulting 
asymmetric quantum codes for lengths up to $n=30$. The subscripts $_{I}$ and $_{II}$ 
indicates the types of the codes. The superscripts $^{e},^{o},^{b}$ indicate 
the fact that the minimum distance $d$ is extremal, optimal, and best-known 
(not necessarily extremal or optimal), respectively. The number of codes for each 
set of given parameters is listed in the column under the heading \textbf{num}.

\begin{table*}
\caption{Best-Known Additive Self-Dual Codes over $\F_{4}$ for $n \leq 30$ and the Resulting Asymmetric Quantum Codes}
\label{table:SDCodes}
\centering
\begin{tabular}{|| c | c | c | c | l | c | c | c | l || c | c | c | c | l ||}
\hline
\textbf{$n$} & \textbf{$d_{I}$} &\textbf{num$_{I}$} & \textbf{Ref.} & \textbf{Code $Q$} & 
\textbf{$d_{II}$} &\textbf{num$_{II}$} & \textbf{Ref.} & \textbf{Code $Q$} &
\textbf{$n$} & \textbf{$d_{I}$} &\textbf{num$_{I}$} & \textbf{Ref.} & \textbf{Code $Q$} \\
\hline
$2$ & $1^{o}$ & 1 & \cite{DP06} & $[[2,0,1/1]]_{4}$ & $2^{e}$ & 1 & \cite{DP06} & $\mathbf{[[2,0,2/2]]_{4}}$ &
$3$ & $2^{e}$ & 1 & \cite{DP06} & $[[3,0,2/2]]_{4}$ \\

$4$ & $2^{e}$ & 1 & \cite{DP06} & $[[4,0,2/2]]_{4}$ & $2^{e}$ & 2 & \cite{DP06} & $[[4,0,2/2]]_{4}$ &
$5$ & $3^{e}$ & 1 & \cite{DP06} & $[[5,0,3/3]]_{4}$ \\

$6$ & $3^{e}$ & 1 & \cite{DP06} & $[[6,0,3/3]]_{4}$ & $4^{e}$ & 1 & \cite{DP06} & $\mathbf{[[6,0,4/4]]_{4}}$ &
$7$ & $3^{o}$ & 4 & \cite{DP06} & $[[7,0,3/3]]_{4}$ \\

$8$ & $4^{e}$ & 2 & \cite{DP06} & $[[8,0,4/4]]_{4}$ & $4^{e}$ & 3 & \cite{DP06} & $[[8,0,4/4]]_{4}$ &
$9$ & $4^{e}$ & 8 & \cite{DP06} & $[[9,0,4/4]]_{4}$ \\

$10$ & $4^{e}$ & 101 & \cite{DP06} & $[[10,0,4/4]]_{4}$ & $4^{e}$ & 19 & \cite{DP06} & $[[10,0,4/4]]_{4}$ &
$11$ & $5^{e}$ & 1 & \cite{DP06} & $[[11,0,5/5]]_{4}$\\

$12$ & $5^{e}$ & 63 & \cite{DP06} & $[[12,0,5/5]]_{4}$ & $6^{e}$ & 1 & \cite{DP06} & $[[12,0,6/6]]_{4}$ &
$13$ & $5^{o}$ & 85845 & \cite{Var07} & $[[13,0,5/5]]_{4}$ \\

$14$ & $6^{e}$ & 2 & \cite{Var07} & $[[14,0,6/6]]_{4}$ & $6^{e}$ & 1020 & \cite{DP06} & $[[14,0,6/6]]_{4}$ &
$15$ & $6^{e}$ & $\geq 2118$ & \cite{Var07} & $[[15,0,6/6]]_{4}$ \\

$16$ & $6^{e}$ & $\geq 8369$ & \cite{Var07} & $[[16,0,6/6]]_{4}$ & $6^{e}$ & $\geq 112$ & \cite{Var07} & $[[16,0,6/6]]_{4}$ &
$17$ & $7^{e}$ & $\geq 2$ & \cite{Var07} & $[[17,0,7/7]]_{4}$ \\

$18$ & $7^{e}$ & $\geq 2$ & \cite{Var07} & $[[18,0,7/7]]_{4}$ & $8^{e}$ & $\geq 1$ & \cite{Var07} & $[[18,0,8/8]]_{4}$ &
$19$ & $7^{b}$ & $\geq 17$ & \cite{Var07} & $[[19,0,7/7]]_{4}$ \\

$20$ & $8^{e}$ & $\geq 3$ & \cite{GK04} & $[[20,0,8/8]]_{4}$ & $8^{e}$ & $\geq 5$ & \cite{GK04} & $[[20,0,8/8]]_{4}$ &
$21$ & $8^{e}$ & $\geq 2$ & \cite{Var07} & $[[21,0,8/8]]_{4}$ \\

$22$ & $8^{e}$ & $\geq 1$ & \cite{GK04} & $[[22,0,8/8]]_{4}$ & $8^{e}$ & $\geq 67$ & \cite{GK04} & $[[22,0,8/8]]_{4}$ &
$23$ & $8^{b}$ & $\geq 2$ & \cite{GK04} & $[[23,0,8/8]]_{4}$ \\

$24$ & $8^{b}$ & $\geq 5$ & \cite{Var09} & $[[24,0,8/8]]_{4}$ & $8^{b}$ & $\geq 51$ & \cite{GK04} & $[[24,0,8/8]]_{4}$ &
$25$ & $8^{b}$ & $\geq 30$ & \cite{GK04} & $[[25,0,8/8]]_{4}$ \\

$26$ & $8^{b}$ & $\geq 49$ & \cite{Var09} & $[[26,0,8/8]]_{4}$ & $8^{b}$ & $\geq 161$ & \cite{Var09} & $[[26,0,8/8]]_{4}$ &
$27$ & $8^{b}$ & $\geq 15$ & \cite{GK04} & $[[27,0,9/9]]_{4}$ \\

$28$ & $10$ & ? & \cite{GK04} & & $10^{e}$ & $\geq 1$ & \cite{GK04} & $[[28,0,10/10]]_{4}$ &
$29$ & $11^{e}$ & $\geq 1$ & \cite{GK04} & $[[29,0,11/11]]_{4}$ \\

$30$ & $11$ & ? & \cite{GK04} & & $12^{e}$ & $\geq 1$ & \cite{GK04} & $[[30,0,12/12]]_{4}$ &
&  & & & \\
\hline
\end{tabular}
\end{table*}

\begin{rem}\label{rem:5.3}
\begin{enumerate}
 \item The unique additive $(12,2^{12},6)_4$-code is also known as \textit{dodecacode}. 
It is well known that the best Hermitian self-dual linear code is of parameters $[12,6,4]_4$.
 \item In~\cite{Var09}, four so-called \textit{additive circulant graph codes} 
of parameters $(30,2^{30},12)_4$ are constructed without classification. 
It is yet unknown if any of these four codes is inequivalent to the one listed in~\cite{GK04}.
\end{enumerate}
\end{rem}

\section{Construction from Self-Orthogonal Linear Codes}
\label{sec:HSelfOrtho}

It is well known (see~\cite[Th. 1.4.10]{HP03}) that a linear code 
$C$ having the parameters $[n,k,d]_4$ is Hermitian self-orthogonal 
if and only if the weights of its codewords are all even.

\begin{thm}\label{thm:6.1}
If $C$ is a Hermitian self-orthogonal code of parameters 
$[n,k,d]_4$, then there exists an asymmetric quantum code $Q$ 
with parameters $[[n,n-2k,d_z/d_x]]_4$, where
\begin{equation}\label{eq:4.2}
d_x = d_z = d(C^{\perp_{\H}})\text{.}
\end{equation}
\end{thm}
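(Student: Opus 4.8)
The plan is to obtain this as a direct specialization of Theorem~\ref{thm:3.5} to the case $C_1=C_2$. First I would record the consequences of $C$ being $\F_4$-linear: by Remark~\ref{rem:4.1} we have $C^{\perp_{\H}}=C^{\perp_{\tr}}$, so $C^{\perp_{\H}}$ is in particular a classical additive code over $\F_4$, and the Hermitian self-orthogonality hypothesis $C\subseteq C^{\perp_{\H}}$ is the same as $C\subseteq C^{\perp_{\tr}}$. Note also that $C\subseteq C^{\perp_{\H}}$ forces $k\le n-k$, i.e. $k\le n/2$, so that the claimed dimension $n-2k$ is non-negative.

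Next I would apply Theorem~\ref{thm:3.5} with $C_1:=C_2:=C^{\perp_{\H}}$, regarded as additive codes. Since the families in Definition~\ref{def1.1} are closed and $C^{\perp_{\H}}$ is $\F_4$-linear, $C_1^{\perp_{\tr}}=\bigl(C^{\perp_{\tr}}\bigr)^{\perp_{\tr}}=C$, so the nesting condition $C_1^{\perp_{\tr}}=C\subseteq C^{\perp_{\tr}}=C_2$ demanded by Theorem~\ref{thm:3.5} holds, with $d_1=d_2=d(C^{\perp_{\H}})$. Theorem~\ref{thm:3.5} then produces an asymmetric quantum code with parameters $\bigl((n,\ |C_2|/|C_1^{\perp_{\tr}}|,\ d_z/d_x)\bigr)_4$ where $\{d_z,d_x\}=\{d_1,d_2\}$ with $d_1=d_2=d(C^{\perp_{\H}})$, hence $d_z=d_x=d(C^{\perp_{\H}})$. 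Finally I would compute the dimension: $|C_2|=|C^{\perp_{\H}}|=4^{\,n-k}$ and $|C_1^{\perp_{\tr}}|=|C|=4^{\,k}$, so $K=4^{\,n-k}/4^{\,k}=4^{\,n-2k}$ and $k_Q=\log_4 K=n-2k$, yielding the stated $[[n,n-2k,d_z/d_x]]_4$.

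There is no substantive obstacle: the statement is a corollary of Theorem~\ref{thm:3.5}. The only points requiring care are the identification $C^{\perp_{\H}}=C^{\perp_{\tr}}$ for $\F_4$-linear codes, which is what lets Theorem~\ref{thm:3.5} (phrased for additive codes under the trace Hermitian product) apply and makes the cardinalities come out as powers of $4$, and the bookkeeping that the nested pair $C\subseteq C^{\perp_{\H}}$ has quotient of size $4^{\,n-2k}$. Alternatively one could bypass Theorem~\ref{thm:3.5} and invoke Theorem~\ref{thm:3.4} directly, taking the additive code there to be $C$ itself and $V$ a transversal of $C$ in $C^{\perp_{\H}}$; then $d^{\perp_{\tr}}=d(C^{\perp_{\tr}})=d(C^{\perp_{\H}})$ and $d_v\ge d(C^{\perp_{\H}})$, giving the same conclusion with $K=|C^{\perp_{\H}}|/|C|=4^{\,n-2k}$.
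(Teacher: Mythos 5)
Your proposal is correct and follows exactly the paper's own route: apply Theorem~\ref{thm:3.5} with $C_1=C_2=C^{\perp_{\H}}=C^{\perp_{\tr}}$ (via Remark~\ref{rem:4.1}), so that $C_1^{\perp_{\tr}}=C\subseteq C_2$ by self-orthogonality, giving $K=4^{n-2k}$ and $d_z=d_x=d(C^{\perp_{\H}})$. The only difference is that you spell out the cardinality bookkeeping and the closure of the duality, which the paper leaves implicit.
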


\begin{proof}
Seen as an additive code, $C$ is of parameters $(n,2^{2k},d)_4$ 
with $C^{\perp_{\tr}}$ being the code $C^{\perp_{\H}}$ seen as an 
$(n,2^{2n-2k},d^{\perp_{\tr}})$ additive code (see Remark~\ref{rem:4.1}). 
Applying Theorem~\ref{thm:3.5} by taking 
$C_1 = C^{\perp_{\tr}} = C_2$ to satisfy 
$C_1^{\perp_{\tr}} \subseteq C_2$ completes the proof.
\end{proof}

\begin{ex}\label{ex:4.3}
Let $n$ be an even positive integer. Consider the repetition code 
$[n,1,n]_4$ with weight enumerator $1+3Y^n$. Since the weights 
are all even, this $\F_{4}$-linear code is Hermitian self-orthogonal. 
We then have a quantum code $Q$ with parameters $\mathbf{[[n,n-2,2/2]]_{4}}$.
\end{ex}

Table~\ref{table:ClassSO} below presents the resulting asymmetric quantum 
codes based on the classification of self-orthogonal $\F_{4}$-linear 
codes of length up to 29 and of dimensions 3 up to 6 as presented 
in~\cite{BO05}. I. Bouyukliev~\cite{Bou09} shared with us the original data 
used in the said classification plus some additional results for 
lengths 30 and 31.

Given fixed length $n$ and dimension $k$, we only consider $[n,k,d]_4$-codes 
$C$ with maximal possible value for the minimum distances of their duals. 
For example, among 12 self-orthogonal $[10,4,4]_4$-codes, 
there are 4 distinct codes with $d^{\perp_{\H}}=3$ while the remaining 8 
codes have $d^{\perp_{\H}}=2$. We take only the first 4 codes.

The number of distinct codes that can be used for the construction of 
the asymmetric quantum codes for each set of given parameters is listed 
in the fourth column of the table.

\begin{table}
\caption{Asymmetric QECC from Classified Hermitian Self-Orthogonal $\F_{4}$-Linear Codes in~\cite{BO05}}
\label{table:ClassSO}
\centering
\begin{tabular}{| c | l | l | c | l |}
\hline
\textbf{No.} & \textbf{Code $C$} &\textbf{Code $Q$} & \textbf{num} \\
\hline
1 & $[6,3,4]_4$ & $[[6,0,3/3]]_4$ & $1$ \\ 
2 & $[7,3,4]_4$ & $[[7,1,3/3]]_4$ & $1$ \\
3 & $[8,3,4]_4$ & $[[8,2,3/3]]_4$ & $1$ \\
4 & $[8,4,4]_4$ & $[[8,0,4/4]]_4$ & $1$ \\
5 & $[9,3,6]_4$ & $[[9,3,3/3]]_4$ & $1$ \\

6 & $[9,4,4]_4$ & $[[9,1,3/3]]_4$ & $2$ \\
7 & $[10,3,6]_4$ & $[[10,4,3/3]]_4$ & $1$ \\
8 & $[10,4,4]_4$ & $[[10,2,3/3]]_4$ & $3$ \\
9 & $[10,5,4]_4$ & $[[10,0,4/4]]_4$ & $2$ \\
10 & $[11,3,6]_4$ & $[[11,5,3/3]]_4$ & $1$ \\

11 & $[11,4,4]_4$ & $[[11,3,3/3]]_4$ & $3$ \\
12 & $[11,4,6]_4$ & $[[11,3,3/3]]_4$ & $1$ \\
13 & $[11,5,4]_4$ & $[[11,1,3/3]]_4$ & $6$ \\
14 & $[12,3,8]_4$ & $[[12,6,3/3]]_4$ & $1$ \\
15 & $[12,4,6]_4$ & $[[12,4,4/4]]_4$ & $1$ \\

16 & $[12,5,6]_4$ & $[[12,2,4/4]]_4$ & $1$ \\
17 & $[12,6,4]_4$ & $[[12,0,4/4]]_4$ & $5$ \\
18 & $[13,3,8]_4$ & $[[13,7,3/3]]_4$ & $1$ \\
19 & $[13,4,8]_4$ & $[[13,5,3/3]]_4$ & $5$ \\
20 & $[13,5,6]_4$ & $[[13,3,4/4]]_4$ & $1$ \\

21 & $[13,6,6]_4$ & $[[13,1,5/5]]_4$ & $1$ \\
22 & $[14,3,10]_4$ & $[[14,8,3/3]]_4$ & $1$ \\
23 & $[14,4,8]_4$ & $[[14,6,4/4]]_4$ & $1$ \\
24 & $[14,5,8]_4$ & $[[14,4,4/4]]_4$ & $4$ \\
25 & $[14,6,6]_4$ & $[[14,2,5/5]]_4$ & $1$ \\

26 & $[14,7,6]_4$ & $[[14,0,6/6]]_4$ & $1$ \\
27 & $[15,3,10]_4$ & $[[15,9,3/3]]_4$ & $1$ \\
28 & $[15,4,8]_4$ & $[[15,7,3/3]]_4$ & $189$ \\
29 & $[15,5,8]_4$ & $[[15,5,4/4]]_4$ & $26$ \\
30 & $[15,6,8]_4$ & $[[15,3,5/5]]_4$ & $3$ \\

31 & $[16,3,12]_4$ & $[[16,10,3/3]]_4$ & $1$ \\
32 & $[16,4,10]_4$ & $[[16,8,3/3]]_4$ & $38$ \\
33 & $[16,5,8]_4$ & $[[16,6,4/4]]_4$ & $519$ \\
34 & $[16,6,8]_4$ & $[[16,4,4/4]]_4$ & $697$ \\
35 & $[17,3,12]_4$ & $[[17,11,2/2]]_4$ & $4$ \\

36 & $[17,4,12]_4$ & $[[17,9,4/4]]_4$ & $1$ \\
37 & $[17,5,10]_4$ & $[[17,7,4/4]]_4$ & $27$ \\
38 & $[18,3,12]_4$ & $[[18,12,2/2]]_4$ & $45$ \\
39 & $[18,4,12]_4$ & $[[18,10,3/3]]_4$ & $11$ \\
40 & $[18,6,10]_4$ & $[[18,6,5/5]]_4$ & $2$ \\

41 & $[19,3,12]_4$ & $[[19,13,2/2]]_4$ & $185$ \\
42 & $[19,4,12]_4$ & $[[19,11,3/3]]_4$ & $2570$ \\
43 & $[20,3,14]_4$ & $[[20,14,2/2]]_4$ & $10$ \\
44 & $[20,5,12]_4$ & $[[20,10,3/3]]_4$ & $4$ \\
45 & $[21,3,16]_4$ & $[[21,15,3/3]]_4$ & $1$ \\

46 & $[21,4,14]_4$ & $[[21,13,3/3]]_4$ & $212$ \\
47 & $[21,5,12]_4$ & $[[21,11,3/3]]_4$ & $3$ \\
48 & $[22,3,16]_4$ & $[[22,16,3/3]]_4$ & $4$ \\
49 & $[22,5,14]_4$ & $[[22,12,4/4]]_4$ & $67$ \\
50 & $[23,3,16]_4$ & $[[23,17,2/2]]_4$ & $46$ \\

51 & $[23,4,16]_4$ & $[[23,15,3/3]]_4$ & $1$ \\
52 & $[24,3,16]_4$ & $[[24,18,2/2]]_4$ & $614$ \\
53 & $[24,4,16]_4$ & $[[24,16,3/3]]_4$ & $20456$ \\
54 & $[25,3,18]_4$ & $[[25,19,2/2]]_4$ & $6$ \\
55 & $[25,4,16]_4$ & $[[25,17,3/3]]_4$ & $19$ \\

56 & $[26,3,18]_4$ & $[[26,20,2/2]]_4$ & $185$ \\
57 & $[26,4,18]_4$ & $[[26,18,3/3]]_4$ & $14$ \\
58 & $[27,3,20]_4$ & $[[27,21,2/2]]_4$ & $2$ \\
59 & $[28,3,20]_4$ & $[[28,22,2/2]]_4$ & $46$ \\
60 & $[28,4,20]_4$ & $[[28,20,3/3]]_4$ & $1$ \\

61 & $[29,3,20]_4$ & $[[29,23,2/2]]_4$ & $850$ \\
62 & $[29,4,20]_4$ & $[[29,21,3/3]]_4$ & $11365$ \\
63 & $[30,5,20]_4$ & $[[30,20,3/3]]_4$ & $\geq90$ \\
64 & $[31,4,22]_4$ & $[[31,23,3/3]]_4$ & $1$ \\
\hline
\end{tabular}
\end{table}

Comparing some entries in Table~\ref{table:ClassSO}, say, numbers 5 and 6, 
we notice that the $[[9,3,3/3]]_4$-code has better parameters than 
the $[[9,1,3/3]]_4$-code does. Both codes are included in the table 
in the interest of preserving the information on precisely how many of 
such codes there are from the classification result.

In~\cite[Table 7]{GG09}, examples of $\F_{4}$-linear self-dual codes for 
even lengths $2 \leq n \leq 80$ are presented. Table~\ref{table:ClassSD} lists 
down the resulting asymmetric quantum codes for $32 \leq n \leq 80$.
\begin{table}
\caption{Asymmetric QECC from Hermitian Self-Dual $\F_{4}$-Linear Codes 
based on~\cite[Table 7]{GG09} for $32 \leq n \leq 80$}
\label{table:ClassSD}
\centering
\begin{tabular}{| l | l || l | l |}
\hline
\textbf{Code $C$} &\textbf{Code $Q$} & \textbf{Code $C$} &\textbf{Code $Q$}\\
\hline
$[32,16,10]_4$ & $[[32,0,10/10]]_4$ & $[58,29,14]_4$ & $[[58,0,14/14]]_4$\\
$[34,17,10]_4$ & $[[34,0,10/10]]_4$ & $[60,30,16]_4$ & $[[60,0,16/16]]_4$\\
$[36,18,12]_4$ & $[[36,0,12/12]]_4$ & $[62,31,18]_4$ & $[[62,0,18/18]]_4$\\
$[38,19,12]_4$ & $[[38,0,12/12]]_4$ & $[64,32,16]_4$ & $[[64,0,16/16]]_4$\\
$[40,20,12]_4$ & $[[40,0,12/12]]_4$ & $[66,33,16]_4$ & $[[66,0,16/16]]_4$\\

$[42,21,12]_4$ & $[[42,0,12/12]]_4$ & $[68,34,18]_4$ & $[[68,0,18/18]]_4$\\
$[44,22,12]_4$ & $[[44,0,12/12]]_4$ & $[70,35,18]_4$ & $[[70,0,18/18]]_4$\\
$[46,23,14]_4$ & $[[46,0,14/14]]_4$ & $[72,36,18]_4$ & $[[72,0,18/18]]_4$\\
$[48,24,14]_4$ & $[[48,0,14/14]]_4$ & $[74,37,18]_4$ & $[[74,0,18/18]]_4$\\
$[50,25,14]_4$ & $[[50,0,14/14]]_4$ & $[76,38,18]_4$ & $[[76,0,18/18]]_4$\\

$[52,26,14]_4$ & $[[52,0,14/14]]_4$ & $[78,39,18]_4$ & $[[78,0,18/18]]_4$\\
$[54,27,16]_4$ & $[[54,0,16/16]]_4$ & $[80,40,20]_4$ & $[[80,0,20/20]]_4$\\
$[56,28,14]_4$ & $[[56,0,14/14]]_4$ & & \\
\hline
\end{tabular}
\end{table}

For parameters other than those listed in Table~\ref{table:ClassSO}, 
we do not have complete classification just yet. The Q-extension 
program described in~\cite{Bou07} can be used to extend the 
classification effort given sufficient resources. 
Some classifications based on the optimality of the minimum 
distances of the codes can be found in~\cite{BGV04} and 
in~\cite{ZLL09}, although when used in the construction of 
asymmetric quantum codes using our framework, they do not yield 
good $d_{z} = d_{x}$ relative to the length $n$.

Many other $\F_{4}$-linear self-orthogonal codes are known. 
Examples can be found in~\cite[Table II]{CRSS98}, 
\cite{MLZF07}, as well as from the list of best known linear codes 
(BKLC) over $\F_4$ as explained in~\cite{Gr09}.

Table~\ref{table:RandomSO} presents more 
examples of asymmetric quantum codes constructed based on known 
self-orthogonal linear codes up to length $n=40$. The list of codes in 
Table~\ref{table:RandomSO} is by no means exhaustive. It may be 
possible to find asymmetric codes with better parameters.

\begin{table}
\caption{Asymmetric QECC from Hermitian Self-orthogonal $\F_{4}$-Linear Codes for $n \leq 40$}
\label{table:RandomSO}
\centering
\begin{tabular}{| c | l | l | l |}
\hline
\textbf{No.} & \textbf{Code $C$} & \textbf{Code $Q$} & \textbf{Ref.} \\
\hline
1 & $[5,2,4]_4$ & $\mathbf{[[5,1,3/3]]_4}$ & \cite[BKLC]{Gr09}\\
2 & $[6,2,2]_4$ & $[[6,2,2/2]]_4$ & \cite{Bou09}\\
3 & $[8,2,6]_4$ & $[[8,4,2/2]]_4$ & \cite[BKLC]{Gr09}\\
4 & $[10,2,8]_4$ & $[[10,6,2/2]]_4$ & \cite[BKLC]{Gr09}\\
5 & $[14,7,6]_4$ & $[[14,0,6/6]]_4$  & \cite[Table 7]{GG09}\\

6 & $[15,2,12]_4$ & $[[15,11,2/2]]_4$ & \cite[BKLC]{Gr09}\\
7 & $[16,8,6]_4$ & $[[16,0,6/6]]_4$ & \cite[Table 7]{GG09}\\
8 & $[20,2,16]_4$ & $[[20,16,2/2]]_4$ & \cite[BKLC]{Gr09}\\
9 & $[20,5,12]_4$ & $[[20,10,4/4]]_4$ & \cite[Table II]{CRSS98}\\
10 & $[20,9,8]_4$ & $[[20,2,6/6]]_4$  & \cite[p.~788]{MLZF07}\\

11 & $[20,10,8]_4$ & $[[20,0,8/8]]_4$  & \cite[Table 7]{GG09}\\
12 & $[22,8,10]_4$ & $[[22,6,5/5]]_4$ & \cite{LM09}\\
13 & $[22,10,8]_4$ & $[[22,2,6/6]]_4$ & \cite{LM09}\\
14 & $[22,11,8]_4$ & $[[22,0,8/8]]_4$  & \cite[Table 7]{GG09}\\
15 & $[23,8,10]_4$ & $[[23,7,5/5]]_4$ & \cite{LM09}\\

16 & $[23,8,12]_4$ & $[[23,7,5/5]]_4$ & \cite[BKLC]{Gr09}\\ 
17 & $[23,10,8]_4$ & $[[23,3,6/6]]_4$ & \cite{LM09}\\
18 & $[24,5,16]_4$ & $[[24,14,3/3]]_4$ & \cite[BKLC]{Gr09}\\
19 & $[24,8,10]_4$ & $[[24,8,5/5]]_4$ & \cite{LM09}\\
20 & $[24,9,12]_4$ & $[[24,6,6/6]]_4$ & \cite[BKLC]{Gr09}\\

21 & $[24,12,8]_4$ & $[[24,0,8/8]]_4$  & \cite[Table 7]{GG09}\\
22 & $[25,2,20]_4$ & $[[25,21,2/2]]_4$ & \cite[BKLC]{Gr09}\\
23 & $[25,5,16]_4$ & $[[25,15,4/4]]_4$ & \cite[Table II]{CRSS98}\\
24 & $[25,8,10]_4$ & $[[25,9,5/5]]_4$ & \cite{LM09}\\
25 & $[25,10,12]_4$ & $[[25,5,7/7]]_4$ & \cite{LM09}\\

26 & $[26,2,20]_4$ & $[[26,22,2/2]]_4$ & \cite[BKLC]{Gr09}\\
27 & $[26,6,16]_4$ & $[[26,14,4/4]]_4$ & \cite[BKLC]{Gr09}\\
28 & $[26,9,10]_4$ & $[[26,8,5/5]]_4$ & \cite{LM09}\\
29 & $[26,10,10]_4$ & $[[26,6,6/6]]_4$ & \cite{LM09}\\
30 & $[26,11,12]_4$ & $[[26,4,8/8]]_4$  & \cite[BKLC]{Gr09}\\

31 & $[27,6,16]_4$ & $[[27,15,3/3]]_4$ & \cite[BKLC]{Gr09}\\
32 & $[27,9,10]_4$ & $[[27,9,5/5]]_4$ & \cite{LM09}\\
33 & $[27,10,10]_4$ & $[[27,7,6/6]]_4$ & \cite{LM09}\\
34 & $[27,12,12]_4$ & $[[27,3,9/9]]_4$ & \cite[BKLC]{Gr09}\\
35 & $[28,7,16]_4$ & $[[28,14,5/5]]_4$ & \cite[Table II]{CRSS98}\\

36 & $[28,8,12]_4$ & $[[28,12,5/5]]_4$ & \cite{LM09}\\
37 & $[28,10,10]_4$ & $[[28,8,6/6]]_4$ & \cite{LM09}\\
38 & $[28,13,12]_4$ & $[[28,2,10/10]]_4$ & \cite[BKLC]{Gr09}\\
39 & $[29,8,16]_4$ & $[[29,13,5/5]]_4$ & \cite[BKLC]{Gr09}\\
40 & $[29,11,10]_4$ & $[[29,7,6/6]]_4$ & \cite{LM09}\\

41 & $[29,14,12]_4$ & $[[29,1,11/11]]_4$ & \cite[BKLC]{Gr09}\\
42 & $[30,2,24]_4$ & $[[30,26,2/2]]_4$ & \cite[BKLC]{Gr09}\\
43 & $[30,5,20]_4$ & $[[30,20,4/4]]_4$ & \cite[Table 13.2]{NRS06}\\
44 & $[30,9,12]_4$ & $[[30,12,5/5]]_4$ & \cite{LM09}\\
45 & $[30,11,10]_4$ & $[[30,8,6/6]]_4$ & \cite{LM09}\\

46 & $[30,15,12]_4$ & $[[30,0,12/12]]_4$ & \cite[Table 7]{GG09}\\
47 & $[31,9,16]_4$ & $[[31,13,6/6]]_4$ & \cite[BKLC]{Gr09}\\
48 & $[32,6,20]_4$ & $[[32,20,4/4]]_4$ & \cite[Table 3]{GO98}\\
49 & $[32,9,14]_4$ & $[[32,14,5/5]]_4$ & \cite{LM09}\\
50 & $[32,11,12]_4$ & $[[32,10,6/6]]_4$ & \cite{LM09}\\

51 & $[33,7,20]_4$ & $[[33,19,4/4]]_4$ & \cite[BKLC]{Gr09}\\
52 & $[33,9,14]_4$ & $[[33,15,5/5]]_4$ & \cite{LM09}\\
53 & $[33,10,16]_4$ & $[[33,13,6/6]]_4$ & \cite[BKLC]{Gr09}\\
54 & $[33,12,14]_4$ & $[[33,9,7/7]]_4$ & \cite[BKLC]{Gr09}\\
55 & $[33,15,12]_4$ & $[[33,3,9/9]]_4$ & \cite[BKLC]{Gr09}\\

56 & $[34,9,18]_4$ & $[[34,16,6/6]]_4$ & \cite{LM09}\\
57 & $[34,13,14]_4$ & $[[34,8,8/8]]_4$ & \cite[BKLC]{Gr09}\\
58 & $[34,16,12]_4$ & $[[34,2,10/10]]_4$ & \cite[BKLC]{Gr09}\\
59 & $[35,5,24]_4$ & $[[35,25,3/3]]_4$ & \cite[BKLC]{Gr09}\\
60 & $[35,8,20]_4$ & $[[35,19,5/5]]_4$ & \cite[BKLC]{Gr09}\\

61 & $[35,11,14]_4$ & $[[35,13,6/6]]_4$ & \cite{LM09}\\
62 & $[35,17,12]_4$ & $[[35,1,11/11]]_4$ & \cite[BKLC]{Gr09}\\
63 & $[36,9,16]_4$ & $[[36,18,4/4]]_4$ & \cite{LM09}\\
64 & $[36,11,14]_4$ & $[[36,14,6/6]]_4$ & \cite{LM09}\\
65 & $[37,9,20]_4$ & $[[37,19,5/5]]_4$ & \cite[BKLC]{Gr09}\\

66 & $[37,18,12]_4$ & $[[37,1,11/11]]_4$ & \cite[BKLC]{Gr09}\\
67 & $[38,6,24]_4$ & $[[38,26,3/3]]_4$ & \cite[BKLC]{Gr09}\\
68 & $[38,11,18]_4$ & $[[38,16,6/6]]_4$ & \cite[BKLC]{Gr09}\\
69 & $[39,12,18]_4$ & $[[39,15,7/7]]_4$ & \cite[BKLC]{Gr09}\\
70 & $[39,4,28]_4$ & $[[39,31,3/3]]_4$ & \cite[BKLC]{Gr09}\\

71 & $[39,7,24]_4$ & $[[39,25,4/4]]_4$ & \cite[BKLC]{Gr09}\\
72 & $[40,5,28]_4$ & $[[40,30,4/4]]_4$ & \cite[Table II]{CRSS98}\\
73 & $[40,15,16]_4$ & $[[40,10,7/7]]_4$ & \cite[BKLC]{Gr09}\\
\hline 
\end{tabular}
\end{table}

For lengths larger than $n=40$,~\cite{GO98} provides some known 
$\F_{4}$-linear codes of dimension 6 that belong to the class of 
\textit{quasi-twisted codes}. 
Based on the weight distribution of these codes~\cite[Table 3]{GO98}, 
we know which ones of them are self-orthogonal. Applying 
Theorem~\ref{thm:3.5} to them yields the 12 quantum codes listed in 
Table~\ref{table:ClassQT}.

\begin{table}
\caption{Asymmetric QECC from Hermitian Self-orthogonal Quasi-Twisted Codes Found in~\cite{GO98}}
\label{table:ClassQT}
\centering
\begin{tabular}{| l | l || l | l |}
\hline
\textbf{Code $C$} & \textbf{Code $Q$} & 
\textbf{Code $C$} & \textbf{Code $Q$}\\
\hline
$[48,6,32]_4$   & $[[48,36,3/3]]_4$   &  $[144,6,104]_4$ & $[[144,132,3/3]]_4$\\
$[78,6,56]_4$   & $[[78,66,4/4]]_4$   &  $[150,6,108]_4$ & $[[150,138,3/3]]_4$\\
$[102,6,72]_4$  & $[[102,90,3/3]]_4$  &  $[160,6,116]_4$ & $[[160,148,3/3]]_4$\\
$[112,6,80]_4$  & $[[112,100,3/3]]_4$ &  $[182,6,132]_4$ & $[[182,170,3/3]]_4$\\
$[120,6,86]_4$  & $[[120,108,4/4]]_4$ &  $[192,6,138]_4$ & $[[192,180,3/3]]_4$\\
$[132,6,94]_4$  & $[[132,120,3/3]]_4$ &  $[200,6,144]_4$ & $[[200,188,3/3]]_4$\\
\hline 
\end{tabular}
\end{table}

Another family of codes that we can use is the 
\textit{MacDonald codes}, commonly denoted by $C_{k,u}$ 
with $k>u>0$. The MacDonald codes are linear codes with parameters 
$[(q^{k}-q^{u})/(q-1),k,q^{k-1}-q^{u-1}]_{q}$. Some 
historical background and a construction of their generator 
matrices can be found in~\cite{BD03}. It is known that these codes 
are \textit{two-weight codes}. That is, they have nonzero 
codewords of only two possible weights. In~\cite[Figures 1a and 2a]{CaKa86}, 
the MacDonald codes are labeled SU1. There are $q^{k}-q^{k-u}$ 
codewords of weight $q^{k-1}-q^{u-1}$ and $q^{k-u}-1$ 
codewords of weight $q^{k-1}$.

The MacDonald codes satisfy the equality of the 
\textit{Griesmer bound} which says that, for any $[n,k\geq1,d]_q$-code,
\begin{equation}\label{eq:4.3}
 n \geq \sum_{i=0}^{k-1}\left \lceil \frac{d}{q^{i}} \right \rceil \text{.}
\end{equation}

\begin{ex}\label{ex:4.4}
For $q=4,k>u>1$, the MacDonald codes are self-orthogonal 
since both $4^{k-1}-4^{u-1}$ and 
$4^{k-1}$ are even. For a 
$[(4^{k}-4^{u})/3,k,4^{k-1}-4^{u-1}]_{4}$-code 
$C_{k,u}$, we know (see~\cite[Lemma 4]{BD03}) that 
$d(C_{k,u}^{\perp_{\H}}) \geq 3$. Using $C_{k,u}$ and 
applying Theorem~\ref{thm:6.1}, we get an asymmetric quantum code $Q$ 
with parameters
\begin{equation*}
[[(4^{k}-4^{u})/3,((4^{k}-4^{u})/3)-2k,(\geq 3)/(\geq 3)]]_4 \text{.}
\end{equation*}

For $k\leq 5$ we have the following more explicit examples. 
The weight enumerator is written in an abbreviated form. 
For instance, $(0,1),(12,60),(16,3)$ means that the corresponding 
code has 1 codeword of weight 0, 60 codewords of weight 12 and 
3 codewords of weight 16.

\begin{enumerate}
\item For $k=3,u=2$, we have the $[16,3,12]_4$-code with 
weight enumerator $(0,1),(12,60),(16,3)$. The resulting 
asymmetric QECC is a $[[16,10,3/3]]_4$-code. This code 
is listed as number 31 in Table~\ref{table:ClassSO}.
\item For $k=4,u=3$, we have the $[64,4,48]_4$-code with 
weight enumerator $(0,1),(48,252),(64,3)$. The resulting 
asymmetric QECC is a $[[64,56,3/3]]_4$-code.
\item For $k=4,u=2$, we have the $[80,4,60]_4$-code with 
weight enumerator $(0,1),(60,240),(64,15)$. The resulting 
asymmetric QECC is a $[[80,72,3/3]]_4$-code.
\item For $k=5,u=4$, we have the $[256,5,192]_4$-code with 
weight enumerator $(0,1),(192,1020),(256,3)$. The resulting 
asymmetric QECC is a $[[256,246,3/3]]_4$-code.
\item For $k=5,u=3$, we have the $[320,5,240]_4$-code with 
weight enumerator $(0,1),(240,1008),(256,15)$. The resulting 
asymmetric QECC is a $[[320,310,3/3]]_4$-code.
\item For $k=5,u=2$, we have the $[336,5,252]_4$-code with 
weight enumerator $(0,1),(252,960),(256,63)$. The resulting 
asymmetric QECC is a $[[336,326,3/3]]_4$-code.
\end{enumerate}
\end{ex}

\section{Construction from Nested Linear Cyclic Codes}\label{sec:Cyclic}
The asymmetric quantum codes that we have constructed so far have 
$d_{z} = d_{x}$. From this section onward, we construct asymmetric 
quantum codes with $d_{z} \geq d_{x}$. In most cases, $d_{z} > d_{x}$.

It is well established that, under the natural correspondence 
of vectors and polynomials, the study of cyclic codes in 
$\F_q^n$ is equivalent to the study of ideals in the residue class ring
\begin{equation*}
 \mathcal{R}_n = \F_q[x]/(x^n-1) \text{.}
\end{equation*}

The study of ideals in $\mathcal{R}_n$ depends on factoring $x^n-1$. 
Basic results concerning and the properties of cyclic codes can be 
found in~\cite[Ch. 4]{HP03} or~\cite[Ch. 7]{MS77}. A cyclic code $C$ 
is a subset of a cyclic code $D$ of equal length over $\F_q$ 
if and only if the generator polynomial of $D$ divides the generator 
polynomial of $C$. Both polynomials divide $x^n-1$. Once the 
factorization of $x^n-1$ into irreducible polynomials is known, 
the nestedness property becomes apparent.

We further require that $n$ be relatively prime to $4$ 
to exclude the so-called repeated-root cases since the resulting 
cyclic codes when $n$ is not relatively prime to $4$ have 
inferior parameters. See~\cite[p. 976]{Cha98} for comments 
and references regarding this matter.

\begin{thm}\label{thm:7.1}
Let $C$ and $D$ be cyclic codes of parameters $[n,k_1,d_1]_4$ 
and $[n,k_2,d_2]_4$, respectively, with $C \subseteq D$, 
then there exists an asymmetric quantum code $Q$ 
with parameters $[[n,k_2-k_1,d_z/d_x]]_4$, where
\begin{equation}\label{eq:7.1}
\left\lbrace d_z,d_x \right\rbrace = 
\left\lbrace d(C^{\perp_{\H}}),d_2 \right\rbrace \text{.}
\end{equation}
\end{thm}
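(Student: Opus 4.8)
The plan is to derive Theorem~\ref{thm:7.1} as a direct specialization of Theorem~\ref{thm:3.5}. First I would recall that every $\F_4$-linear code is in particular a classical additive code (an $\F_2$-module), so $C$ and $D$ qualify as the additive codes $C_1, C_2$ needed to invoke Theorem~\ref{thm:3.5}. The one hypothesis of Theorem~\ref{thm:3.5} that must be checked is the nesting condition $C_1^{\perp_{\tr}} \subseteq C_2$. Here the natural choice is $C_1 = D^{\perp_{\H}}$ and $C_2 = D$ — no, more carefully: I want the resulting code to have dimension $k_2 - k_1$, so I should pick the two additive codes so that $|C_2|/|C_1^{\perp_{\tr}}| = 4^{k_2-k_1} = 2^{2(k_2-k_1)}$. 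Taking $C_1 = C^{\perp_{\H}}$ and $C_2 = D$ does the job: by Remark~\ref{rem:4.1}, for $\F_4$-linear codes the Hermitian and trace Hermitian duals coincide, so $C_1^{\perp_{\tr}} = C_1^{\perp_{\H}} = (C^{\perp_{\H}})^{\perp_{\H}} = C$, and the hypothesis $C_1^{\perp_{\tr}} \subseteq C_2$ becomes exactly $C \subseteq D$, which is given.

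Next I would track the parameters through Theorem~\ref{thm:3.5}. With $C_1 = C^{\perp_{\H}}$ we have, as additive codes, $|C_1| = |C^{\perp_{\H}}| = 4^{n-k_1} = 2^{2n-2k_1}$ and $|C_1^{\perp_{\tr}}| = |C| = 4^{k_1} = 2^{2k_1}$; while $|C_2| = |D| = 4^{k_2} = 2^{2k_2}$. Hence $K = |C_2|/|C_1^{\perp_{\tr}}| = 2^{2k_2 - 2k_1} = 4^{k_2-k_1}$, so $\log_4 K = k_2 - k_1$, giving the claimed $[[n, k_2-k_1, d_z/d_x]]_4$. For the distances, Theorem~\ref{thm:3.5} gives $\{d_z, d_x\} = \{d_1', d_2'\}$ where $d_1' = d(C_1)$ and $d_2' = d(C_2)$ are the minimum distances of the two additive codes in that statement. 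Here $d(C_1) = d(C^{\perp_{\H}})$ and $d(C_2) = d(D) = d_2$, which matches Equation~(\ref{eq:7.1}). One should double-check against the proof of Theorem~\ref{thm:3.5}: there $d^{\perp_{\tr}} = d((C_1^{\perp_{\tr}})^{\perp_{\tr}}) = d(C_1)$ plays the role of the "$d_z$-type" bound and $d_v \ge d(C_2)$ the "$d_x$-type" bound, and by Proposition~\ref{prop:3.2} the two roles are interchangeable, so the unordered pair $\{d(C^{\perp_{\H}}), d_2\}$ is exactly what we get.

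I do not expect a serious obstacle here; the content is entirely in correctly invoking the prior results, and the only subtle point is the identification $C_1^{\perp_{\tr}} = C$ which rests on Remark~\ref{rem:4.1} (equivalently on \cite[Th.~3]{CRSS98}). The mild bookkeeping nuisance is keeping straight which of $C, D$ is being dualized and making sure the exponents $2k_1, 2k_2, 2n-2k_1$ combine to the stated dimension $k_2 - k_1$ rather than something off by a factor. A remark worth including in the writeup: the hypothesis that $n$ be coprime to $4$ and that $C \subseteq D$ are cyclic plays no role in the \emph{proof} of the theorem itself — it only guarantees, via the generator-polynomial divisibility criterion recalled just before the statement, that such nested pairs are abundant and have good parameters; the quantum construction goes through for any nested pair of $\F_4$-linear codes.

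\begin{proof}
Regard $C$ and $D$ as classical additive codes over $\F_4$, i.e.\ as $\F_2$-modules. Set $C_1 := C^{\perp_{\H}}$ and $C_2 := D$. By Remark~\ref{rem:4.1}, since $C$ is $\F_4$-linear we have $C^{\perp_{\H}} = C^{\perp_{\tr}}$, and because the family of additive codes is closed,
\begin{equation*}
C_1^{\perp_{\tr}} = (C^{\perp_{\H}})^{\perp_{\tr}} = (C^{\perp_{\tr}})^{\perp_{\tr}} = C \subseteq D = C_2 \text{.}
\end{equation*}
Thus the hypothesis of Theorem~\ref{thm:3.5} is satisfied. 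As additive codes, $|C_2| = |D| = 4^{k_2}$ and $|C_1^{\perp_{\tr}}| = |C| = 4^{k_1}$, so
\begin{equation*}
K := \frac{|C_2|}{|C_1^{\perp_{\tr}}|} = \frac{4^{k_2}}{4^{k_1}} = 4^{k_2-k_1} \text{,}
\end{equation*}
whence $\log_4 K = k_2 - k_1$. Moreover $d(C_1) = d(C^{\perp_{\H}})$ and $d(C_2) = d_2$. Applying Theorem~\ref{thm:3.5} yields an asymmetric quantum code $Q$ with parameters $[[n, k_2-k_1, d_z/d_x]]_4$ where $\{d_z, d_x\} = \{d(C_1), d(C_2)\} = \{d(C^{\perp_{\H}}), d_2\}$, as claimed.
\end{proof}
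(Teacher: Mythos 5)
Your proof is correct and follows exactly the paper's own route: the paper likewise applies Theorem~\ref{thm:3.5} with $C_1 = C^{\perp_{\tr}}$ (which equals your $C^{\perp_{\H}}$ by Remark~\ref{rem:4.1}) and $C_2 = D$, viewing the linear codes as additive codes of sizes $2^{2k_1}$ and $2^{2k_2}$. Your version merely spells out the dimension and distance bookkeeping that the paper leaves as ``simple calculations.''
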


\begin{proof}
Apply Theorem~\ref{thm:3.5} by taking $C_1 = C^{\perp_{\tr}}$ 
and $C_2 = D$. Since $C$ is an $[n,k_1,d_1]_4$ code, $C$ 
is an additive code of parameters $(n,2^{2k_1},d_1)_4$. 
Similarly, $D$ is an additive code of parameters $(n,2^{2k_2},d_2)_4$. 
The values for $d_z$ and $d_x$ can be verified by simple calculations.
\end{proof}

\begin{ex}\label{ex:7.2}
Let $C$ be the repetition $[n,1,n]_4$-code generated by 
the polynomial $(x^{n}+1)/(x+1)$. If we take $C=D$ in Theorem~\ref{thm:7.1}, 
then we get a quantum code $Q$ with parameters $\mathbf{[[n,0,n/2]]_4}$.
\end{ex}

Tables~\ref{table:Cyclic} and~\ref{table:Cyclic2} list examples 
of asymmetric quantum codes constructed from nested cyclic codes 
up to $n=25$. We exclude the case $C=D$ since the parameters of the 
resulting quantum code $Q$ are $[[n,0,d(C)/2]]_{4}$ which are 
never better than those of the code $Q$ in Example~\ref{ex:7.2}. 
Among the resulting codes $Q$ of equal length and dimension, 
we choose one with the largest $d_z,d_x$ values. For codes 
$Q$ with equal length and distances, we choose one 
with the largest dimension.

%
\begin{table*}
\caption{Asymmetric QECC from Nested Cyclic Codes}
\label{table:Cyclic}
\centering
\begin{tabular}{| c | l | l | l |}
\hline
\textbf{No.} & \textbf{Codes $C$ and $D$} & \textbf{Generator Polynomials of $C$ and of $D$} & \textbf{Code $Q$} \\
\hline
1 & $[3,1,3]_4$ & $(x+1)(x+\omega)$ & $\mathbf{[[3,1,2/2]]_4}$ \\ 
  & $[3,2,2]_4$ & $(x+1)$ & \\
2 & $[5,1,5]_4$ & $(x^2+\omega x+1)(x^2+\omega^2 x+1)$ & $\mathbf{[[5,2,3/2]]_{4}}$ \\
  & $[5,3,3]_4$ & $(x^2+\omega^2 x+1)$ & \\
3 & $[7,1,7]_4$ & $(x^3+x+1)(x^3+x^2+1)$ & $[[7,3,3/2]]_4$ \\ 
  & $[7,4,3]_4$ & $(x^3+x+1)$ & \\
4 & $[7,3,4]_4$ & $(x^3+x+1)(x+1)$ & $[[7,1,3/3]]_4$ \\ 
  & $[7,4,3]_4$ & $(x^3+x+1)$ & \\

5 & $[9,1,9]_4$ & $(x+\omega)(x+\omega^2)(x^3+\omega)(x^3+\omega^2)$ & $[[9,1,6/2]]_4$ \\ 
  & $[9,2,6]_4$ & $(x+\omega^2)(x^3+\omega)(x^3+\omega^2)$ & \\
6 & $[9,1,9]_4$ & $(x+\omega)(x+\omega^2)(x^3+\omega)(x^3+\omega^2)$ & $[[9,4,3/2]]_4$ \\ 
  & $[9,5,3]_4$ & $(x+\omega)(x^3+\omega)$ & \\
7 & $[9,1,9]_4$ & $(x+\omega)(x+\omega^2)(x^3+\omega)(x^3+\omega^2)$ & $\mathbf{[[9,7,2/2]]_4}$ \\ 
  & $[9,8,2]_4$ & $(x+\omega)$ & \\

8 & $[11,5,6]_4$ & $(x+1)(x^5+\omega^2 x^4+x^3+x^2+\omega x+1)$ & $[[11,1,5/5]]_4$ \\ 
  & $[11,6,5]_4$ & $(x^5+\omega^2 x^4+x^3+x^2+\omega x+1)$ & \\
9 & $[11,1,11]_4$ & $(x^{11}+1)/(x+1)$ & $[[11,5,5/2]]_4$ \\ 
  & $[11,6,5]_4$ & $(x^5+\omega x^4+x^3+x^2+\omega^2 x+1)$ & \\

10 & $[13,6,6]_4$ & $(x+1)(x^6 +\omega x^5 +\omega^2 x^3 +\omega x +1)$ & $[[13,1,5/5]]_4$ \\ 
  & $[13,7,5]_4$ & $(x^6 +\omega x^5 +\omega^2 x^3 +\omega x +1)$ & \\
11 & $[13,1,13]_4$ & $(x^{13}+1)/(x+1)$ & $[[13,6,5/2]]_4$ \\ 
  & $[13,7,5]_4$ & $(x^6 +\omega x^5 +\omega^2 x^3 +\omega x +1)$ & \\

12 & $[15,3,11]_4$ & $(x^{15}+1)/((x+1)(x^2 + \omega^2 x + \omega^2))$ & $[[15,1,9/3]]_4$ \\ 
  & $[15,4,9]_4$ & $(x^{15}+1)/((x+1)(x+\omega)(x^2 + \omega^2 x + \omega^2))$ & \\
13 & $[15,6,8]_4$ & $(x^9 +\omega x^8 +x^7+x^5 +\omega x^4 + \omega^2 x^2 +\omega^2x + 1)$ & $[[15,1,7/5]]_4$ \\ 
  & $[15,7,7]_4$ & $(x^8 +\omega^2 x^7 +\omega x^6 +\omega x^5 +\omega^2 x^4+x^3+x^2+\omega x + 1)$ & \\
14 & $[15,7,7]_4$ & $(x^8 +\omega^2 x^7 +\omega x^6 +\omega x^5 +\omega^2 x^4+x^3+x^2+\omega x + 1)$ & $[[15,1,6/6]]_4$ \\ 
   & $[15,8,6]_4$ & $(x^7 + x^6+\omega x^4+x^2+\omega^2 x +\omega^2)$ & \\
15 & $[15,1,15]_4$ & $(x^{15}+1)/(x+1)$ & $[[15,2,11/2]]_4$ \\ 
  & $[15,3,11]_4$ & $(x^{15}+1)/((x+1)(x^2+\omega^2 x+\omega^2))$ & \\

16 & $[15,3,11]_4$ & $(x^{15}+1)/((x+1)(x^2+\omega^2 x+\omega^2))$ & $[[15,2,8/3]]_4$ \\ 
  & $[15,5,8]_4$ & $(x^{15}+1)/((x+1)(x^2+\omega^2 x+\omega^2)(x^2+\omega^2 x+1))$ & \\
17 & $[15,6,8]_4$ & $(x^{15}+1)/((x^2+\omega x+\omega)(x^2+\omega^2 x+\omega^2)(x^2+\omega^2 x+1))$ & $[[15,2,6/5]]_4$ \\ 
  & $[15,8,6]_4$ & $(x^7 + x^6+\omega x^4+x^2+\omega^2 x +\omega^2)$ & \\
18 & $[15,1,15]_4$ & $(x^{15}+1)/(x+1)$ & $[[15,3,9/2]]_4$ \\ 
  & $[15,4,9]_4$ & $(x^{15}+1)/((x+1)(x+\omega)(x^2+\omega^2 x+\omega^2))$ & \\
19 &$[15,8,6]_4$ &  $x^{7} + \omega x^{6} + \omega^2 x^{4} + \omega^2 x^{2} + \omega x + \omega^2$ & $[[15,4,7/3]]_4$ \\ 
   &$[15,12,3]_4$& $x^{3} + x^{2} + \omega^{2}$ & \\
20 & $[15,1,15]_4$ & $(x^{15}+1)/(x+1)$ & $[[15,4,8/2]]_4$ \\ 
  & $[15,5,8]_4$ & $(x^{15}+1)/((x+1)(x^2+\omega^2 x+\omega^2)(x^2+\omega^2 x+1))$ & \\

21 &$[15,4,10]_4$& $(x^{15}+1)/(x^4 + \omega^2 x^3 + \omega x^2 + \omega x + w)$ & $[[15,5,5/4]]_4$ \\ 
  &$[15,9,5]_4$&  $x^6 + \omega^2 x^5 + \omega^2 x^4 + x^3 + x^2 + \omega x + 1$ & \\
22 & $[15,1,15]_4$ & $(x^{15}+1)/(x+1)$ & $[[15,6,7/2]]_4$ \\ 
  & $[15,7,7]_4$ & $(x^8 +\omega^2 x^7 +\omega x^6 +\omega x^5 +\omega^2 x^4+x^3+x^2+\omega x + 1)$ & \\
23 & $[15,3,11]_4$ & $(x^{15}+1)/((x+1)(x^2 + \omega^2 x + \omega^2))$ & $[[15,6,5/3]]_4$ \\ 
  & $[15,9,5]_4$ & $(x^6 + \omega x^5 + x^4 + x^3 + \omega^2 x^2 + \omega^2 x + 1)$ & \\
24 & $[15,1,15]_4$ & $(x^{15}+1)/(x+1)$ & $[[15,7,6/2]]_4$ \\ 
  & $[15,8,6]_4$ & $(x^7 + x^6+\omega x^4+x^2+\omega^2 x +\omega^2)$ & \\
25 & $[15,1,15]_4$ & $(x^{15}+1)/(x+1)$ & $[[15,8,5/2]]_4$ \\ 
  & $[15,9,5]_4$ & $(x^6 + \omega x^5 + x^4 + x^3 + \omega^2 x^2 + \omega^2 x + 1)$ & \\

26 &$[15,3,11]_4$& $(x^{15}+1)/(x^3 + \omega^2 x^2 + \omega^2)$ & $[[15,9,3/3]]_4$ \\ 
  &$[15,12,3]_4$& $x^3 + x^2 + \omega^2$ & \\
27 &$[15,1,15]_4$& $(x^{15}+1)/(x+1)$ & $[[15,11,3/2]]_4$ \\ 
   &$[15,12,3]_4$& $x^3 + x^2 + \omega^2$ & \\
28 & $[15,1,15]_4$ & $(x^{15}+1)/(x+1)$ & $\mathbf{[[15,13,2/2]]_4}$ \\ 
  & $[15,14,2]_4$ & $(x+ \omega)$ & \\
29 & $[17,12,4]_4$ & $(x^5 + \omega x^3 + \omega x^2 + 1)$ & $[[17,1,9/4]]_4$ \\ 
  & $[17,13,4]_4$ & $(x^4 + x^3 + \omega^2 x^2 + x + 1)$ & \\
30 & $[17,8,8]_4$ & $(x^9 +\omega x^8 +\omega^2 x^7 +\omega^2 x^6 +\omega^2 x^3 +\omega^2 x^2 +\omega x + 1)$ & $[[17,1,7/7]]_4$ \\ 
  & $[17,9,7]_4$ & $(x^8 +\omega^2 x^7 +\omega^2 x^5 +\omega^2 x^4 +\omega^2 x^3 +\omega^2 x + 1)$ & \\

31 & $[17,1,17]_4$ & $(x^{17}+1)/(x+1)$ & $[[17,4,9/2]]_4$ \\ 
  & $[17,5,9]_4$ & $(x^{17}+1)/(x^5 +\omega^2 x^4 +\omega^2 x^3 +\omega^2 x^2 +\omega^2 x + 1)$ & \\
32 & $[17,4,12]_4$ & $(x^{17}+1)/(x^4+x^3+\omega x^2+x+1)$ & $[[17,4,8/4]]_4$ \\ 
  & $[17,8,8]_4$ & $(x^9 +\omega x^8 +\omega^2 x^7 +\omega^2 x^6 +\omega^2 x^3 +\omega^2 x^2 +\omega x + 1)$ & \\
33 & $[17,4,12]_4$ & $(x^{17}+1)/(x^4+x^3+\omega x^2+x+1)$ & $[[17,5,7/4]]_4$ \\ 
  & $[17,9,7]_4$ & $(x^8 +\omega^2 x^7 +\omega^2 x^5 +\omega^2 x^4 +\omega^2 x^3 +\omega^2 x + 1)$ & \\
34 & $[17,1,17]_4$ & $(x^{17}+1)/(x+1)$ & $[[17,8,7/2]]_4$ \\ 
  & $[17,9,7]_4$ & $(x^8 +\omega x^7 +\omega x^5 +\omega x^4 +\omega x^3 +\omega x + 1)$ & \\
35 & $[17,1,17]_4$ & $(x^{17}+1)/(x+1)$ & $[[17,12,4/2]]_4$ \\ 
  & $[17,13,4]_4$ & $(x^4 + x^3 + \omega^2 x^2 + x + 1)$ & \\
\hline
\end{tabular}
\end{table*}

\begin{table*}
\caption{Asymmetric QECC from Nested Cyclic Codes Continued}
\label{table:Cyclic2}
\centering
\begin{tabular}{| c | l | l | l |}
\hline
\textbf{No.} & \textbf{Codes $C$ and $D$} & \textbf{Generator Polynomials of $C$ and $D$} & \textbf{Code $Q$} \\
\hline
36 & $[19,9,8]_4$ & $(x+1)(x^9 +\omega^2 x^8 +\omega^2 x^6 +\omega^2 x^5 +\omega x^4 +\omega x^3 +\omega x + 1)$ & $[[19,1,7/7]]_4$\\ 
  & $[19,10,7]_4$ & $(x^9 + \omega^2 x^8 + \omega^2 x^6 + \omega^2 x^5 + \omega x^4 + \omega x^3 + \omega x + 1)$ & \\
37 & $[19,1,19]_4$ & $(x^{19}+1)/(x+1)$ & $[[19,9,7/2]]_4$ \\ 
  & $[19,10,7]_4$ & $(x^9 + \omega x^8 + \omega x^6 + \omega x^5 + \omega^2 x^4 + \omega^2 x^3 + \omega^2 x + 1)$ & \\
38 & $[21,1,21]_4$ & $(x^{21}+1)/(x+1)$ & $[[21,1,14/2]]_4$ \\ 
  & $[21,2,14]_4$ & $(x^{21}+1)/(x^2 + \omega^2 x + \omega)$ & \\
39 &$[21,4,12]_4$& $(x^{21}+1)/(x^4 + \omega x^3 + \omega^2 x^2 + x + 1)$ & $[[21,3,11/3]]_4$ \\ 
  &$[21,7,11]_4$& $x^{14} + \omega x^{13} + \omega^2 x^{12} + \omega^2 x^{10} + x^8 +\omega^2 x^7 + x^6 + \omega^2 x^4 + \omega^2 x^2 + \omega x + 1$ & \\
40 &$[21,4,12]_4$& $(x^{21}+1)/(x^4 + \omega^2 x^3 + \omega x^2 + x + 1)$  & $[[21,4,9/3]]_4$ \\
   &$[21,8,9]_4$&  $(x^{21}+1)/(x^8 + x^7 + \omega x^6 + \omega x^5 + x^4 + \omega^2 x^3 + x^2 + x + \omega)$ &\\

41 & $[21,1,21]_4$ & $(x^{21}+1)/(x+1)$ & $[[21,3,12/2]]_4$ \\ 
  & $[21,4,12]_4$ & $(x^{21}+1)/((x+1)(x^3+\omega^2 x^2+1))$ & \\
42 & $[21,7,11]_4$ & $(x^{21}+1)/(x^7 +\omega^2 x^6 + x^4 + x^3 +\omega^2 x + 1)$ & $[[21,3,8/5]]_4$ \\ 
  & $[21,10,8]_4$ & $(x^{11}+\omega x^{10}+ x^8 + x^7 +x^6 + x^5 +\omega x^4 +\omega x^3 +\omega^2 x^2 +\omega^2 x + 1)$ & \\
43 & $[21,4,12]_4$ & $(x^{21}+1)/(x^4 + x^3 + \omega x^2 +\omega^2 x + 1)$ & $[[21,4,9/3]]_4$ \\ 
  & $[21,8,9]_4$ & $(x^{21}+1)/(x^8 +\omega x^7 +\omega^2 x^6 + x^5 +\omega^2 x^4 +\omega x^3 +\omega x^2 +\omega)$ & \\
44 & $[21,7,11]_4$ & $(x^{21}+1)/(x^7 +\omega^2 x^6 + x^4 + x^3 +\omega^2 x + 1)$ & $[[21,4,6/5]]_4$ \\ 
  & $[21,11,6]_4$ & $(x^{21}+1)/(x^{11} + x^8 +\omega^2 x^7 + x^2 +\omega)$ & \\
45 & $[21,1,21]_4$ & $(x^{21}+1)/(x+1)$ & $[[21,6,11/2]]_4$ \\ 
  & $[21,7,11]_4$ & $(x^{21}+1)/(x^7 +\omega^2 x^6 + x^4 + x^3 +\omega^2 x + 1)$ & \\

46 & $[21,4,12]_4$ & $(x^{21}+1)/(x^4 + x^3 +\omega x^2 +\omega^2 x + 1)$ & $[[21,6,8/3]]_4$ \\ 
  & $[21,10,8]_4$ & $(x^{11}+\omega x^{10}+ x^8 + x^7 +x^6 + x^5 +\omega x^4 +\omega x^3 +\omega^2 x^2 +\omega^2 x + 1)$ & \\
47 &$[21,7,11]_4$& $(x^{21}+1)/(x^7 + \omega^2 x^6 + x^4 + x^3 + \omega ^2 x + 1)$ & $[[21,7,5/5]]_4$ \\ 
   &$[21,14,5]_4$& $x^7 + x^6 + x^4 + \omega x^3 + \omega^2x + \omega$ & \\
48 & $[21,1,21]_4$ & $(x^{21}+1)/(x+1)$ & $[[21,7,9/2]]_4$ \\ 
  & $[21,8,9]_4$ & $(x^{21}+1)/(x^8 +\omega x^7 +\omega^2 x^6 + x^5 +\omega^2 x^4 +\omega x^3 +\omega x^2 +\omega)$ & \\
49 & $[21,4,12]_4$ & $(x^{21}+1)/(x^4 + x^3 +\omega x^2 +\omega^2 x + 1)$ & $[[21,7,6/3]]_4$ \\ 
  & $[21,11,6]_4$ & $(x^{10} +\omega x^9 + x^8 +\omega x^7 + x^6 + x^5 + x^4 +\omega^2 x^2 +\omega^2)$ & \\
50 & $[21,1,21]_4$ & $(x^{21}+1)/(x+1)$ & $[[21,9,8/2]]_4$ \\ 
  & $[21,10,8]_4$ & $(x^{11}+\omega x^{10}+x^8+x^7+x^6+x^5+\omega x^4 +\omega x^3+\omega^2 x^2+\omega^2 x+ 1)$ & \\

51 & $[21,1,21]_4$ & $(x^{21}+1)/(x+1)$ & $[[21,10,6/2]]_4$ \\ 
  & $[21,11,6]_4$ & $(x^{10} + x^7 +\omega^2 x^6 + x^4 +\omega x^2 +\omega^2)$ & \\
52 & $[21,4,12]_4$ & $(x^{21}+1)/(x^4 + x^3 +\omega x^2 + \omega^2 x + 1)$ & $[[21,10,5/3]]_4$ \\ 
  & $[21,14,5]_4$ & $(x^7 + x^6 + x^4 +\omega x^3 +\omega^2 x +\omega)$ & \\
53 & $[21,1,21]_4$ & $(x^{21}+1)/(x+1)$ & $[[21,13,5/2]]_4$ \\ 
  & $[21,14,5]_4$ & $(x^7 + x^6 + x^4 +\omega x^3 +\omega^2 x +\omega)$ & \\
54 & $[21,1,21]_4$ & $(x^{21}+1)/(x+1)$ & $[[21,16,3/2]]_4$ \\ 
  & $[21,17,3]_4$ & $(x+\omega)(x^3+\omega^2 x^2 +1)$ & \\
55 & $[21,1,21]_4$ & $(x^{21}+1)/(x+1)$ & $\mathbf{[[21,19,2/2]]_4}$ \\ 
  & $[21,20,2]_4$ & $(x+\omega)$ & \\

56 & $[23,11,8]_4$ & $(x+1)(x^{11}+x^{10}+x^6+x^5+x^4+x^2+1)$ & $[[23,1,7/7]]_4$ \\ 
  & $[23,12,7]_4$ & $(x^{11}+x^{10}+x^6+x^5+x^4+x^2+1)$ & \\
57 & $[23,1,23]_4$ & $(x^{23}+1)/(x+1)$ & $[[23,11,7/2]]_4$ \\ 
  & $[23,12,7]_4$ & $(x^{11}+x^9+x^7+x^6+x^5+x+1)$ & \\
58 & $[25,1,25]_4$ & $(x^{25}+1)/(x+1)$ & $[[25,2,15/2]]_4$ \\ 
  & $[25,3,15]_4$ & $(x^{25}+1)/(x^3 + \omega x^2 + \omega x + 1)$ & \\
59 & $[25,12,4]_4$ & $(x^{13}+\omega^2 x^{12}+\omega^2 x^{11}+x^{10}+\omega x^8+x^7+x^6+\omega x^5 + x^3+\omega^2 x^2+\omega^2 x+ 1)$ & $[[25,2,4/4]]_4$ \\ 
  & $[25,14,4]_4$ & $(x^{11}+ x^{10}+\omega x^6 +\omega x^5 + x + 1)$ & \\
60 & $[25,1,25]_4$ & $(x^{25}+1)/(x+1)$ & $[[25,4,5/2]]_4$ \\ 
  & $[25,5,5]_4$ & $(x^{25}+1)/(x^5 + 1)$ & \\

61 & $[25,10,4]_4$ & $(x^{15}+\omega^2 x^{10} +\omega^2 x^5 + 1)$ & $[[25,4,4/3]]_4$ \\ 
  & $[25,14,4]_4$ & $(x^{11} + x^{10} +\omega x^6 +\omega x^5 + x + 1)$ & \\
62 & $[25,10,4]_4$ & $(x^{15}+\omega^2 x^{10} +\omega^2 x^5 + 1)$ & $[[25,5,3/3]]_4$ \\ 
  & $[25,15,3]_4$ & $(x^{10}+\omega x^5 +1)$ & \\
63 & $[25,1,25]_4$ & $(x^{23}+1)/(x + 1)$ & $[[25,12,4/2]]_4$ \\ 
  & $[25,13,4]_4$ & $(x^{12}+\omega x^{11}+ x^{10} +\omega^2 x^7 + x^6 +\omega^2 x^5 + x^2 +\omega x + 1)$ & \\
64 & $[25,1,25]_4$ & $(x^{23}+1)/(x+1)$ & $[[25,14,3/2]]_4$ \\ 
  & $[25,15,3]_4$ & $(x^{10}+\omega x^5 +1)$ & \\
65 & $[25,1,25]_4$ & $(x^{23}+1)/(x+1)$ & $[[25,22,2/2]]_4$ \\ 
  & $[25,23,2]_4$ & $(x^2 +\omega x +1)$ & \\
\hline
\end{tabular}
\end{table*}

\section{Construction from Nested Linear BCH Codes}\label{sec:BCHCodes}
It is well known (see~\cite[Sec. 3]{Cha98}) that finding the 
minimum distance or even finding a good lower bound on the 
minimum distance of a cyclic code is not a trivial problem. 
One important family of cyclic codes is the family of BCH codes. 
Their importance lies on the fact that their designed distance 
provides a reasonably good lower bound on the minimum distance. 
For more on BCH codes,~\cite[Ch. 5]{HP03} can be consulted.

The BCH Code constructor in MAGMA can be used to find nested 
codes to produce more asymmetric quantum codes. Table~\ref{table:BCH} 
lists down the BCH codes over $\F_{4}$ for $n=27$ to $n=51$ 
with $n$ coprime to $4$. For a fixed length $n$, the codes are 
nested, i.e., a code $C$ with dimension $k_{1}$ is a subcode of a 
code $D$ with dimension $k_{2} > k_{1}$. The construction process 
can be done for larger values of $n$ if so desired.

The range of the designed distances that can be supplied into MAGMA to 
come up with the code $C$ and the actual minimum distance of $C$ are 
denoted by $\delta(C)$ and $d(C)$, respectively. The minimum distance 
of $C^{\perp_{\tr}}$, which is needed in the computation of ${d_z,d_x}$, 
is denoted by $d(C^{\perp_{\tr}})$. To save space, the 
BCH $[n,1,n]_{4}$ repetition code generated by the all one vector $\1$ 
is not listed down in the table although this code is used in the 
construction of many asymmetric quantum codes presented 
in Table~\ref{table:BCH_QECC}.

\begin{table}
\caption{BCH Codes over $\F_{4}$ with $2 \leq k < n$ for $27\leq n \leq51$}
\label{table:BCH}
\centering
\begin{tabular}{| c | c | c | c | l | c |}
\hline
\textbf{No.} & \textbf{$n$} & \textbf{$\delta(C)$} & \textbf{$d(C)$} & \textbf{Code $C$} & \textbf{$d(C^{\perp_{\tr}})$} \\
\hline
1 & $27$ & $2$ & $2$ & $[27,18,2]_4$ & $3$\\ 
2 &      & $3$ & $3$ & $[27,9,3]_4$  & $2$\\
3 &      & $4-6$ & $6$ & $[27,6,6]_4$ & $2$\\
4 &      & $7-9$ & $9$ & $[27,3,9]_4$ & $2$\\
5 &      & $10-18$ & $18$ & $[27,2,18]_4$ & $2$\\

6 & $29$ & $2$ & $11$ & $[29,15,11]_4$  & $12$\\

7 & $31$ & $2-3$ & $3$ & $[31,26,3]_4$ & $16$\\
8 &      & $4-5$ & $5$ & $[31,21,5]_4$ & $12$\\
9 &      & $6-7$ & $7$ & $[31,16,7]_4$ & $8$\\
10 &     & $8-11$ & $11$ & $[31,11,11]_4$ & $6$\\
11 &     & $12-15$ & $15$ & $[31,6,15]_4$ & $4$\\

12 & $33$ & $2$ & $2$ & $[33,28,2]_4$  & $18$\\
13 &      & $3$ & $3$ & $[33,23,3]_4$ & $12$\\
14 &      & $4-5$ & $8$ & $[33,18,8]_4$ & $11$\\
15 &      & $6$ & $10$ & $[33,13,10]_4$  & $6$\\
16 &      & $7$ & $11$ & $[33,8,11]_4$   & $4$\\
17 &      & $8-11$ & $11$ & $[33,3,11]_4$   & $2$\\
18 &      & $12-22$ & $22$ & $[33,2,22]_4$ & $2$\\

19 & $35$ & $2$ & $3$ & $[35,29,3]_4$ & $16$\\
20 &      & $3$ & $3$ & $[35,23,3]_4$ & $8$\\
21 &      & $4-5$ & $5$ & $[35,17,5]_4$ & $8$\\
22 &      & $6$ & $7$ & $[35,14,7]_4$  & $8$\\
23 &      & $7$ & $7$ & $[35,8,7]_4$  & $4$\\
24 &      & $8-14$ & $15$ & $[35,6,15]_4$ & $4$\\
25 &      & $15$ & $15$ & $[35,4,15]_4$ & $2$\\

26 & $37$ & $2$ & $11$ & $[37,19,11]_4$ & $12$\\

27 & $39$ & $2$ & $2$ & $[39,33,2]_4$ & $18$\\
28 &      & $3$ & $3$ & $[39,27,3]_4$ & $12$\\
29 &      & $4-6$ & $9$ & $[39,21,9]_4$ & $12$\\
30 &      & $7$ & $10$ & $[39,15,10]_4$  & $6$\\
31 &      & $8-13$ & $13$ & $[39,9,13]_4$ & $4$\\
32 &      & $14$ & $15$ & $[39,8,15]_4$ & $3$\\
33 &      & $15-26$ & $26$ & $[39,2,26]_4$ & $2$\\

34 & $41$ & $2$ & $6$ & $[41,31,6]_4$ & $20$\\
35 &      & $3$ & $9$ & $[41,21,9]_4$ & $10$\\
36 &      & $4-6$ & $20$ & $[41,11,20]_4$ & $7$\\

37 & $43$ & $2$ & $5$ & $[43,36,5]_4$ & $27$\\
38 &      & $3$ & $6$ & $[43,29,6]_4$ & $14$\\
39 &      & $4-6$ & $11$ & $[43,22,11]_4$ & $12$\\
40 &      & $7$ & $13$ & $[43,15,13]_4$ & $6$\\

41 &      & $8-9$ & $26$ & $[43,8,26]_4$ & $5$\\
42 & $45$ & $2$ & $2$ & $[45,39,2]_4$ & $12$\\
43 &      & $3$ & $3$ & $[45,33,3]_4$  & $8$\\
44 &      & $4-5$ & $5$ & $[45,31,5]_4$ & $8$\\
45 &      & $6$ & $6$ & $[45,28,6]_4$ & $8$\\

46 &      & $7$ & $7$ & $[45,26,7]_4$ & $8$\\
47 &      & $8-9$ & $9$ & $[45,20,9]_4$ & $6$\\
48 &      & $10$ & $10$ & $[45,18,10]_4$ & $6$\\
49 &      & $11$ & $11$ & $[45,15,11]_4$  & $3$\\

50 &      & $12-15$ & $15$ & $[45,9,15]_4$ & $2$\\
51 &      & $16-18$ & $18$ & $[45,8,18]_4$ & $2$\\
52 &      & $19-21$ & $21$ & $[45,6,21]_4$ & $2$\\
53 &      & $22-30$ & $30$ & $[45,4,30]_4$ & $2$\\
54 &      & $31-33$ & $33$ & $[45,3,33]_4$ & $2$\\

55 & $47$ & $2-5$ & $11$ & $[47,24,11]_4$ & $12$\\

56 & $49$ & $2-3$ & $3$ & $[49,28,3]_4$ & $4$\\
57 &      & $4-7$ & $7$ & $[49,7,7]_4$ & $2$\\
58 &      & $8-21$ & $21$ & $[49,4,21]_4$ & $2$\\

59 & $51$ & $2$ & $2$ & $[51,47,2]_4$  & $36$\\
60 &      & $3$ & $3$ & $[51,43,3]_4$ & $24$\\

61 &      & $4-5$ & $5$ & $[51,39,5]_4$ & $24$\\
62 &      & $6$ & $9$ & $[51,35,9]_4$ & $22$\\
63 &      & $7$ & $9$ & $[51,31,9]_4$ & $14$\\
64 &      & $8-9$ & $9$ & $[51,27,9]_4$ & $10$\\

65 &      & $10-11$ & $14$ & $[51,23,14]_4$ & $10$\\
66 &      & $12-17$ & $17$ & $[51,19,17]_4$ & $8$\\
67 &      & $18$ & $18$ & $[51,18,18]_4$ & $8$\\
68 &      & $19$ & $19$ & $[51,14,19]_4$ & $6$\\
69 &      & $20-22$ & $27$ & $[51,10,27]_4$ & $6$\\
70 &      & $23-34$ & $34$ & $[51,6,34]_4$ & $4$\\
71 &      & $35$ & $35$ & $[51,5,35]_4$ & $3$\\
\hline
\end{tabular}
\end{table}

Table~\ref{table:BCH_QECC} presents the resulting asymmetric 
quantum codes from nested BCH Codes based on Theorem~\ref{thm:3.5}. 
The inner codes are listed in the column denoted by Code 
$C_{1}^{\perp_{\tr}}$ while the corresponding larger codes are put 
in the column denoted by Code $C_{2}$. The values for $d_z,d_x$ are 
derived from the last column of Table~\ref{table:BCH} while keeping 
Proposition~\ref{prop:3.2} in mind.

\begin{table*}
\caption{Asymmetric QECC from BCH Codes}
\label{table:BCH_QECC}
\centering
\begin{tabular}{| c | c | l | l | l || c | c | l | l | l |}
\hline
\textbf{No.} & \textbf{$n$} & \textbf{Code $C_{1}^{\perp_{\tr}}$} & \textbf{Code $C_{2}$} & \textbf{Code $Q$} &
\textbf{No.} & \textbf{$n$} & \textbf{Code $C_{1}^{\perp_{\tr}}$} & \textbf{Code $C_{2}$} & \textbf{Code $Q$}\\
\hline
1 & $27$ & $[27,1,27]_4$ & $[27,2,18]_4$ & $[[27,1,18/2]]_4$ & 76 & $43$ & $[43,8,26]_4$ & $[43,29,6]_4$ & $[[43,21,6/5]]_4$\\
2 &      & $[27,1,27]_4$ & $[27,3,9]_4$ & $[[27,2,9/2]]_4$   & 77 &      & $[43,1,43]_4$ & $[43,29,6]_4$ & $[[43,28,6/2]]_4$\\
3 &      & $[27,1,27]_4$ & $[27,6,6]_4$ & $[[27,5,6/2]]_4$   & 78 &      & $[43,8,26]_4$ & $[43,36,5]_4$ & $[[43,28,5/5]]_4$\\
4 &      & $[27,1,27]_4$ & $[27,9,3]_4$ & $[[27,8,3/2]]_4$   & 79 &      & $[43,1,43]_4$ & $[43,36,5]_4$ & $[[43,35,5/2]]_4$\\
5 &      & $[27,1,27]_4$ & $[27,18,2]_4$ & $[[27,17,2/2]]_4$ & 80 & $45$ & $[45,1,45]_4$ & $[45,3,33]_4$ & $[[45,2,33/2]]_4$\\

6 & $29$ & $[29,1,29]_4$ & $[29,15,11]_4$ & $[[29,14,11/2]]_4$ & 81 &      & $[45,18,10]_4$ & $[45,20,9]_4$ & $[[45,2,9/6]]_4$\\ 
7 & $31$ & $[31,1,31]_4$ & $[31,6,15]_4$ & $[[31,5,15/2]]_4$  & 82 &      & $[45,1,45]_4$ & $[45,4,30]_4$ & $[[45,3,30/2]]_4$\\
8 &      & $[31,21,5]_4$ & $[31,26,3]_4$ & $[[31,5,12/3]]_4$  & 83 &      & $[45,15,11]_4$ & $[45,18,10]_4$ & $[[45,3,10/3]]_4$\\
9 &      & $[31,6,15]_4$ & $[31,11,11]_4$ & $[[31,5,11/4]]_4$ & 84 &      & $[45,1,45]_4$ & $[45,6,21]_4$ & $[[45,5,21/2]]_4$\\
10 &     & $[31,16,7]_4$ & $[31,21,5]_4$ & $[[31,5,8/5]]_4$   & 85 &      & $[45,15,11]_4$ & $[45,20,9]_4$ & $[[45,5,9/3]]_4$\\

11 &     & $[31,11,11]_4$ & $[31,16,7]_4$ & $[[31,5,7/6]]_4$  & 86 &      & $[45,26,7]_4$ & $[45,31,5]_4$ & $[[45,5,8/5]]_4$\\
12 &     & $[31,1,31]_4$ & $[31,11,11]_4$ & $[[31,10,11/2]]_4$  & 87 &      & $[45,1,45]_4$ & $[45,8,18]_4$ & $[[45,7,18/2]]_4$\\
13 &     & $[31,16,7]_4$ & $[31,26,3]_4$ & $[[31,10,8/3]]_4$  & 88 &      & $[45,26,7]_4$ & $[45,33,3]_4$ & $[[45,7,8/3]]_4$\\
14 &     & $[31,6,15]_4$ & $[31,16,7]_4$ & $[[31,10,7/4]]_4$  & 89 &      & $[45,1,45]_4$ & $[45,9,15]_4$ & $[[45,8,15/2]]_4$\\
15 &     & $[31,11,11]_4$ & $[31,21,5]_4$ & $[[31,10,6/5]]_4$ & 90 &      & $[45,18,10]_4$ & $[45,26,7]_4$ & $[[45,8,7/6]]_4$\\

16 &     & $[31,1,31]_4$ & $[31,16,7]_4$ & $[[31,15,7/2]]_4$  & 91 &      & $[45,18,10]_4$ & $[45,28,6]_4$ & $[[45,10,6/6]]_4$\\
17 &     & $[31,11,11]_4$ & $[31,26,3]_4$ & $[[31,15,6/3]]_4$ & 92 &      & $[45,15,11]_4$ & $[45,26,7]_4$ & $[[45,11,7/3]]_4$\\
18 &     & $[31,6,15]_4$ & $[31,21,5]_4$ & $[[31,15,5/4]]_4$  & 93 &      & $[45,18,10]_4$ & $[45,31,5]_4$ & $[[45,13,6/5]]_4$\\
19 &     & $[31,1,31]_4$ & $[31,21,5]_4$ & $[[31,20,5/2]]_4$  & 94 &      & $[45,1,45]_4$ & $[45,15,11]_4$ & $[[45,14,11/2]]_4$\\
20 &     & $[31,6,15]_4$ & $[31,26,3]_4$ & $[[31,20,4/3]]_4$  & 95 &      & $[45,18,10]_4$ & $[45,33,3]_4$ & $[[45,15,6/3]]_4$\\

21 &     & $[31,1,31]_4$ & $[31,26,3]_4$ & $[[31,25,3/2]]_4$  & 96 &      & $[45,15,11]_4$ & $[45,31,5]_4$ & $[[45,16,5/3]]_4$\\
22 & $33$ & $[33,1,33]_4$ & $[33,2,22]_4$ & $[[33,1,22/2]]_4$ & 97 &      & $[45,1,45]_4$ & $[45,18,10]_4$ & $[[45,17,10/2]]_4$\\
23 &      & $[33,23,3]_4$ & $[33,28,2]_4$ & $[[33,5,12/2]]_4$ & 98 &    & $[45,15,11]_4$ & $[45,33,3]_4$ & $[[45,18,3/3]]_4$\\
24 &      & $[33,18,8]_4$ & $[33,23,3]_4$ & $[[33,5,11/3]]_4$ & 99 &      & $[45,1,45]_4$ & $[45,20,9]_4$ & $[[45,19,9/2]]_4$\\
25 &      & $[33,8,11]_4$ & $[33,13,10]_4$ & $[[33,5,10/4]]_4$ & 100 &      & $[45,1,45]_4$ & $[45,26,7]_4$ & $[[45,25,7/2]]_4$\\

26 &      & $[33,13,10]_4$ & $[33,18,8]_4$ & $[[33,5,8/6]]_4$ & 101 &      & $[45,1,45]_4$ & $[45,28,6]_4$ & $[[45,27,6/2]]_4$\\
27 &      & $[33,18,8]_4$ & $[33,28,2]_4$ & $[[33,10,11/2]]_4$ & 102 &      & $[45,1,45]_4$ & $[45,31,5]_4$ & $[[45,30,5/2]]_4$\\
28 &      & $[33,8,11]_4$ & $[33,18,8]_4$ & $[[33,10,8/4]]_4$ & 103 &      & $[45,1,45]_4$ & $[45,33,3]_4$ & $[[45,32,3/2]]_4$\\
29 &      & $[33,1,33]_4$ & $[33,13,10]_4$ & $[[33,12,10/2]]_4$ & 104 &      & $[45,1,45]_4$ & $[45,39,2]_4$ & $[[45,38,2/2]]_4$\\
30 &      & $[33,8,11]_4$ & $[33,23,3]_4$ & $[[33,15,4/3]]_4$ & 105 & $47$ & $[47,1,47]_4$ & $[47,24,11]_4$ & $[[47,23,11/2]]_4$\\

31 &      & $[33,1,33]_4$ & $[33,18,8]_4$ & $[[33,17,8/2]]_4$ & 106 & $49$ & $[49,1,49]_4$ & $[49,4,21]_4$ & $[[49,3,21/2]]_4$\\
32 &      & $[33,8,11]_4$ & $[33,28,2]_4$ & $[[33,20,4/2]]_4$ & 107 &      & $[49,1,49]_4$ & $[49,7,7]_4$ & $[[49,6,7/2]]_4$\\
33 &      & $[33,1,33]_4$ & $[33,23,3]_4$ & $[[33,22,3/2]]_4$ & 108 &      & $[49,1,49]_4$ & $[49,28,3]_4$ & $[[49,27,3/2]]_4$\\
34 &      & $[33,1,33]_4$ & $[33,28,2]_4$ & $[[33,27,2/2]]_4$ & 109 & $51$ & $[51,5,35]_4$ & $[51,6,34]_4$ & $[[51,1,34/3]]_4$\\
35 & $35$ & $[35,14,7]_4$ & $[35,17,5]_4$ & $[[35,3,8/5]]_4$  & 110 &      & $[51,18,18]_4$ & $[51,19,17]_4$ & $[[51,1,17/8]]_4$\\

36 &      & $[35,1,35]_4$ & $[35,6,15]_4$ & $[[35,5,15/2]]_4$ & 111 &      & $[51,1,51]_4$ & $[51,5,35]_4$ & $[[51,4,35/2]]_4$\\
37 &      & $[35,6,15]_4$ & $[35,14,7]_4$ & $[[35,8,7/4]]_4$  & 112 &      & $[51,6,34]_4$ & $[51,10,27]_4$ & $[[51,4,27/4]]_4$\\
38 &      & $[35,6,15]_4$ & $[35,17,5]_4$ & $[[35,11,5/4]]_4$  & 113 &      & $[51,10,27]_4$ & $[51,14,19]_4$ & $[[51,4,19/6]]_4$\\
39 &      & $[35,14,7]_4$ & $[35,29,3]_4$ & $[[35,15,8/3]]_4$ & 114 &      & $[51,1,51]_4$ & $[51,6,34]_4$ & $[[51,5,34/2]]_4$\\
40 &      & $[35,1,35]_4$ & $[35,17,5]_4$ & $[[35,16,5/2]]_4$ & 115 &      & $[51,5,35]_4$ & $[51,10,27]_4$ & $[[51,5,27/3]]_4$\\

41 &      & $[35,6,15]_4$ & $[35,29,3]_4$ & $[[35,23,4/3]]_4$ & 116 &      & $[51,18,18]_4$ & $[51,23,14]_4$ & $[[51,5,14/8]]_4$\\
42 &      & $[35,1,35]_4$ & $[35,29,3]_4$ & $[[35,28,3/2]]_4$ & 117 &      & $[51,39,5]_4$ & $[51,47,2]_4$ & $[[51,8,24/2]]_4$\\
43 & $37$ & $[37,1,37]_4$ & $[37,19,2]_4$ & $[[37,18,11/2]]_4$ & 118 &      & $[51,6,34]_4$ & $[51,14,19]_4$ & $[[51,8,19/4]]_4$\\
44 & $39$ & $[39,1,39]_4$ & $[39,2,26]_4$ & $[[39,1,26/2]]_4$ & 119 &      & $[51,10,27]_4$ & $[51,18,18]_4$ & $[[51,8,18/6]]_4$\\
45 &      & $[39,1,39]_4$ & $[39,2,26]_4$ & $[[39,1,26/2]]_4$ & 120 &      & $[51,1,51]_4$ & $[51,10,27]_4$ & $[[51,9,27/2]]_4$\\

46 &      & $[39,8,15]_4$ & $[39,9,13]_4$ & $[[39,1,13/3]]_4$ & 121 &      & $[51,5,35]_4$ & $[51,14,19]_4$ & $[[51,9,19/3]]_4$\\
47 &      & $[39,21,9]_4$ & $[39,27,3]_4$ & $[[39,6,12/3]]_4$ & 122 &      & $[51,10,27]_4$ & $[51,19,17]_4$ & $[[51,9,17/6]]_4$\\
48 &      & $[39,9,13]_4$ & $[39,15,10]_4$ & $[[39,6,10/4]]_4$ & 123 &      & $[51,6,34]_4$ & $[51,18,18]_4$ & $[[51,12,18/4]]_4$\\
49 &      & $[39,15,10]_4$ & $[39,21,9]_4$ & $[[39,6,9/6]]_4$ & 124 &      & $[51,1,51]_4$ & $[51,14,19]_4$ & $[[51,13,19/2]]_4$\\
50 &      & $[39,1,39]_4$ & $[39,8,15]_4$ & $[[39,7,15/2]]_4$ & 125 &      & $[51,5,35]_4$ & $[51,18,18]_4$ & $[[51,13,18/3]]_4$\\

51 &      & $[39,8,15]_4$ & $[39,15,10]_4$ & $[[39,7,10/3]]_4$ & 126 &      & $[51,6,34]_4$ & $[51,19,17]_4$ & $[[51,13,17/4]]_4$\\
52 &      & $[39,1,39]_4$ & $[39,9,13]_4$ & $[[39,8,13/2]]_4$ & 127 &      & $[51,10,27]_4$ & $[51,23,14]_4$ & $[[51,13,14/6]]_4$\\
53 &      & $[39,21,9]_4$ & $[39,33,2]_4$ & $[[39,12,12/2]]_4$ & 128 &      & $[51,5,35]_4$ & $[51,19,17]_4$ & $[[51,14,17/3]]_4$\\
54 &      & $[39,9,13]_4$ & $[39,21,9]_4$ & $[[39,12,9/4]]_4$  & 129 &      & $[51,1,51]_4$ & $[51,18,18]_4$ & $[[51,17,18/2]]_4$\\
55 &      & $[39,8,15]_4$ & $[39,21,9]_4$ & $[[39,13,9/3]]_4$  & 130 &      & $[51,6,34]_4$ & $[51,23,14]_4$ & $[[51,17,14/4]]_4$\\

56 &      & $[39,1,39]_4$ & $[39,15,10]_4$ & $[[39,14,10/2]]_4$  & 131 &      & $[51,18,18]_4$ & $[51,35,9]_4$ & $[[51,17,9/8]]_4$\\
57 &      & $[39,9,13]_4$ & $[39,27,3]_4$ & $[[39,18,4/3]]_4$  & 132 &      & $[51,1,51]_4$ & $[51,19,17]_4$ & $[[51,18,17/2]]_4$\\
58 &      & $[39,8,15]_4$ & $[39,27,3]_4$ & $[[39,19,3/3]]_4$  & 133 &      & $[51,5,35]_4$ & $[51,23,14]_4$ & $[[51,18,14/3]]_4$\\ 
59 &      & $[39,1,39]_4$ & $[39,21,9]_4$ & $[[39,20,9/2]]_4$  & 134 &      & $[51,1,51]_4$ & $[51,23,14]_4$ & $[[51,22,14/2]]_4$\\
60 &      & $[39,9,13]_4$ & $[39,33,2]_4$ & $[[39,24,4/2]]_4$  & 135 &      & $[51,10,27]_4$ & $[51,35,9]_4$ & $[[51,25,9/6]]_4$\\ 

61 &      & $[39,1,39]_4$ & $[39,27,3]_4$ & $[[39,26,3/2]]_4$  & 136 &      & $[51,6,34]_4$ & $[51,35,9]_4$ & $[[51,29,9/4]]_4$\\
62 &      & $[39,1,39]_4$ & $[39,33,2]_4$ & $[[39,32,2/2]]_4$  & 137 &      & $[51,10,27]_4$ & $[51,39,5]_4$ & $[[51,29,6/5]]_4$\\
63 & $41$ & $[41,1,41]_4$ & $[41,11,20]_4$ & $[[41,10,20/2]]_4$  & 138 &      & $[51,5,35]_4$ & $[51,35,9]_4$ & $[[51,30,9/3]]_4$\\
64 &      & $[41,21,9]_4$ & $[41,31,6]_4$ & $[[41,10,10/6]]_4$ & 139 &      & $[51,10,27]_4$ & $[51,43,3]_4$ & $[[51,33,6/3]]_4$\\
65 &      & $[41,11,20]_4$ & $[41,21,9]_4$ & $[[41,10,9/7]]_4$ & 140 &      & $[51,6,34]_4$ & $[51,39,5]_4$ & $[[51,33,5/4]]_4$\\

66 &      & $[41,1,41]_4$ & $[41,21,9]_4$ & $[[41,20,9/2]]_4$   & 141 &      & $[51,1,51]_4$ & $[51,35,9]_4$ & $[[51,34,9/2]]_4$\\
67 &      & $[41,11,20]_4$ & $[41,31,6]_4$ & $[[41,20,7/6]]_4$  & 142 &      & $[51,5,35]_4$ & $[51,39,5]_4$ & $[[51,34,5/3]]_4$\\
68 &      & $[41,1,41]_4$ & $[41,31,6]_4$ & $[[41,30,6/2]]_4$  & 143 &      & $[51,10,27]_4$ & $[51,47,2]_4$ & $[[51,37,6/2]]_4$\\
69 & $43$ & $[43,1,43]_4$ & $[43,8,26]_4$ & $[[43,7,26/2]]_4$  & 144 &      & $[51,6,34]_4$ & $[51,43,3]_4$ & $[[51,37,4/3]]_4$\\
70 &      & $[43,29,6]_4$ & $[43,36,5]_4$ & $[[43,7,14/5]]_4$ & 145 &      & $[51,1,51]_4$ & $[51,39,5]_4$ & $[[51,38,5/2]]_4$\\

71 &      & $[43,22,11]_4$ & $[43,29,6]_4$ & $[[43,7,12/6]]_4$ & 146 &     & $[51,5,35]_4$ & $[51,43,3]_4$ & $[[51,38,3/3]]_4$\\
72 & $43$ & $[43,1,43]_4$ & $[43,15,13]_4$ & $[[43,14,13/2]]_4$ & 147 &      & $[51,6,34]_4$ & $[51,47,2]_4$ & $[[51,41,4/2]]_4$\\
73 &      & $[43,22,11]_4$ & $[43,36,5]_4$ & $[[43,14,12/5]]_4$ & 148 &      & $[51,5,35]_4$ & $[51,47,2]_4$ & $[[51,42,3/2]]_4$\\
74 &      & $[43,15,13]_4$ & $[43,29,6]_4$ & $[[43,14,6/6]]_4$ & 149 &      & $[51,1,51]_4$ & $[51,47,2]_4$ & $[[51,46,2/2]]_4$\\
75 &      & $[43,1,43]_4$ & $[43,22,11]_4$ & $[[43,21,11/2]]_4$ &    &      &               &               & \\
\hline
\end{tabular}
\end{table*}
\section{Asymmetric Quantum Codes from Nested Additive Codes over $\F_4$}\label{sec:nestedadditive}

To show the gain that we can get from Theorem~\ref{thm:3.5} over the 
construction which is based solely on $\F_{4}$-linear codes, we 
exhibit asymmetric quantum codes which are derived from nested 
additive codes. 

An example of asymmetric quantum code with $k>0$ can be 
derived from a self-orthogonal additive cyclic code listed as 
Entry 3 in~\cite[Table I]{CRSS98}. The code is of parameters 
$(21,2^{16},9)_4$ yielding a $[[21,5,6/6]]_4$ quantum code 
$Q$ by Theorem~\ref{thm:3.5}. In a similar manner, a $[[23,12,4/4]]_4$ 
quantum code can be derived from Entry 5 of the same table.

Another very interesting example is the $(12,2^{12},6)_4$ dodecacode 
$C$ mentioned in Remark~\ref{rem:5.3}. Its generator matrix $G$ is 
given in Equation (\ref{eq:9.1}).

Let $G_{D},G_{E}$ be matrices formed, respectively, by deleting the last 
4 and 8 rows of $G$. Construct two additive codes $D,E \subset C$ with 
generator matrices $G_{D}$ and $G_{E}$, respectively. Applying 
Theorem~\ref{thm:3.5} with $C_1=D^{\perp_{\tr}}$ and $C_2 = C$ yields 
an asymmetric quantum code $Q$ with parameters $[[12,2,6/3]]_4$. 
Performing the same process to $E \subset C$ results in a 
$[[12,4,6/2]]_4$-code.
\begin{equation}\label{eq:9.1}
G=\left(
\begin{array}{*{12}{l}}
0 & 0 & 0 & 0 & 0 & 0 & 1 & 1 & 1 & 1 & 1 & 1\\
0 & 0 & 0 & 0 & 0 & 0 & \omega & \omega & \omega & \omega & \omega & \omega\\
1 & 1 & 1 & 1 & 1 & 1 & 0 & 0 & 0 & 0 & 0 & 0 \\

\omega & \omega & \omega & \omega & \omega & \omega & 0 & 0 & 0 & 0 & 0 & 0 \\
0 & 0 & 0 & 1 & \omega & \overline{\omega} & 0 & 0 & 0 & 1 & \omega & \overline{\omega}\\
0 & 0 & 0 & \omega & \overline{\omega} & 1 & 0 & 0 & 0 & \omega & \overline{\omega} & 1\\

1 & \overline{\omega} & \omega & 0 & 0 & 0 & 1 & \overline{\omega} & \omega & 0 & 0 & 0\\
\omega & 1 & \overline{\omega} & 0 & 0 & 0 & \omega & 1 & \overline{\omega} & 0 & 0 & 0\\
0 & 0 & 0 & 1 & \overline{\omega} & \omega & \omega & \overline{\omega} & 1 & 0 & 0 & 0\\

0 & 0 & 0 & \omega & 1 & \overline{\omega} & 1 & \omega & \overline{\omega} & 0 & 0 & 0\\
1 & \omega & \overline{\omega} & 0 & 0 & 0 & 0 & 0 & 0 & \overline{\omega} & \omega & 1\\
\overline{\omega} & 1 & \omega & 0 & 0 & 0 & 0 & 0 & 0 & 1 & \overline{\omega} & \omega
\end{array}
\right)\text{.}
\end{equation}

The next three subsections present more systematic approaches to finding 
good asymmetric quantum codes based on nested additive codes.

\subsection{Construction from circulant codes}\label{subsec:circulant}
As is the case with linear codes, an additive code $C$ is said to be cyclic 
if, given a codeword $\v \in C$, the cyclic shift of $\v$ is also in $C$. 
It is known (see~\cite[Th. 14]{CRSS98}) that any additive cyclic 
$(n,2^{k})_{4}$-code $C$ has at most two generators. A more detailed 
study of additive cyclic codes over $\F_{4}$ is given in~\cite{Huff07}. 

Instead of using additive cyclic codes, a subfamily which is called 
\textit{additive circulant $(n,2^{n})_4$-code} in~\cite{GK04} is used for ease of 
computation. An additive circulant code $C$ has as a generator matrix $G$ 
the complete cyclic shifts of just one codeword $\v=(v_{1},v_{2},\ldots,v_{n})$. 
We call $G$ the \textit{cyclic development of $\v$}.
More explicitly, $G$ is given by
\begin{equation}\label{eq:circ}
G=\left(
\begin{array}{*{12}{l}}
v_{1}  & v_{2}  & v_{3}  & \ldots & v_{n-1} & v_{n}\\
v_{n}  & v_{1}  & v_{2}  & \ldots & v_{n-2} & v_{n-1}\\
\vdots & \vdots & \vdots & \ddots & \vdots  & \vdots \\
v_{2}  & v_{3}  & v_{4}  & \ldots & v_{n}   & v_{1}
\end{array}
\right)\text{.}
\end{equation}

To generate a subcode of a circulant extremal self-dual code $C$ 
we delete the rows of its generator matrix $G$ starting from the last row, 
the first row being the generating codeword $\v$. We record the best 
possible combinations of the size of the resulting code $Q$ and 
$\left\lbrace d_z,d_x \right\rbrace$. To save space, only new codes 
or codes with better parameters than those previously constructed 
are presented.

Table~\ref{table:Circu} summarizes the finding for $n \leq 30$. Zeros on the 
right of each generating codeword are omitted. The number of last rows to 
be deleted to obtain the desired subcode is given in the column 
denoted by \textbf{del}.

\begin{table*}
\caption{Asymmetric Quantum Codes from Additive Circulant Codes for $n \leq 30$}
\label{table:Circu}
\centering
\begin{tabular}{| c | c | l | c | l |}
\hline
\textbf{No.} & \textbf{$n$} & \textbf{Generator $\v$} & \textbf{del} & \textbf{Code $Q$} \\
\hline
1 & $8$ & $(\overline{\omega},1,\omega,0,1)$ & $1$ & $[[8,0.5,4/2]]_4$ \\ 
2 & $10$ & $(\overline{\omega},1,\omega,\omega,0,0,1)$ & $2$ & $[[10,1,4/3]]_4$ \\
3 &      &                                             & $3$ & $[[10,1.5,4/2]]_4$ \\
4 & $12$ & $(1,0,\omega,1,\omega,0,1)$                 & $2$ & $[[12,1,5/3]]_4$ \\
5 & $14$ & $(\omega,1,1,\overline{\omega},0,\omega,0,1)$ & $1$ & $[[14,0.5,6/3]]_4$ \\

6 &      &                                              & $5$ & $[[14,2.5,6/2]]_4$ \\
7 & $16$ & $(\omega,1,\overline{\omega},\omega,0,0,0,\omega,1)$ & $2$ & $[[16,1,6/4]]_4$ \\
8 & $16$ & $(\omega,1,1,0,0,\overline{\omega},\omega,0,1)$ & $6$ & $[[16,3,6/3]]_4$ \\
9 & $16$ & $(\omega,1,\overline{\omega},\omega,0,0,0,\omega,1)$ & $7$ & $[[16,3.5,6/2]]_4$ \\
10 & $19$ & $(1,0,\omega,\overline{\omega},1,\overline{\omega},\omega,0,1)$ & $4$ & $[[19,2,7/4]]_4$ \\

11 & $20$ & $(\overline{\omega},\overline{\omega},\omega,\overline{\omega},\omega,1,0,0,0,1,1)$ & $2$ & $[[20,1,8/5]]_4$ \\
12 &      &                                                                                     & $3$ & $[[20,1.5,8/4]]_4$ \\
13 &      &                                                                                     & $7$ & $[[20,3.5,8/3]]_4$ \\
14 & $22$ & $(\omega,\omega,\omega,\omega,1,1,\overline{\omega},\omega,0,\omega,\omega,0,\omega,\omega,1,1)$ & $4$&$[[22,2,8/5]]_4$\\
15 &      &                                                                                     & $6$ & $[[22,3,8/4]]_4$ \\

16 &      &                                                                                     & $10$ & $[[22,5,8/3]]_4$ \\
17 &      &                                                                                     & $11$ & $[[22,5.5,8/2]]_4$ \\
18 & $23$ & $(1,\omega,\omega,1,\overline{\omega},1,\omega,\omega,1)$ & $2$ & $[[23,1,8/4]]_4$\\
19 &      &                                                           & $6$ & $[[23,3,8/3]]_4$ \\
20 & $25$ & $(1,1,\omega,0,1,\overline{\omega},1,0,\omega,1,1)$       & $3$ & $[[25,1.5,8/5]]_4$\\

21 &      &                                                           & $6$ & $[[25,3,8/4]]_4$ \\
22 & $26$ & $(1,0,\omega,\omega,\omega,\overline{\omega},\omega,\omega,\omega,0,1)$ & $5$ & $[[26,2.5,8/4]]_4$\\
23 &      &                                                                         & $6$ & $[[26,3,8/3]]_4$ \\
24 & $27$ & $(1,0,\omega,1,\omega,\overline{\omega},\omega,1,\omega,0,1)$ & $1$ & $[[27,0.5,8/5]]_4$\\
25 & $27$ & $(1,0,\omega,\omega,1,\overline{\omega},1,\omega,\omega,0,1)$ & $5$ & $[[27,2.5,8/4]]_4$ \\
26 &      &                                                               & $6$ & $[[27,3,8/3]]_4$\\
27 & $28$ & $(\overline{\omega},\omega,\overline{\omega},1,\overline{\omega},1,\omega,\omega,\overline{\omega},\overline{\omega},\omega,\omega,0,1,1)$ & $1$ & $[[28,0.5,10/7]]_4$\\
28 &      &                                                                                     & $2$ & $[[28,1,10/5]]_4$ \\
29 &      &                                                                                     & $9$ & $[[28,4.5,10/4]]_4$ \\
30 &      &                                                                                     & $11$ & $[[28,5.5,10/3]]_4$ \\
31 & $29$ & $(1,\omega,0,\omega,\overline{\omega},1,\overline{\omega},\omega,\overline{\omega},1,\overline{\omega},\omega,0,\omega,1)$ & $1$ & $[[29,0.5,11/7]]_4$\\
32 &      &                                                                                     & $3$ & $[[29,1.5,11/6]]_4$ \\
33 &      &                                                                                     & $8$ & $[[29,4,11/4]]_4$ \\
34 &      &                                                                                     & $12$ & $[[29,6,11/3]]_4$ \\
35 & $30$ & $(\overline{\omega},0,\overline{\omega},\omega,1,\omega,0,\overline{\omega},\omega,0,1,\omega,1,1,0,1)$ & $5$ & $[[30,2.5,12/6]]_4$\\
36 &      &                                                                                     & $6$ & $[[30,3,12/5]]_4$ \\
37 &      &                                                                                     & $10$ & $[[30,5,12/3]]_4$ \\
38 &      &                                                                                     & $11$ & $[[30,5.5,12/2]]_4$ \\
\hline
\end{tabular}
\end{table*}

\subsection{Construction from $4$-circulant and bordered $4$-circulant codes}\label{subsec:4circulant}
Following~\cite{GK04}, a \textit{$4$-circulant additive $(n,2^{n})_4$-code 
of even length $n$} has the following generator matrix: 
\begin{equation}\label{eq:4circ}
G=\left(
\begin{array}{*{12}{l}}
I_{\frac{n}{2}}  & A_{\frac{n}{2}}\\
B_{\frac{n}{2}}  & I_{\frac{n}{2}}
\end{array}
\right)
\end{equation}
where $I_{\frac{n}{2}}$ is an identity matrix of size $n/2$ and 
$A_{\frac{n}{2}},B_{\frac{n}{2}}$ are circulant matrices of the form 
given in Equation (\ref{eq:circ}).

Starting from a generator matrix $G_{C}$ of an additive $4$-circulant code $C$, 
a matrix $G_{D}$ is constructed by deleting the last $r$ rows of $G_{C}$ to 
derive an additive subcode $D$ of $C$. For $n \leq 30$ we found three asymmetric quantum codes 
which are either new or better than the ones previously constructed. 
Table~\ref{table:4Circu} presents the findings. 
Under the column denoted by \textbf{$A,B$} we list down the generating 
codewords for the matrices $A$ and $B$, in that order.

\begin{table}
\caption{Asymmetric Quantum Codes from Additive 4-Circulant Codes for $n \leq 30$}
\label{table:4Circu}
\centering
\begin{tabular}{| c | c | c | l |}
\hline
\textbf{$n$} & \textbf{$A,B$} & \textbf{del} & \textbf{Code $Q$} \\
\hline
$14$ & $(1,\omega,\omega,\omega,1,0,0)$,                      & $2$ & $[[14,1,6/3]]_4$ \\
     & $(1,0,0,1,\omega,\omega,\omega)$                       &     &         \\
$20$ & $(\overline{\omega},\omega,\omega,\omega,\omega,\omega,\overline{\omega},0,\omega,0)$,   & $8$ & $[[20,4,8/2]]_4$ \\
     & $(\omega,0,\overline{\omega},0,\omega,\omega,\overline{\omega},\omega,\overline{\omega},\omega)$ & & \\
\hline
\end{tabular}
\end{table}

Let $\d=(\omega,\ldots,\omega)$ and $\c$ be the transpose of $\d$. 
A \textit{bordered $4$-circulant additive $(n,2^{n})_4$-code 
of odd length $n$} has the following generator matrix: 
\begin{equation}\label{eq:bordered4circ}
G=\left(
\begin{array}{*{12}{l}}
e               & \1                 & \d\\
\1              & I_{\frac{n-1}{2}}  & A_{\frac{n-1}{2}}\\
\c              & B_{\frac{n-1}{2}}  & I_{\frac{n-1}{2}}
\end{array}
\right)\
\end{equation}
where $e$ is one of $0,1,\omega$, or $\overline{\omega}$, and 
$A_{\frac{n-1}{2}},B_{\frac{n-1}{2}}$ are circulant matrices.

We perform the procedure of constructing a subcode $D$ of $C$ by 
deleting the rows of $G_{C}$, starting from the last row. For 
$n \leq 30$, the five asymmetric quantum codes, either new or of 
better parameters, found can be seen in Table~\ref{table:B4Circu}. 
As before, under the column denoted by \textbf{$A,B$} we list 
down the generating codewords for the matrices $A$ and $B$, 
in that order.

\begin{table}
\caption{Asymmetric Quantum Codes from Additive Bordered 4-Circulant Codes for $n \leq 30$}
\label{table:B4Circu}
\centering
\begin{tabular}{| c | c | c | c | l |}
\hline
\textbf{$n$} & \textbf{$e$} & \textbf{$A,B$} & \textbf{del} & \textbf{Code $Q$} \\
\hline
$23$ & $\omega$ & $(\overline{\omega},1,\overline{\omega},\omega,\omega,1,1,0,0,0,0)$,                & $1$ & $[[23,0.5,8/5]]_4$ \\
     &          & $(0,\overline{\omega},\omega,\omega,\omega,\omega,\overline{\omega},1,0,1,\omega)$ &     &         \\
$23$ & $\omega$ & $(\overline{\omega},1,\overline{\omega},\omega,\omega,1,1,0,0,0,0)$,                & $4$ & $[[23,2,8/4]]_4$ \\
     &          & $(0,\overline{\omega},\omega,\omega,\omega,\omega,\overline{\omega},1,0,1,\omega)$ &     &         \\
$23$ & $\omega$ & $(1,1,\overline{\omega},1,\overline{\omega},1,1,0,0,0,0)$,         & $9$ & $[[23,4.5,8/2]]_4$ \\
     &   &$(\overline{\omega},\overline{\omega},\omega,\omega,\overline{\omega},\overline{\omega},\omega,0,\omega,0,\omega)$ & & \\
$25$ & $\omega$ & $(\omega,1,1,\omega,\overline{\omega},1,0,1,0,0,0,0)$,         & $4$ & $[[25,2,8/5]]_4$ \\
   & &
$(1,\overline{\omega},\overline{\omega},\omega,\omega,\overline{\omega},\omega,\overline{\omega},1,0,\overline{\omega},\omega)$ & & \\
$25$ & $\omega$ & $(\omega,1,\overline{\omega},1,1,\omega,0,1,0,0,0,0)$,         & $10$ & $[[25,5,8/2]]_4$ \\
   &  &
$(1,\overline{\omega},\overline{\omega},\overline{\omega},\omega,\overline{\omega},\omega,1,\omega,\overline{\omega},0,\omega)$ & & \\
\hline
\end{tabular}
\end{table}

\begin{rem}\label{rem:9.1}
A similar procedure has been done to the generator matrices of 
\textit{s-extremal additive codes} found in~\cite{BGKW07} 
and~\cite{Var09} as well as to the \textit{formally self-dual 
additive codes} of~\cite{HK08}. So far we have found no new or 
better asymmetric codes from these sources.
\end{rem}

Deleting the rows of $G_{C}$ in a more careful way than just doing so 
consecutively starting from the last row may yield new or better 
asymmetric quantum codes. The process, however, is more time consuming.

Consider the following instructive example taken from bordered 
4-circulant codes. Let 
\begin{equation*}
P:=\left\lbrace 1,2,4,5,8,10,12,13,14,15,16\right\rbrace \text{.}
\end{equation*}
Let $C$ be a bordered 4-circulant code of length $n=23$ with 
generator matrix $G_{C}$ in the form given in Equation (\ref{eq:bordered4circ}) 
with $e=\omega$ and with the circulant matrices $A,B$ generated by, respectively,
\begin{equation*}
\begin{aligned}
&(\overline{\omega},1,\overline{\omega},\omega,\omega,1,1,0,0,0,0)\text{, and} \\
&(0,\overline{\omega},\omega,\omega,\omega,\omega,\overline{\omega},1,0,1,\omega) \text{.}
\end{aligned}
\end{equation*}
Use the rows of $G_{C}$ indexed by the set $P$ as the rows of $G_{D}$, 
the generator matrix of a subcode $D$ of $C$. Using $D \subset C$, 
a $[[23,6,8/2]]_4$ asymmetric quantum code $Q$ can be constructed.

If we use the same code $C$ but $G_{D}$ is now $G_{C}$ with rows 3,6,7,9, and 11 
deleted, then, in a similar manner, we get a $[[23,2.5,8/4]]_4$ code $Q$.

\subsection{Construction from two proper subcodes}\label{subsec:proper}
In the previous two subsections, the larger code $C$ is an additive 
self-dual code while the subcode $D$ of $C$ is constructed by 
deleting rows of $G_{C}$. New or better 
asymmetric quantum codes can be constructed from two nested 
proper subcodes of an additive self-dual code. The following 
two examples illustrate this fact.

\begin{ex}\label{ex:9.1}
Let $C$ be a self-dual Type II additive code of length 22 with 
generating vector
\begin{equation*}
\v=(\omega,\omega,\omega,\omega,1,1,\overline{\omega},
\omega,0,\omega,\omega,0,\omega,\omega,1,1,0,\ldots,0)\text{.}
\end{equation*}
Let $G_{C}$ be the generator matrix of $C$ from the cyclic 
development of $\v$. Derive the generator matrices $G_{D}$ 
of $D$ and $G_{E}$ of $E$ by deleting, respectively, 
the last $10$ and $11$ rows of $G_{C}$. Applying Theorem~\ref{thm:3.5} 
on $E \subset D$ yields an asymmetric $[[22,0.5,10/2]]_4$-code $Q$.
\end{ex}

\begin{ex}\label{ex:9.2}
Let $C$ be a self-dual Type I additive code of length 25 labeled 
$C_{25,4}$ in~\cite{GK04} with generating vector
\begin{equation*}
\v=(1,1,\omega,0,1,\overline{\omega},1,0,\omega,1,1,0,0,\ldots,0)\text{.}
\end{equation*}
Let $G_{C}$ be the generator matrix of $C$ from the cyclic 
development of $\v$. Derive the generator matrices $G_{D}$ 
of $D$ and $G_{E}$ of $E$ by deleting, respectively, 
the last $5$ and $6$ rows of $G_{C}$. An asymmetric 
$[[25,0.5,9/4]]_4$-code $Q$ is hence constructed.
\end{ex}

\section{Conclusions and Open Problems}\label{sec:Conclusion}
In this paper, we establish a new method of deriving asymmetric 
quantum codes from additive, not necessarily linear, codes 
over the field $\F_{q}$ with $q$ an even power of a prime $p$.

Many asymmetric quantum codes over $\F_{4}$ are constructed.
These codes are different from those listed in prior works 
(see~\cite[Ch. 17]{Aly08} and~\cite{SRK09}) on asymmetric 
quantum codes.

There are several open directions to pursue. 
On $\F_{4}$-additive codes, exploring the 
notion of nestedness in tandem with the dual distance 
of the inner code is a natural continuation if 
we are to construct better asymmetric quantum codes. 
An immediate project is to understand such relation 
in the class of cyclic (not merely circulant) codes 
studied in~\cite{Huff07}.

Extension to codes over $\F_9$ or $\F_{16}$ is 
another option worth considering. More generally, 
establishing propagation rules may help us find better 
bounds on the parameters of asymmetric quantum codes.

\section*{Acknowledgment}\label{sec:Acknowledge}
The first author thanks Keqin Feng for fruitful discussions.
The authors thank Iliya Bouyukliev, Yuena Ma, and Ruihu Li 
for sharing their data on known Hermitian self-orthogonal $\F_{4}$-codes, 
and Somphong Jitman for helpful discussions.

\begin{IEEEbiographynophoto}{Martianus~Frederic~Ezerman} (S'10) 
	grew up in East Java Indonesia. He received his BA in philosophy and 
BSc in mathematics in 2005 and his MSc in mathematics in 2007, all 
from Ateneo de Manila University, Philippines. 

He is currently a PhD candidate under research scholarship at the 
Division of Mathematical Sciences, School of Physical and Mathematical 
Sciences, Nanyang Technological University, Singapore. His main research 
interest is coding theory, focusing on quantum error-correcting codes.
\end{IEEEbiographynophoto}

\begin{IEEEbiographynophoto}{San~Ling}
received the BA degree in mathematics from the University of Cambridge 
in 1985 and the PhD degree in mathematics from the University of California, 
Berkeley in 1990.

Since April 2005, he has been a Professor with the Division of Mathematical 
Sciences, School of Physical and Mathematical Sciences, in Nanyang 
Technological University, Singapore. Prior to that, he was with the 
Department of Mathematics, National University of Singapore. 

His research fields include arithmetic of modular curves and 
applications of number theory to combinatorial designs, coding theory, 
cryptography and sequences.
\end{IEEEbiographynophoto}

\begin{IEEEbiographynophoto}{Patrick~Sol{\'e}} (M'88) 
received the Ing\'enieur and Docteur-Ing{\'e}nieur degrees both from
Telecom ParisTech, Paris, France, in 1984 and 1987, respectively, and
the habilitation \`a diriger des recherches from Universit\'e de
Nice-Sophia Antipolis, Sophia Antipolis, France, in 1993.

He has held visiting positions in Syracuse University, Syracuse, NY,
from 1987 to 1989, Macquarie University, Sydney, Australia, from 1994
to 1996, and Lille University, Lille, France, from 1999 to 2000. From 
1989 to 2009, he has been a permanent member of the CNRS
Laboratory I3S, Sophia Antipolis, France, and from 2009 to present of
the CNRS Laboratory LTCI, Paris, France.

His research interests include coding theory (covering radius, codes
over rings, geometric codes), interconnection networks (graph spectra,
expanders), vector quantization (lattices), and cryptography (Boolean
functions). Dr. Sol{\'e} is the recipient (jointly with Hammons, Kumar,
Calderbank, and Sloane) of the IEEE Information Theory Society Best
Paper Award in 1994. He has served as an associate editor of the
Transactions from 1999 until 2003.
\end{IEEEbiographynophoto}
\end{document}